\newtheorem{theorem}{Theorem}
\theoremstyle{plain}
\newtheorem{definition}{Definition}
\newtheorem{lemma}{Lemma}
\newtheorem{notation}{Notation}
\newtheorem{proposition}{Proposition}
\newtheorem{remark}{Remark}
\numberwithin{equation}{section}
\numberwithin{theorem}{section}
\numberwithin{lemma}{section}
\numberwithin{proposition}{section}
\numberwithin{corollary}{section}
\let\pdfoutput=\undefined\fi
\begin{document}
\title[The $p$-Adic Dirac Equation]{$\wp$-Adic Quantum Mechanics, the Dirac Equation, and the violation of
Einstein causality}
\author[Z\'{u}\~{n}iga-Galindo]{W. A. Z\'{u}\~{n}iga-Galindo}
\address{University of Texas Rio Grande Valley\\
School of Mathematical \& Statistical Sciences\\
One West University Blvd\\
Brownsville, TX 78520, United States}
\email{wilson.zunigagalindo@utrgv.edu}
\thanks{The author was partially supported by the Debnath Endowed \ Professorship}
\subjclass{Primary: 81Q35, 81Q65. Secondary: 26E30}

\begin{abstract}
This article studies the breaking of the Lorentz symmetry at the Planck length
in quantum mechanics. We use three-dimensional $\wp$-adic vectors as position
variables, while the time remains a real number. In this setting, the Planck
length is $1/\wp$, where $\wp$ is a prime number, and the Lorentz symmetry is
naturally broken. In the framework of the Dirac-von Neumann formalism for
quantum mechanics, we introduce a new $\wp$-adic Dirac equation that predicts
the existence of particles and antiparticles and charge conjugation like the
standard one. The discreteness of the $\wp$-adic space imposes substantial
restrictions on the solutions of the new equation. This equation admits
localized solutions, which is impossible in the standard case. We show that an
isolated quantum system whose evolution is controlled by the $\wp$-adic Dirac
equation does not satisfy the Einstein causality, which means that the speed
of light is not\ the upper limit for the speed at which conventional matter or
energy can travel through space. The new $\wp$-adic Dirac equation is not
intended to replace the standard one; it should be understood as a new version
(or a limit) of the classical equation at the Planck length scale.

\end{abstract}
\keywords{$p$-adic numbers, quantum mechanics, Dirac equation, breaking of the Lorentz
symmetry, Einstein causality.}
\maketitle

\section{Introduction}

This article revolves around the breaking of the Lorentz symmetry, models of
quantum mechanics (QM) at the Planck scale, and the Volovich conjecture on the
$\wp$-adic nature of space-time at the Planck scale. The Lorentz symmetry is
one of the most essential symmetries of the quantum field theory. While the
validity of this symmetry continues to be verified with a high degree of
precision \cite{Kostelecky-Russell}, in the last thirty-five years, the
experimental and theoretical studies of the Lorentz breaking symmetry have
been an area of intense research, see, e.g., the reviews \cite{Mariz et
al}-\cite{Amelino-Camelia} and the references therein. The quantum-gravity
perspective is considered in \cite{Doplicher et al}-\cite{Garay}.\ When both
quantum mechanics and general relativity are considered, there should be a
Planck scale limitation to the localization of a space-time point. This
naturally leads one to consider discretized space-times.

In the Dirac-von Neumann formulation of QM, the states of a quantum system are
vectors of an abstract complex Hilbert space $\mathcal{H}$, and the
observables correspond to linear self-adjoint operators in $\mathcal{H}$,
\cite{Dirac}-\cite{Berezin et al}. A particular choice of space $\mathcal{H}$
goes beyond the mathematical formulation and belongs to the domain of physical
practice and intuition, \cite[Chap. 1, Sect. 1.1]{Berezin et al}. Let
$\Psi_{0}\in\mathcal{H}$ be the state at time $t=0$ of a specific quantum
system. Then, at any time $t$, the system is represented by the vector
$\Psi\left(  t\right)  =\boldsymbol{e}^{-it\boldsymbol{H}}\Psi_{0}$, $t\geq0,$
where $\boldsymbol{H}$\ is the observable associated with the total energy of
the system. It is crucial to mention that this description of the evolution of
quantum states requires real-time ($t\in\mathbb{R}$). Nowadays, we do not have
a convenient notion of unitary semigroup with $\wp$-adic time.

By $\wp$-adic quantum mechanics, where $\wp$ is a fixed prime number, we mean
QM with $\mathcal{H}=L^{2}(\mathbb{Q}_{\wp}^{3})$, where $\mathbb{Q}%
_{\mathbb{\wp}}$ is the field of $\wp$-adic numbers. Since the time and the
position are not interchangeable ($t\in\mathbb{R}$, $\boldsymbol{x}%
\in\mathbb{Q}_{\wp}^{3}$), such theory is not Lorentz-invariant. The geometry
of $\mathbb{Q}_{\wp}^{3}$ radically differs from that of $\mathbb{R}^{3}$. The
$\wp$-adic space $\mathbb{Q}_{\wp}^{3}$ is a completely disconnected
topological space, while $\mathbb{R}^{3}$ is an arcwise connected topological
space. Intuitively, $\mathbb{Q}_{\wp}^{3}$ has discrete geometry, while
$\mathbb{R}^{3}$ has a continuous one; see \cite{Zuniga-double-slit} for a
further discussion. In addition, there is no algebraic and topological
isomorphism between $\mathbb{Q}_{\wp}^{m}$ and $\mathbb{R}^{m}$, because as
topological fields $\mathbb{Q}_{\wp}$ is not isomorphic to $\mathbb{R}$,
\cite[Chapter I, Sections 3, 4]{We}. We propose the $\wp$-adic QM as a model
of QM at the Planck length.

The $\wp$-adic QM is a model of the standard QM in the space $\mathbb{Q}_{\wp
}^{3}$. This space is invariant under the transformations of the form
$\boldsymbol{x}\rightarrow\boldsymbol{a}+\boldsymbol{A}\boldsymbol{x}$, where
$\boldsymbol{A}\in GL_{3}\left(  \mathbb{Q}_{\wp}\right)  $. This group is the
substitute of the Poincar\'{e} group in the $\wp$-adic framework. In
particular, \ this group contains scale transformations of the type
$\boldsymbol{x}\rightarrow\boldsymbol{a}+\wp^{L}\boldsymbol{x}$, where
$\boldsymbol{a}\in\mathbb{Q}_{\wp}^{3}$, and $L\in\mathbb{Z}$. Given two
different points $\boldsymbol{x},\boldsymbol{y}\in\mathbb{Q}_{\wp}^{3}$, after
a suitable scale transformation, one may assume that $\boldsymbol{x=a}%
+\wp\boldsymbol{b}\neq\boldsymbol{0}$, with $\boldsymbol{a},\boldsymbol{b}$
$\in\left\{  0,1,\ldots,\wp-1\right\}  ^{3}$, $\boldsymbol{y=0}$. Then, the
distance between these points is $\left\Vert \boldsymbol{a}+\wp^{L}%
\boldsymbol{b}\right\Vert _{\wp}$, and this quantity can take two values, $1$
or $\wp^{-1}$, which means that the Planck length is exactly $\wp^{-1}$, see
also \cite[Section 5]{Zuniga-double-slit}. It is relevant to reemphasize that
the existence of a Planck length is invalid if we replace $\mathbb{Q}_{\wp
}^{3}$ with $\mathbb{R}^{3}$, and also that the $\wp$-adic QM is not a
replacement of the standard QM, but a model of the standard one at the Planck length.

In the last forty years $\wp$-adic quantum mechanics has been studied
intensively; see, e.g., \cite{Beltrameti et al}-\cite{Aniello et al}, among
many available references. There are several different types of $\wp$-adic QM.
In particular, the one in which the time is $\wp$-adic, e.g., \cite{V-V-QM3},
radically differs from the one considered here. If the time is $\wp$-adic, we
cannot use the classical theory of semigroups; then, a new theory for the
computation of quantum expectation values is required. Using real-time allows
us to do calculations using the standard axioms of QM, with the same physical
interpretations but in a discrete space.

In the 1930s Bronstein showed that the general relativity and quantum
mechanics imply that the uncertainty $\Delta x$ of any length measurement
satisfies
\begin{equation}
\Delta x\geq L_{\text{Planck}}, \label{Inequality}%
\end{equation}
where $L_{\text{Planck}}$ is the Planck length, \cite{Bronstein}. This
inequality establishes an absolute limitation on length measurements, so the
Planck length is the smallest possible distance that can, in principle, be
measured. Below the Planck scale there are no intervals just points. The
choice of $\mathbb{R}$ as a model the unidimensional space is not compatible
with inequality (\ref{Inequality}) because, $\mathbb{R}$ contains intervals
with arbitrarily small length. On the other hand, there are no intervals in
$\mathbb{Q}_{\wp}$, i.e., the non-trivial connected subsets are points. So
$\mathbb{Q}_{\wp}$ is the prototype of a discrete space with a very rich
mathematical structure. This idea is a reformulation of the Volovich
conjecture on the $\wp$-adic nature of the space at the Planck scale,
\cite{Volovich}. It is relevant to mention that another interpretation of
Bronstein's inequality drives to quantum gravity, \cite{Rovelli et al}.

Since 1925, all the parents of the QM have been aware of the need to abandon
the classical notion of continuous space-time in studying phenomena at the
quantum scale. The reader may consult \cite{Capellmann} for an in-depth
historical review. For instance, according to \cite{Capellmann}, Einstein and
Born believed that the traditional concept of space-time of macroscopic
physics cannot simply be transferred to quantum physics. But all the efforts
(especially from Born and Jordan) to construct physical theories with discrete
space-times failed.

The theoretical study of quantum models admitting Lorentz symmetry breaking is
relevant, \cite{Mariz et al}-\cite{Amelino-Camelia}. Here, we review some
ideas presented in \cite{Garay} directly related to this discussion. Based on
the inequality $E^{-1}\Delta E\gtrsim1$, where $\Delta E$ is the uncertainty
in energy measurement, where $\hbar=c=1$ so the Planck length is $\sqrt{G}$,
in \cite{Garay} is argued that the uncertainty in the energy of a particle is
more significant than its rest mass and this makes the concept of particle
unclear. The Planck length imposes a resolution limit for relativistic quantum
mechanics: localizing a particle with better accuracy than its Compton
wavelength is impossible. These features are reflected in the difficulty of
defining a position operator (in the sense of Newton and Wigner). This
operator, whose eigenvalues give the positions of a certain particle, has the
following property. If a particle is localized in a certain region at a
certain time, then at any arbitrarily close instant of time, there is a
non-zero probability of finding it anywhere; therefore, the particle would
travel faster than light. Finally, the author argues, without giving
additional details, that this paradox disappears when, instead of dealing with
a single particle, one allows the possibility of particle creation and annihilation.

The simplest quantum model of the phenomena\ just described is the $\wp$-adic
Dirac equation. This article introduces a such equation which shares many
properties with the standard one. In particular, the new equation also
predicts the existence of pairs of particles and antiparticles and a charge
conjugation symmetry. The $\wp$-adic Dirac spinors depend on the standard
Pauli-Dirac matrices. The new equation is a version of the standard Dirac
equation at the Planck scale, where the breaking of the Lorentz symmetry
naturally occurs. In this framework, the Einstein causality is not valid.

The derivation of the $\wp$-adic Dirac equation is based on the fact that the
plane wave solutions of the standard Dirac equation have natural analogs when
one considers the position and momenta as elements of a metric, locally
compact topological group; the construction of these analogs does not require
Lorentz invariance, just a version of the relativistic energy formula, with
$c=1$. This last normalization, or a similar one, is essential because, in the
new theory, the speed of light is not the upper bound for the speed at which
conventional matter or energy can travel through space. We warn the reader
that our relativistic energy formula with $c=1$ is not a substitute for the
classical one. It provides the proper geometric restriction for the existence
of $\wp$-adic plane waves. We do not discuss the Planck units in the
space-time $\mathbb{R}\times\mathbb{Q}_{\wp}^{N}$; instead, we use `natural
units' to simplify the discussion of our model.

Our $\wp$-adic Dirac equation has the form%
\begin{equation}
i\frac{\partial}{\partial t}\Psi\left(  t,\boldsymbol{x}\right)  =\left(
\boldsymbol{\alpha}\cdot\nabla_{\wp}+\beta m\right)  \Psi\left(
t,\boldsymbol{x}\right)  ,\text{ }t\in\mathbb{R}\text{, }\boldsymbol{x}%
\in\mathbb{Q}_{\wp}^{3}, \label{Dirac_1}%
\end{equation}
where $\boldsymbol{\alpha}$, $\cdot$, $m$ has the standard meaning,
\[
\Psi^{T}\left(  t,\boldsymbol{x}\right)  =\left[
\begin{array}
[c]{cccc}%
\Psi_{1}\left(  t,\boldsymbol{x}\right)  & \Psi_{2}\left(  t,\boldsymbol{x}%
\right)  & \Psi_{3}\left(  t,\boldsymbol{x}\right)  & \Psi_{4}\left(
t,\boldsymbol{x}\right)
\end{array}
\right]  \in\mathbb{C}^{4},
\]
$\nabla_{\wp}^{T}=\left[
\begin{array}
[c]{ccc}%
\boldsymbol{D}_{x_{1}} & \boldsymbol{D}_{x_{2}} & \boldsymbol{D}_{x_{3}}%
\end{array}
\right]  $,\ where $\boldsymbol{D}_{x_{k}}$ denotes the Taibleson-Vladimirov
operator. We start with an ansatz that describes the $\wp$-adic counterparts
of the classical plane waves in terms of the classical Pauli-Dirac matrices
and a version of the relativistic energy formula, with $c=1$. Using these
particular solutions, we derive a new $\wp$-adic Dirac equation. It turns out
that the $\wp$-adic Dirac equation shares many properties with the usual one.
Indeed, we use several results from \cite[Chapter 1]{Thaller}; our notation
follows closely the one used in this reference.

The geometry of space $\mathbb{Q}_{\wp}^{3}$ imposes substantial restrictions
on the solutions of (\ref{Dirac_1}). The $\wp$-adic Dirac equation admits
space-localized planes waves $\Psi_{\boldsymbol{rnj}}\left(  t,\boldsymbol{x}%
\right)  $ for any time $t\geq0$, which is, $\mathrm{supp}$ $\Psi
_{\boldsymbol{rnj}}\left(  t,\boldsymbol{\cdot}\right)  $ is contained in a
compact subset of $\mathbb{Q}_{\wp}^{3}$; see Theorem \ref{Theorem3}. This
phenomenon does not occur in the standard case; see, e.g., \cite[Section 1.8,
Corollary 1.7]{Thaller}. On the other hand, we compute the transition
probability from a localized state at time $t=0$ to another localized state at
$t>0$, assuming that the space supports of the states are arbitrarily far
away. It turns out that this transition probability is greater than zero for
any time $t\in\left(  0,\epsilon\right)  $, for arbitrarily small $\epsilon$;
see Theorem \ref{Theorem5}. Since this probability is nonzero for some
arbitrarily small $t$, the system has a nonzero probability of getting between
the mentioned localized states arbitrarily shortly, thereby propagating with
superluminal speed.

The concept of discrete space-time is not uniform in the literature, and thus,
many different QM over discrete spaces occur; see, e.g., \cite{Stovicek}-
\cite{Gudder}. For instance, in \cite[Section 3]{Gudder}, the author considers
discrete the space-time\ of type $\mathbb{Z}^{m}\subset\mathbb{R}^{m}$, i.e.,
the space-time is a lattice of the standard Euclidean space. This approach is
not convenient here because the discreteness of $\mathbb{Z}^{m}$ is a relative
property in $\mathbb{R}^{m}$; this choice does not change the Poincar\'{e}
group of $\mathbb{R}^{m}$, and the discrete space is not invariant under all
these symmetries. Finally, the discussion about the Planck length requires a
mathematical framework where the notion of length measurements using rational
numbers may be formulated. For all these reasons, the approach considered in
\cite{Stovicek}- \cite{Gudder} is not useful here.

The article is organized as follows. Section \ref{Section_1} gives a quick
review of the results of $\wp$-the analysis required here. In Section
\ref{Section_2}, we derive the $\wp$-adic Dirac equation. The\ Section
\ref{Section_3} is dedicated to studying the spectrum of the $\wp$-adic Dirac
operator. The charge conjugation is studied in Section \ref{Section_4}, while
Section \ref{Section_5} is dedicated to showing the existence of localized
particles and antiparticles. The semigroup attached to the\ free Dirac
operator is studied in Section \ref{Section_6}. The position and momenta
operators are considered in Section \ref{Section_7}, and Section
\ref{Section_8} is dedicated to the violation of Einstein causality. In the
last section, we give some final comments and conclusions.

\section{\label{Section_1}Basic facts on $\wp$-adic analysis}

In this section we fix the notation and collect some basic results on $\wp
$-adic analysis that we will use through the article. For a detailed
exposition on $\wp$-adic analysis the reader may consult \cite{V-V-Z},
\cite{Alberio et al}-\cite{Taibleson}.

\subsection{The field of $\wp$-adic numbers}

Along this article $\wp$ denotes a prime number. The field of $\wp-$adic
numbers $\mathbb{Q}_{\wp}$ is defined as the completion of the field of
rational numbers $\mathbb{Q}$ with respect to the $\wp-$adic norm
$|\cdot|_{\wp}$, which is defined as
\[
|x|_{\wp}=%
\begin{cases}
0 & \text{if }x=0\\
\wp^{-\gamma} & \text{if }x=\wp^{\gamma}\dfrac{a}{b},
\end{cases}
\]
where $a$ and $b$ are integers coprime with $\wp$. The integer $\gamma
=ord_{\wp}(x):=ord(x)$, with $ord(0):=+\infty$, is called the\textit{\ }$\wp
-$\textit{adic order of} $x$. We extend the $\wp-$adic norm to $\mathbb{Q}%
_{\wp}^{N}$ by taking%
\[
||\boldsymbol{x}||_{\wp}:=\max_{1\leq i\leq N}|x_{i}|_{\wp},\qquad\text{for
}\boldsymbol{x}=(x_{1},\dots,x_{N})\in\mathbb{Q}_{\wp}^{N}.
\]
By defining $ord(\boldsymbol{x})=\min_{1\leq i\leq N}\{ord(x_{i})\}$, we have
$||\boldsymbol{x}||_{\wp}=\wp^{-ord(\boldsymbol{x})}$.\ The metric space
$\left(  \mathbb{Q}_{\wp}^{N},||\cdot||_{\wp}\right)  $ is a complete
ultrametric space. As a topological space $\mathbb{Q}_{\wp}$\ is homeomorphic
to a Cantor-like subset of the real line; see, e.g., \cite{V-V-Z},
\cite{Alberio et al}.

Any $\wp-$adic number $x\neq0$ has a unique expansion of the form
\begin{equation}
x=\wp^{ord(x)}\sum_{j=0}^{\infty}x_{j}\wp^{j}, \label{p-adic-exapansion}%
\end{equation}
where $x_{j}\in\{0,1,\dots,\wp-1\}$ and $x_{0}\neq0$. In addition, any
$\boldsymbol{x}\in\mathbb{Q}_{\wp}^{N}\smallsetminus\left\{  0\right\}  $ can
be represented uniquely as $\boldsymbol{x}=\wp^{ord(\boldsymbol{x}%
)}\boldsymbol{v}$, where $\left\Vert \boldsymbol{v}\right\Vert _{\wp}=1$.

\subsection{Topology of $\mathbb{Q}_{\wp}^{N}$}

For $r\in\mathbb{Z}$, denote by $B_{r}^{N}(\boldsymbol{a})=\{\boldsymbol{x}%
\in\mathbb{Q}_{\wp}^{N};||\boldsymbol{x}-\boldsymbol{a}||_{\wp}\leq\wp^{r}\}$
the ball of radius $\wp^{r}$ with center at $\boldsymbol{a}=(a_{1},\dots
,a_{N})\in\mathbb{Q}_{\wp}^{N}$, and take $B_{r}^{N}(\boldsymbol{0}%
):=B_{r}^{N}$. Note that $B_{r}^{N}(\boldsymbol{a})=B_{r}(a_{1})\times
\cdots\times B_{r}(a_{N})$, where $B_{r}(a_{i}):=\{x\in\mathbb{Q}_{\wp}%
;|x_{i}-a_{i}|_{\wp}\leq\wp^{r}\}$ is the one-dimensional ball of radius
$\wp^{r}$ with center at $a_{i}\in\mathbb{Q}_{\wp}$. The ball $B_{0}^{N}$
equals the product of $N$ copies of $B_{0}=\mathbb{Z}_{\wp}$, the ring of
$\wp-$adic integers. A polydisc is a set of the form%
\[
B_{r_{1}}(a_{1})\times\cdots\times B_{r_{N}}(a_{N}).
\]
We denote by $S_{r}^{N}(\boldsymbol{a})=\{\boldsymbol{x}\in\mathbb{Q}_{\wp
}^{N};||\boldsymbol{x}-\boldsymbol{a|}|_{\wp}=\wp^{r}\}$ the sphere of radius
$\wp^{r}$ with center at $\boldsymbol{a}=(a_{1},\dots,a_{N})\in\mathbb{Q}%
_{\wp}^{N}$, and take $S_{r}^{N}(\boldsymbol{0}):=S_{r}^{N}$. We notice that
$S_{0}^{1}=\mathbb{Z}_{\wp}^{\times}$ (the group of units of $\mathbb{Z}_{\wp
}$), but $\left(  \mathbb{Z}_{\wp}^{\times}\right)  ^{N}\subsetneq S_{0}^{N}$.
The balls and spheres are both open and closed subsets in $\mathbb{Q}_{\wp
}^{N}$. In addition, two balls in $\mathbb{Q}_{\wp}^{N}$ are either disjoint
or one is contained in the other.

As a topological space $\left(  \mathbb{Q}_{\wp}^{N},||\cdot||_{\wp}\right)  $
is totally disconnected, i.e., the only connected \ subsets of $\mathbb{Q}%
_{\wp}^{N}$ are the empty set and the points. A subset of $\mathbb{Q}_{\wp
}^{N}$ is compact if and only if it is closed and bounded in $\mathbb{Q}_{\wp
}^{N}$; see, e.g., \cite[Section 1.3]{V-V-Z}, or \cite[Section 1.8]{Alberio et
al}. The balls and spheres are compact subsets. Thus $\left(  \mathbb{Q}_{\wp
}^{N},||\cdot||_{\wp}\right)  $ is a locally compact topological space.

\begin{notation}
We will use $\Omega\left(  \wp^{-r}||\boldsymbol{x}-\boldsymbol{a}||_{\wp
}\right)  $ to denote the characteristic function of the ball $B_{r}%
^{N}(\boldsymbol{a})=\boldsymbol{a}+\wp^{-r}\mathbb{Z}_{\wp}^{N}$, where
\[
\mathbb{Z}_{\wp}^{N}=\left\{  \boldsymbol{x}\in\mathbb{Q}_{\wp}^{N};\left\Vert
\boldsymbol{x}\right\Vert _{\wp}\leq1\right\}
\]
is the $N$-dimensional unit ball. For more general sets, we will use the
notation $1_{A}$ for the characteristic function of set $A$.
\end{notation}

\subsection{The Haar measure}

Since $(\mathbb{Q}_{\wp}^{N},+)$ is a locally compact topological group, there
exists a Haar measure $d^{N}\boldsymbol{x}$, which is invariant under
translations, i.e., $d^{N}(\boldsymbol{x}+\boldsymbol{a})=d^{N}\boldsymbol{x}%
$, \cite{Halmos}. If we normalize this measure by the condition $\int
_{\mathbb{Z}_{\wp}^{N}}d^{N}\boldsymbol{x}=1$, then $d^{N}\boldsymbol{x}$ is unique.

\subsection{The Bruhat-Schwartz space}

A complex-valued function $\varphi$ defined on $\mathbb{Q}_{\wp}^{N}$ is
called locally constant if for any $\boldsymbol{x}\in\mathbb{Q}_{\wp}^{N}$
there exist an integer $l(\boldsymbol{x})\in\mathbb{Z}$ such that%
\begin{equation}
\varphi(\boldsymbol{x}+\boldsymbol{x}^{\prime})=\varphi(\boldsymbol{x})\text{
for any }\boldsymbol{x}^{\prime}\in B_{l(\boldsymbol{x})}^{N}.
\label{local_constancy}%
\end{equation}
A function $\varphi:\mathbb{Q}_{\wp}^{N}\rightarrow\mathbb{C}$ is called a
Bruhat-Schwartz function (or a test function) if it is locally constant with
compact support. Any test function can be represented as a linear combination,
with complex coefficients, of characteristic functions of balls. The
$\mathbb{C}$-vector space of Bruhat-Schwartz functions is denoted by
$\mathcal{D}(\mathbb{Q}_{\wp}^{N})$. For $\varphi\in\mathcal{D}(\mathbb{Q}%
_{\wp}^{N})$, the largest number $l=l(\varphi)$ satisfying
(\ref{local_constancy}) is called the exponent of local constancy (or the
parameter of constancy) of $\varphi$.

\subsection{$L^{\rho}$ spaces}

Given $\rho\in\lbrack1,\infty)$, we denote by$L^{\rho}\left(
%TCIMACRO{\U{211a} }%
%BeginExpansion
\mathbb{Q}
%EndExpansion
_{\wp}^{N}\right)  :=L^{\rho}\left(
%TCIMACRO{\U{211a} }%
%BeginExpansion
\mathbb{Q}
%EndExpansion
_{\wp}^{N},d^{N}\boldsymbol{x}\right)  ,$ the $\mathbb{C}-$vector space of all
the complex valued functions $g$ satisfying
\[
\left\Vert g\right\Vert _{\rho}=\left(  \text{ }%
%TCIMACRO{\dint \limits_{\mathbb{Q}_{\wp}^{N}}}%
%BeginExpansion
{\displaystyle\int\limits_{\mathbb{Q}_{\wp}^{N}}}
%EndExpansion
\left\vert g\left(  \boldsymbol{x}\right)  \right\vert ^{\rho}d^{N}%
\boldsymbol{x}\right)  ^{\frac{1}{\rho}}<\infty,
\]
where $d^{N}\boldsymbol{x}$ is the normalized Haar measure on $\left(
\mathbb{Q}_{\wp}^{N},+\right)  $.

If $U$ is an open subset of $\mathbb{Q}_{\wp}^{N}$, $\mathcal{D}(U)$ denotes
the $\mathbb{C}$-vector space of test functions with supports contained in
$U$, then $\mathcal{D}(U)$ is dense in
\[
L^{\rho}\left(  U\right)  =\left\{  \varphi:U\rightarrow\mathbb{C};\left\Vert
\varphi\right\Vert _{\rho}=\left\{
%TCIMACRO{\dint \limits_{U}}%
%BeginExpansion
{\displaystyle\int\limits_{U}}
%EndExpansion
\left\vert \varphi\left(  \boldsymbol{x}\right)  \right\vert ^{\rho}%
d^{N}\boldsymbol{x}\right\}  ^{\frac{1}{\rho}}<\infty\right\}  ,
\]
for $1\leq\rho<\infty$; see, e.g., \cite[Section 4.3]{Alberio et al}.

\subsection{The Fourier transform}

By using expansion (\ref{p-adic-exapansion}), we define the fractional part
of\textit{ }$x\in\mathbb{Q}_{p}$, denoted $\{x\}_{p}$, as the rational number
\[
\left\{  x\right\}  _{\wp}=\left\{
\begin{array}
[c]{lll}%
0 & \text{if} & x=0\text{ or }ord(x)\geq0\\
&  & \\
\wp^{ord(x)}\sum_{j=0}^{-ord(x)-1}x_{j}\wp^{j} & \text{if} & ord(x)<0.
\end{array}
\right.
\]
We now set $\chi_{\wp}(y):=\exp(2\pi i\{y\}_{\wp})$ for $y\in\mathbb{Q}_{\wp}%
$. The map $\chi_{\wp}(\cdot)$ is an additive character on $\mathbb{Q}_{\wp}$,
i.e., a continuous map from $\left(  \mathbb{Q}_{\wp},+\right)  $ into $S$
(the unit circle considered as multiplicative group) satisfying $\chi_{\wp
}(x_{0}+x_{1})=\chi_{\wp}(x_{0})\chi_{p}(x_{1})$, $x_{0},x_{1}\in
\mathbb{Q}_{\wp}$. \ The additive characters of $\mathbb{Q}_{\wp}$ form an
Abelian group which is isomorphic to $\left(  \mathbb{Q}_{\wp},+\right)  $.
The isomorphism is given by $\xi\rightarrow\chi_{\wp}(\xi x)$; see, e.g.,
\cite[Section 2.3]{Alberio et al}.

Set $\boldsymbol{p}\cdot\boldsymbol{x}:=\sum_{j=1}^{N}p_{j}x_{j}$, for
$\boldsymbol{p}=(p_{1},\dots,p_{N})$, $\boldsymbol{x}=(x_{1},\dots
,x_{N})\allowbreak\in\mathbb{Q}_{\wp}^{N}$. The Fourier transform of
$\varphi\in\mathcal{D}(\mathbb{Q}_{\wp}^{N})$ is defined as
\[
\mathcal{F}\varphi(\xi)=%
%TCIMACRO{\dint \limits_{\mathbb{Q}_{\wp}^{N}}}%
%BeginExpansion
{\displaystyle\int\limits_{\mathbb{Q}_{\wp}^{N}}}
%EndExpansion
\chi_{\wp}(\boldsymbol{p}\cdot\boldsymbol{x})\varphi(\boldsymbol{x}%
)d^{N}\boldsymbol{x}\quad\text{for }\boldsymbol{p}\in\mathbb{Q}_{\wp}^{N},
\]
where $d^{N}\boldsymbol{x}$ is the normalized Haar measure on $\mathbb{Q}%
_{\wp}^{N}$. The Fourier transform is a linear isomorphism from $\mathcal{D}%
(\mathbb{Q}_{\wp}^{N})$ onto itself satisfying
\begin{equation}
(\mathcal{F}(\mathcal{F}\varphi))(\boldsymbol{x})=\varphi(-\boldsymbol{x});
\label{Eq_FFT}%
\end{equation}
see, e.g., \cite[Section 4.8]{Alberio et al}. We also use the notation
$\mathcal{F}_{\boldsymbol{x}\rightarrow\boldsymbol{p}}\varphi$ and
$\widehat{\varphi}$\ for the Fourier transform of $\varphi$.

The Fourier transform extends to $L^{2}$. If $f\in L^{2}\left(  \mathbb{Q}%
_{\wp}^{N}\right)  $, its Fourier transform is defined as
\[
(\mathcal{F}f)(\boldsymbol{p})=\lim_{k\rightarrow\infty}%
%TCIMACRO{\dint \limits_{||x||_{\wp}\leq\wp^{k}}}%
%BeginExpansion
{\displaystyle\int\limits_{||x||_{\wp}\leq\wp^{k}}}
%EndExpansion
\chi_{\wp}(\boldsymbol{p}\cdot\boldsymbol{x})f(x)d^{N}\boldsymbol{x}%
,\quad\text{for }\boldsymbol{p}\in%
%TCIMACRO{\U{211a} }%
%BeginExpansion
\mathbb{Q}
%EndExpansion
_{\wp}^{N},
\]
where the limit is taken in $L^{2}\left(  \mathbb{Q}_{\wp}^{N}\right)  $. We
recall that the Fourier transform is unitary on $L^{2}\left(  \mathbb{Q}_{\wp
}^{N}\right)  $, i.e. $||f||_{2}=||\mathcal{F}f||_{2}$ for $f\in L^{2}\left(
\mathbb{Q}_{\wp}^{N}\right)  $ and that (\ref{Eq_FFT}) is also valid in
$L^{2}\left(  \mathbb{Q}_{\wp}^{N}\right)  $; see, e.g., \cite[Chapter III,
Section 2]{Taibleson}.

\subsection{Distributions}

The $\mathbb{C}$-vector space $\mathcal{D}^{\prime}\left(  \mathbb{Q}_{\wp
}^{N}\right)  $ of all continuous linear functionals on $\mathcal{D}%
(\mathbb{Q}_{\wp}^{N})$ is called the Bruhat-Schwartz space of distributions.
Every linear functional on $\mathcal{D}(\mathbb{Q}_{\wp}^{N})$ is continuous,
i.e. $\mathcal{D}^{\prime}\left(  \mathbb{Q}_{\wp}^{N}\right)  $\ agrees with
the algebraic dual of $\mathcal{D}(\mathbb{Q}_{\wp}^{N})$; see, e.g.,
\cite[Chapter 1, VI.3, Lemma]{V-V-Z}.

We endow $\mathcal{D}^{\prime}\left(  \mathbb{Q}_{\wp}^{N}\right)  $ with the
weak topology, i.e. a sequence $\left\{  T_{j}\right\}  _{j\in\mathbb{N}}$ in
$\mathcal{D}^{\prime}\left(  \mathbb{Q}_{\wp}^{N}\right)  $ converges to $T$
if $\lim_{j\rightarrow\infty}T_{j}\left(  \varphi\right)  =T\left(
\varphi\right)  $ for any $\varphi\in\mathcal{D}(\mathbb{Q}_{\wp}^{N})$. The
map
\[%
\begin{array}
[c]{lll}%
\mathcal{D}^{\prime}\left(  \mathbb{Q}_{\wp}^{N}\right)  \times\mathcal{D}%
(\mathbb{Q}_{\wp}^{N}) & \rightarrow & \mathbb{C}\\
\left(  T,\varphi\right)  & \rightarrow & T\left(  \varphi\right)
\end{array}
\]
is a bilinear form which is continuous in $T$ and $\varphi$ separately. We
call this map the pairing between $\mathcal{D}^{\prime}\left(  \mathbb{Q}%
_{\wp}^{N}\right)  $ and $\mathcal{D}(\mathbb{Q}_{\wp}^{N})$. From now on we
will use $\left(  T,\varphi\right)  $ instead of $T\left(  \varphi\right)  $.

Every $f$\ in $L_{loc}^{1}$ defines a distribution $f\in\mathcal{D}^{\prime
}\left(  \mathbb{Q}_{\wp}^{N}\right)  $ by the formula
\[
\left(  f,\varphi\right)  =%
%TCIMACRO{\dint \limits_{\mathbb{Q}_{\wp}^{N}}}%
%BeginExpansion
{\displaystyle\int\limits_{\mathbb{Q}_{\wp}^{N}}}
%EndExpansion
f\left(  \boldsymbol{x}\right)  \varphi\left(  \boldsymbol{x}\right)
d^{N}\boldsymbol{x}.
\]

\subsection{The Fourier transform of a distribution}

The Fourier transform $\mathcal{F}\left[  T\right]  $ of a distribution
$T\in\mathcal{D}^{\prime}\left(  \mathbb{Q}_{\wp}^{N}\right)  $ is defined by%
\[
\left(  \mathcal{F}\left[  T\right]  ,\varphi\right)  =\left(  T,\mathcal{F}%
\left[  \varphi\right]  \right)  \text{ for all }\varphi\in\mathcal{D}\left(
\mathbb{Q}_{\wp}^{N}\right)  \text{.}%
\]
The Fourier transform $T\rightarrow\mathcal{F}\left[  T\right]  $ is a linear
and continuous isomorphism from $\mathcal{D}^{\prime}\left(  \mathbb{Q}_{\wp
}^{N}\right)  $\ onto $\mathcal{D}^{\prime}\left(  \mathbb{Q}_{\wp}%
^{N}\right)  $. Furthermore, $T\left(  \boldsymbol{x}\right)  =\mathcal{F}%
\left[  \mathcal{F}\left[  T\right]  \left(  -\boldsymbol{x}\right)  \right]
$.

\subsection{The Taibleson-Vladimirov operator}

We denote by $\boldsymbol{D}_{z}$ the Taibleson-Vladimirov derivative, where
$z\in\mathbb{Q}_{\wp}$, which is defined as%
\[
\left(  \boldsymbol{D}_{z}\varphi\right)  \left(  z\right)  =\frac{1-\wp
}{1-\wp^{-2}}%
%TCIMACRO{\dint \nolimits_{\mathbb{Q}_{\wp}}}%
%BeginExpansion
{\displaystyle\int\nolimits_{\mathbb{Q}_{\wp}}}
%EndExpansion
\frac{\varphi\left(  z-y\right)  -\varphi\left(  z\right)  }{\left\vert
y\right\vert _{\wp}^{2}}\text{, for }\varphi\in\mathcal{D}\left(
\mathbb{Q}_{\wp}\right)  \text{.}%
\]
$\boldsymbol{D}_{z}$ is an unbounded operator with a dense domain in
$L^{2}\left(  \mathbb{Q}_{\wp}\right)  $.

We denote by $C(\mathbb{Q}_{\wp},\mathbb{C})$ the $\mathbb{C}$-vector space of
\ continuous $\mathbb{C}$-valued functions defined on $\mathbb{Q}_{\wp}$. We
use the symbol $\Omega\left(  t\right)  $ to denote the characteristic
function of the interval $\left[  0,1\right]  $.

The Taibleson-Vladimirov derivative $\boldsymbol{D}_{x_{i}}$ is a
pseudo-differential operator of the form%
\begin{equation}%
\begin{array}
[c]{cccc}%
\boldsymbol{D}_{x_{i}}: & \mathcal{D}\left(  \mathbb{Q}_{\wp}\right)  &
\rightarrow & C(\mathbb{Q}_{\wp},\mathbb{C})\cap L^{2}\left(  \mathbb{Q}_{\wp
}\right) \\
&  &  & \\
& \varphi & \rightarrow & \left(  \boldsymbol{D}_{x_{i}}\varphi\right)
\left(  x_{i}\right)  =\mathcal{F}_{p_{i}\rightarrow x_{i}}^{-1}\left\{
\left\vert p_{i}\right\vert _{p}\mathcal{F}_{x_{i}\rightarrow p_{i}}%
\varphi\right\}  ;
\end{array}
\label{pseudodifferetial_op}%
\end{equation}
see, e.g., \cite[Chapter 2, Section IX]{V-V-Z}, \cite[Section 2.2]{Kochubei}.

The set of functions $\left\{  \psi_{rnj}\right\}  $ defined as%
\begin{equation}
\psi_{rnj}\left(  x_{i}\right)  =\wp^{\frac{-r}{2}}\chi_{\wp}\left(  \wp
^{-1}j\left(  \wp^{r}x_{i}-n\right)  \right)  \Omega\left(  \left\vert \wp
^{r}x_{i}-n\right\vert _{p}\right)  , \label{Eq_8}%
\end{equation}
where $r\in\mathbb{Z}$, $j\in\left\{  1,\cdots,\wp-1\right\}  $, and $n$ runs
through a fixed set of representatives of $\mathbb{Q}_{\wp}/\mathbb{Z}_{\wp}$,
is an orthonormal basis of $L^{2}(\mathbb{Q}_{\wp})$ consisting of
eigenvectors of operator $\boldsymbol{D}_{x_{i}}$:%
\begin{equation}
\boldsymbol{D}_{x_{i}}\psi_{rnj}\left(  x_{i}\right)  =\wp^{1-r}\text{ }%
\psi_{rnj}\left(  x_{i}\right)  \text{ for any }r\text{, }n\text{, }j\text{;}
\label{Eq_9}%
\end{equation}
see, e.g., \cite[Theorem 3.29]{KKZuniga}, \cite[Theorem 9.4.2]{Alberio et
al}.\ Notice that $\psi_{rnj}\left(  x_{i}\right)  $ \ is supported on the
ball%
\begin{equation}
B_{r}\left(  \wp^{-r}n\right)  =\wp^{-r}n+\wp^{-r}\mathbb{Z}_{\wp}=\left\{
z\in\mathbb{Q}_{\wp};\left\vert z-\wp^{-r}n\right\vert \leq\wp^{r}\right\}  .
\label{support_ball}%
\end{equation}

\section{\label{Section_2}$\wp$-adic Pseudo-differential Equations of Dirac
type}

\subsection{A $\wp$-adic version of the Dirac equation}

We denote the Pauli matrices as%
\[
\sigma_{1}=\left[
\begin{array}
[c]{cc}%
0 & 1\\
1 & 0
\end{array}
\right]  ,\text{ \ }\sigma_{2}=\left[
\begin{array}
[c]{cc}%
0 & -i\\
i & 0
\end{array}
\right]  ,\text{ \ }\sigma_{3}=\left[
\begin{array}
[c]{cc}%
1 & 0\\
0 & -1
\end{array}
\right]  ,\text{ \ }%
\]
where $i=\sqrt{-1}\in\mathbb{C}$ , and the $4\times4$ Dirac matrices
\[
\beta=\left[
\begin{array}
[c]{cc}%
\boldsymbol{1} & \boldsymbol{0}\\
\boldsymbol{0} & -\boldsymbol{1}%
\end{array}
\right]  ,\text{ \ \ }\alpha_{k}=\left[
\begin{array}
[c]{cc}%
\boldsymbol{0} & \sigma_{k}\\
\sigma_{k} & \boldsymbol{0}%
\end{array}
\right]  \text{, for }k=1,2,3\text{,}%
\]
where $\boldsymbol{1}$ denotes the $2\times2$ identity matrix, and
$\boldsymbol{0}$ denotes the $2\times2$ zero matrix.

We set%
\[
\nabla_{\wp}:=\left[
\begin{array}
[c]{c}%
\boldsymbol{D}_{x_{1}}\\
\boldsymbol{D}_{x_{2}}\\
\boldsymbol{D}_{x_{3}}%
\end{array}
\right]  ,
\]
and%
\[
\boldsymbol{\alpha}\cdot\nabla_{\wp}:=%
%TCIMACRO{\dsum \limits_{k=1}^{3}}%
%BeginExpansion
{\displaystyle\sum\limits_{k=1}^{3}}
%EndExpansion
\alpha_{k}\boldsymbol{D}_{x_{k}}\text{, and }\boldsymbol{\sigma}\cdot
\nabla_{\wp}:=%
%TCIMACRO{\dsum \limits_{k=1}^{3}}%
%BeginExpansion
{\displaystyle\sum\limits_{k=1}^{3}}
%EndExpansion
\sigma_{k}\boldsymbol{D}_{x_{k}}\text{.}%
\]
We suppress Einstein's convention because we need just a version of the
`relativistic energy formula.'

We define the free Hamiltonian as the operator%
\begin{equation}
\boldsymbol{H}_{0}:=\boldsymbol{\alpha}\cdot\nabla_{\wp}+\beta m=\left[
\begin{array}
[c]{cc}%
m\boldsymbol{1} & \boldsymbol{\sigma}\cdot\nabla_{\wp}\\
\boldsymbol{\sigma}\cdot\nabla_{\wp} & -m\boldsymbol{1}%
\end{array}
\right]  . \label{Hamiltonian}%
\end{equation}
We assume that the constant $m\in\mathbb{R}$ (the mass) has a similar meaning
as in the standard QM.

The matrix-valued operator $\boldsymbol{H}_{0}$ acts on\ functions
\[
\phi\left(  \boldsymbol{x}\right)  =\left[
\begin{array}
[c]{c}%
\phi_{1}\left(  \boldsymbol{x}\right) \\
\vdots\\
\phi_{4}\left(  \boldsymbol{x}\right)
\end{array}
\right]  \in\mathcal{D}\left(  \mathbb{Q}_{\wp}\right)
%TCIMACRO{\tbigoplus }%
%BeginExpansion
{\textstyle\bigoplus}
%EndExpansion
\mathcal{D}\left(  \mathbb{Q}_{\wp}\right)
%TCIMACRO{\tbigoplus }%
%BeginExpansion
{\textstyle\bigoplus}
%EndExpansion
\mathcal{D}\left(  \mathbb{Q}_{\wp}\right)
%TCIMACRO{\tbigoplus }%
%BeginExpansion
{\textstyle\bigoplus}
%EndExpansion
\mathcal{D}\left(  \mathbb{Q}_{\wp}\right)  :=\mathcal{D}\left(
\mathbb{Q}_{\wp}\right)
%TCIMACRO{\tbigotimes }%
%BeginExpansion
{\textstyle\bigotimes}
%EndExpansion
\mathbb{C}^{4}.
\]
We denote by $\Psi\left(  t,\boldsymbol{x}\right)  $ a vector-valued
wavefunction, where $t\in\mathbb{R}$, $\boldsymbol{x=}\left(  x_{1}%
,x_{2},x_{3}\right)  \in\mathbb{Q}_{\wp}^{3}$. defined as
\[
\Psi\left(  t,\boldsymbol{x}\right)  =\left[
\begin{array}
[c]{c}%
\Psi_{1}\left(  t,\boldsymbol{x}\right) \\
\vdots\\
\Psi_{4}\left(  t,\boldsymbol{x}\right)
\end{array}
\right]  \in\mathbb{C}^{4}.
\]
Our $\wp$-adic version of the Dirac equation has the form%
\begin{equation}
i\frac{\partial}{\partial t}\Psi\left(  t,\boldsymbol{x}\right)
=\boldsymbol{H}_{0}\Psi\left(  t,\boldsymbol{x}\right)  . \label{Dirac_Eq_1}%
\end{equation}
This version of the Dirac equation can be derived using the original Dirac's
argument. Starting with the relativistic energy formula (with $c=1$)%
\[
E^{2}=\left(  p_{x}^{2}+p_{y}^{2}+p_{z}^{2}\right)  +m^{2}=:p_{\mathbb{R}}%
^{2}+m^{2}\text{, }%
\]
with $p_{x}$, $p_{y}$, $p_{z}\in\mathbb{R}$, and using the following adhoc
quantization scheme%
\begin{equation}
E\rightarrow i\frac{\partial}{\partial t}\text{ \ \ \ \ \ \ \ and
\ \ \ \ \ \ }p_{\mathbb{R}}\rightarrow\nabla_{\wp}, \label{quantization}%
\end{equation}
one formally obtains%
\[
i\frac{\partial}{\partial t}\Psi\left(  t,\boldsymbol{x}\right)
=\sqrt{\boldsymbol{D}_{x_{1}}^{2}+\boldsymbol{D}_{x_{2}}^{2}+\boldsymbol{D}%
_{x_{3}}^{2}+m^{2}}\text{ }\Psi\left(  t,\boldsymbol{x}\right)  .
\]
Now, the square root is computed using Dirac's original reasoning. These
calculations are embedded in the demonstration of Proposition \ref{Prop_1}.

\begin{remark}
As discussed in the introduction, the Planck length implies that localizing a
particle with better accuracy than its Compton wavelength ($\lambda
_{\text{Compton}}$) is impossible. Then, the notion of particle does not make
sense. Instead of this notion, we introduce the idea of localized particle,
but for the sake of simplicity, we will use the word particle. The
localization property means that the position function is a locally constant
with an exponent of local constancy controlled by $\lambda_{\text{Compton}}$,
which implies that the assertion the position of a particle is
$\boldsymbol{x=}\left(  x_{1},x_{2},x_{3}\right)  \in\mathbb{Q}_{\wp}^{3}$
means that the particle is in a ball of radius controlled by $\lambda
_{\text{Compton}}$. We will use the expressions localized states or localized
waves to mean that they are functions with compact support.
\end{remark}

\subsection{Plane waves}

Given $\boldsymbol{x=}\left(  x_{1},x_{2},x_{3}\right)  $, $\boldsymbol{p=}%
\left(  p_{1},p_{2},p_{3}\right)  \in\mathbb{Q}_{\wp}^{3}$, we set
\[
\left\vert \underline{\boldsymbol{p}}\right\vert _{\wp}:=\left(  \left\vert
p_{1}\right\vert _{\wp},\left\vert p_{2}\right\vert _{\wp},\left\vert
p_{3}\right\vert _{\wp}\right)  \in\mathbb{R}^{3}\text{, and \ }\left\vert
\underline{\boldsymbol{p}}\right\vert _{\wp}^{2}:=\left\vert p_{1}\right\vert
_{\wp}^{2}+\left\vert p_{2}\right\vert _{\wp}^{2}+\left\vert p_{3}\right\vert
_{\wp}^{2}.
\]
We \ recall that $\left\Vert \boldsymbol{p}\right\Vert _{\wp}=\max_{1\leq
i\leq3}\left\{  \left\vert p_{i}\right\vert _{\wp}\right\}  $, and that
$\boldsymbol{p}\cdot\boldsymbol{x}=\sum_{i=1}^{3}p_{i}x_{i}$.

A $\wp$-adic counterpart of the Dirac equation is worthwhile if it predicts
the existence of particles and antiparticles with spin-$\frac{1}{2}$ on a
space-time of the form $\mathbb{R}\times\mathbb{Q}_{\wp}^{3}$. The following
definition describes our anzatz for $\wp$-adic plane waves.

\begin{definition}
By a plane wave, we mean a function of the form%
\begin{equation}
\Psi\left(  t,\boldsymbol{x}\right)  =e^{-iEt}\chi_{\wp}\left(  \boldsymbol{p}%
\cdot\boldsymbol{x}\right)  w\left(  \boldsymbol{p}\right)  , \label{Eq_1}%
\end{equation}
where
\begin{equation}
E^{2}=\left(  \left\vert p_{1}\right\vert _{\wp}^{2}+\left\vert p_{2}%
\right\vert _{\wp}^{2}+\left\vert p_{3}\right\vert _{\wp}^{2}\right)
+m^{2}=\left\vert \underline{\boldsymbol{p}}\right\vert _{\wp}^{2}%
+m^{2},\nonumber
\end{equation}
and%
\begin{equation}
w\left(  \boldsymbol{p}\right)  =\left[
\begin{array}
[c]{c}%
w_{1}\left(  \boldsymbol{p}\right) \\
\vdots\\
w_{4}\left(  \boldsymbol{p}\right)
\end{array}
\right]  \in\mathbb{C}^{4}. \label{Eq_3}%
\end{equation}
The functions $w\left(  \boldsymbol{p}\right)  $ are `radial,' i.e., $w\left(
\boldsymbol{p}\right)  =w\left(  \left\vert \underline{\boldsymbol{p}%
}\right\vert _{\wp}\right)  $, and they have the form%
\begin{equation}
w_{1}\left(  \boldsymbol{p}\right)  =\left[
\begin{array}
[c]{r}%
\left[
\begin{array}
[c]{c}%
1\\
0
\end{array}
\right] \\
\\
\frac{\boldsymbol{\sigma}\cdot\left\vert \underline{\boldsymbol{p}}\right\vert
_{\wp}}{E+m}\left[
\begin{array}
[c]{c}%
1\\
0
\end{array}
\right]
\end{array}
\right]  \text{,\hspace{2.5cm}}w_{2}\left(  \boldsymbol{p}\right)  =\left[
\begin{array}
[c]{r}%
\left[
\begin{array}
[c]{c}%
0\\
1
\end{array}
\right] \\
\\
\frac{\boldsymbol{\sigma}\cdot\left\vert \underline{\boldsymbol{p}}\right\vert
_{\wp}}{E+m}\left[
\begin{array}
[c]{c}%
0\\
1
\end{array}
\right]
\end{array}
\right]  , \label{Eq_4}%
\end{equation}%
\begin{equation}
w_{3}\left(  \boldsymbol{p}\right)  =\left[
\begin{array}
[c]{r}%
\frac{-\boldsymbol{\sigma}\cdot\left\vert \underline{\boldsymbol{p}%
}\right\vert _{\wp}}{E+m}\left[
\begin{array}
[c]{c}%
0\\
1
\end{array}
\right] \\
\\
\left[
\begin{array}
[c]{c}%
0\\
1
\end{array}
\right]
\end{array}
\right]  \text{,\hspace{2.5cm}}w_{4}\left(  \boldsymbol{p}\right)  =\left[
\begin{array}
[c]{r}%
\frac{-\boldsymbol{\sigma}\cdot\left\vert \underline{\boldsymbol{p}%
}\right\vert _{\wp}}{E+m}\left[
\begin{array}
[c]{c}%
1\\
0
\end{array}
\right] \\
\\
\left[
\begin{array}
[c]{c}%
1\\
0
\end{array}
\right]
\end{array}
\right]  . \label{Eq_5}%
\end{equation}

\end{definition}

The $\wp$-adic plane waves described above are the natural counterparts of the
standard ones; see, e.g., \cite{Bjorken et al}, \cite{Greiner}. The plane
waves for the ordinary Dirac equation have the form%
\begin{equation}
\Psi\left(  t,\boldsymbol{x}_{\mathbb{R}}\right)  =\left(  e^{-iEt}%
%TCIMACRO{\dprod \nolimits_{k=1}^{3}}%
%BeginExpansion
{\displaystyle\prod\nolimits_{k=1}^{3}}
%EndExpansion
e^{\left(  ip_{\mathbb{R}}^{k}x_{\mathbb{R}}^{k}\right)  }\right)  w\left(
\boldsymbol{p}_{\mathbb{R}}\right)  , \label{Eq_6}%
\end{equation}
where $\boldsymbol{x}_{\mathbb{R}}=\left(  x_{\mathbb{R}}^{1},x_{\mathbb{R}%
}^{2},x_{\mathbb{R}}^{3},\right)  $, $\boldsymbol{p}_{\mathbb{R}}=\left(
p_{\mathbb{R}}^{1},p_{\mathbb{R}}^{2},p_{\mathbb{R}}^{3},\right)
\in\mathbb{R}^{3}$. The term $\exp\left(  -iEt\right)  $ is not affected by
the hypothesis of the discreteness of space; for this reason it appears in
(\ref{Eq_1}). This choice implies that a version relativistic energy for
formula for $E$ should be valid in the $\wp$-adic framework. For the other
terms in (\ref{Eq_6}), we use the correspondence%
\begin{equation}
\exp\left(  2\pi ip_{\mathbb{R}}^{k}x_{\mathbb{R}}^{k}\right)  \rightarrow
\exp\left(  2\pi i\left\{  p_{k}x_{k}\right\}  _{\wp}\right)  , \label{Eq_7}%
\end{equation}
where $p_{k},x_{k}\in\mathbb{Q}_{\wp}$. Since $p_{k}x_{k}$ should be
dimensionless quantity, we require a constant $h_{\wp}=1$ with $\ $%
dimension\ $ML^{2}T^{-1}$, so that $\frac{p_{k}x_{k}}{h_{\wp}}$ be a $\wp
$-adic number. In addition, we need the formula%
\begin{equation}
\boldsymbol{D}_{x_{i}}\chi_{\wp}\left(  p_{i}x_{i}\right)  =\boldsymbol{D}%
_{x_{i}}\exp\left(  2\pi i\left\{  p_{i}x_{i}\right\}  _{\wp}\right)
=\left\vert p_{i}\right\vert _{\wp}\chi_{\wp}\left(  p_{i}x_{i}\right)  ,
\label{Formula_1}%
\end{equation}
see \cite[Chapter 2, Section IX, Example 4]{V-V-Z}. Finally, we need a $\wp
$-adic counterpart for the term $w\left(  \boldsymbol{p}_{\mathbb{R}}\right)
$ in (\ref{Eq_6}). Assuming that the Dirac bispinors are the correct
description of particles/antiparticles with spin-$\frac{1}{2}$, one is
naturally driven to use the correspondence
\[
\frac{\boldsymbol{\sigma}\cdot\boldsymbol{p}_{\mathbb{R}}}{E+m}\rightarrow
\frac{\boldsymbol{\sigma}\cdot\left\vert \underline{\boldsymbol{p}}\right\vert
_{\wp}}{E+m}.
\]

\begin{proposition}
\label{Prop_1}The $\wp$-adic Dirac equation admits plane waves of type
(\ref{Eq_1})-(\ref{Eq_5})\ as solutions.
\end{proposition}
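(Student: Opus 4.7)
The plan is to verify the claim by direct substitution, reducing the pseudo-differential equation to Dirac's classical finite-dimensional algebraic problem on $\mathbb{C}^{4}$. First I would use the multiplicativity of the additive character, $\chi_{\wp}(\boldsymbol{p}\cdot\boldsymbol{x})=\prod_{k=1}^{3}\chi_{\wp}(p_{k}x_{k})$, together with formula (\ref{Formula_1}), to establish that the plane-wave factor is a joint eigenfunction of the commuting family $\boldsymbol{D}_{x_{1}},\boldsymbol{D}_{x_{2}},\boldsymbol{D}_{x_{3}}$:
\[
\boldsymbol{D}_{x_{k}}\chi_{\wp}(\boldsymbol{p}\cdot\boldsymbol{x})=\left\vert p_{k}\right\vert _{\wp}\,\chi_{\wp}(\boldsymbol{p}\cdot\boldsymbol{x}),\quad k=1,2,3.
\]
Inserting the ansatz (\ref{Eq_1}) into (\ref{Dirac_Eq_1}), the time derivative produces $E$ on the left-hand side, the spatial part produces $\boldsymbol{\alpha}\cdot\left\vert \underline{\boldsymbol{p}}\right\vert _{\wp}:=\sum_{k=1}^{3}\alpha_{k}\left\vert p_{k}\right\vert _{\wp}$ on the right-hand side, and after cancelling the common nonvanishing scalar $e^{-iEt}\chi_{\wp}(\boldsymbol{p}\cdot\boldsymbol{x})$ the problem reduces to the purely algebraic eigenvalue equation $\bigl[\boldsymbol{\alpha}\cdot\left\vert \underline{\boldsymbol{p}}\right\vert _{\wp}+\beta m\bigr]w(\boldsymbol{p})=E\,w(\boldsymbol{p})$ on $\mathbb{C}^{4}$, with the real vector $\left\vert \underline{\boldsymbol{p}}\right\vert _{\wp}\in\mathbb{R}^{3}$ occupying the role of the classical three-momentum.

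Next I would split $w=(\phi,\chi)^{T}$ with $\phi,\chi\in\mathbb{C}^{2}$ and use the $2\times 2$ block forms of $\boldsymbol{\alpha}$ and $\beta$ displayed in (\ref{Hamiltonian}) to rewrite the eigenvalue equation as the coupled system $m\phi+(\boldsymbol{\sigma}\cdot\left\vert \underline{\boldsymbol{p}}\right\vert _{\wp})\chi=E\phi$, $(\boldsymbol{\sigma}\cdot\left\vert \underline{\boldsymbol{p}}\right\vert _{\wp})\phi-m\chi=E\chi$. The key tool is the classical identity $(\boldsymbol{\sigma}\cdot\boldsymbol{a})^{2}=(\boldsymbol{a}\cdot\boldsymbol{a})\boldsymbol{1}$ valid for any real vector $\boldsymbol{a}$; applied with $\boldsymbol{a}=\left\vert \underline{\boldsymbol{p}}\right\vert _{\wp}$ and $\boldsymbol{a}\cdot\boldsymbol{a}=E^{2}-m^{2}$, it shows that the choice $\chi=(E+m)^{-1}(\boldsymbol{\sigma}\cdot\left\vert \underline{\boldsymbol{p}}\right\vert _{\wp})\phi$ with $\phi\in\{(1,0)^{T},(0,1)^{T}\}$ yields the positive-energy bispinors $w_{1},w_{2}$ in (\ref{Eq_4}). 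The remaining bispinors $w_{3},w_{4}$ in (\ref{Eq_5}) follow from the symmetric construction in which the roles of $\phi$ and $\chi$ are interchanged and the sign of $E$ is reversed, producing the negative-energy branch and accounting for the minus sign in (\ref{Eq_5}).

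The main bookkeeping point, rather than a genuine obstacle, is to justify that the nonlocal operators $\boldsymbol{D}_{x_{k}}$ act on the plane-wave factor as multiplication by $\left\vert p_{k}\right\vert _{\wp}$ even though the characters do not belong to $L^{2}(\mathbb{Q}_{\wp})$; this is legitimate at the distributional level via (\ref{pseudodifferetial_op}), since the characters are formal joint eigenvectors of the commuting family and the pseudo-differential action is a multiplier in the momentum representation. Once this is established, the verification reduces to Dirac's original algebraic computation, the only change being the replacement of the Euclidean three-momentum by the real vector $\left\vert \underline{\boldsymbol{p}}\right\vert _{\wp}$; no further $\wp$-adic subtlety enters.
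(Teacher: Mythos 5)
Your proposal is correct and follows essentially the same route as the paper: substitute the ansatz, use (\ref{Formula_1}) (together with the product structure of $\chi_{\wp}(\boldsymbol{p}\cdot\boldsymbol{x})$) to reduce the equation to the algebraic eigenvalue problem $\left(\boldsymbol{\alpha}\cdot\left\vert \underline{\boldsymbol{p}}\right\vert _{\wp}+\beta m\right)w=Ew$ on $\mathbb{C}^{4}$, and then solve it exactly as in the classical Dirac theory with $\left\vert \underline{\boldsymbol{p}}\right\vert _{\wp}$ playing the role of the real three-momentum. The only difference is cosmetic: the paper records the determinant condition and cites the classical references for the bispinors, while you verify them directly via the $2\times2$ block split and $(\boldsymbol{\sigma}\cdot\boldsymbol{a})^{2}=(\boldsymbol{a}\cdot\boldsymbol{a})\boldsymbol{1}$, which is the same computation spelled out.
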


\begin{proof}
The demonstration is just a variation of the classical calculation showing the
existence of plane waves for the Dirac equation. \ By replacing $\Psi\left(
t,\boldsymbol{x}\right)  $, see (\ref{Eq_1}), in (\ref{Dirac_Eq_1}), and using
$\frac{\partial}{\partial t}\Psi\left(  t,\boldsymbol{x}\right)
=-iE\Psi\left(  t,\boldsymbol{x}\right)  $, and formula (\ref{Formula_1}), one
obtains that%
\begin{equation}
Ew\left(  \boldsymbol{p}\right)  =\left(  \boldsymbol{\alpha}\cdot\left\vert
\underline{\boldsymbol{p}}\right\vert _{\wp}+\beta m\right)  w\left(
\boldsymbol{p}\right)  . \label{Eigenvalue_problem_1}%
\end{equation}
Which is a system of linear equations in the variables $w_{1}\left(
\boldsymbol{p}\right)  ,\ldots,w_{4}\left(  \boldsymbol{p}\right)  $ with
coefficients in $\mathbb{C}\left[  \left\vert p_{1}\right\vert _{\wp
},\left\vert p_{2}\right\vert _{\wp},\left\vert p_{3}\right\vert _{\wp
}\right]  $, more precisely,
\[
\left[
\begin{array}
[c]{ccccccc}%
-E+m &  & 0 &  & \left\vert p_{3}\right\vert _{\wp} &  & \left\vert
p_{1}\right\vert _{\wp}-i\left\vert p_{2}\right\vert _{\wp}\\
&  &  &  &  &  & \\
0 &  & -E+m &  & \left\vert p_{1}\right\vert _{\wp}+i\left\vert p_{2}%
\right\vert _{\wp} &  & -\left\vert p_{3}\right\vert _{\wp}\\
&  &  &  &  &  & \\
\left\vert p_{3}\right\vert _{\wp} &  & \left\vert p_{1}\right\vert _{\wp
}-i\left\vert p_{2}\right\vert _{\wp} &  & -E-m &  & 0\\
&  &  &  &  &  & \\
\left\vert p_{1}\right\vert _{\wp}+i\left\vert p_{2}\right\vert _{\wp} &  &
-\left\vert p_{3}\right\vert _{\wp} &  & 0 &  & -E-m
\end{array}
\right]  \left[
\begin{array}
[c]{c}%
w_{1}\left(  \boldsymbol{p}\right) \\
\\
w_{2}\left(  \boldsymbol{p}\right) \\
\\
w_{3}\left(  \boldsymbol{p}\right) \\
\\
w_{4}\left(  \boldsymbol{p}\right)
\end{array}
\right]  =\left[
\begin{array}
[c]{c}%
0\\
\\
0\\
\\
0\\
\\
0
\end{array}
\right]  .
\]
The condition for non-trivial solutions for $w\left(  \boldsymbol{p}\right)  $
is that the determinant of this system vanishes:
\[
\left(  m^{2}+\left\vert \underline{\boldsymbol{p}}\right\vert _{\wp}%
^{2}-E^{2}\right)  ^{2}=0.
\]
The calculation of the determinant is the same as in the classical case. Then,
necessarily
\[
E=\pm\sqrt{m^{2}+\left\vert \underline{\boldsymbol{p}}\right\vert _{\wp}^{2}%
}.
\]
We now consider (\ref{Eigenvalue_problem_1}) as\ an eigenvalue/eigenvector
problem in the ring $\mathbb{C}\left[  \left\vert p_{1}\right\vert _{\wp
},\left\vert p_{2}\right\vert _{\wp},\left\vert p_{3}\right\vert _{\wp
}\right]  $. The solution of this problem follows the classical reasoning and
drives to the announced solutions; see, e.g., \cite{Bjorken et al},
\cite{Greiner}.
\end{proof}

\section{\label{Section_3}The free Dirac operator and its spectrum}

\subsection{\label{Section_Some_Function_Spaces}Some function spaces}

This section uses a notation similar to the one used in \cite[Chapter
1]{Thaller} to compare the standard and the $\wp$-adic Dirac operators
quickly. Furthermore, we use several results and calculations in \cite[Chapter
1]{Thaller}. We set%
\begin{align*}
\mathfrak{H}  &  :=L^{2}\left(  \mathbb{Q}_{\wp}\right)
%TCIMACRO{\tbigoplus }%
%BeginExpansion
{\textstyle\bigoplus}
%EndExpansion
L^{2}\left(  \mathbb{Q}_{\wp}\right)
%TCIMACRO{\tbigoplus }%
%BeginExpansion
{\textstyle\bigoplus}
%EndExpansion
L^{2}\left(  \mathbb{Q}_{\wp}\right)
%TCIMACRO{\tbigoplus }%
%BeginExpansion
{\textstyle\bigoplus}
%EndExpansion
L^{2}\left(  \mathbb{Q}_{\wp}\right) \\
&  =L^{2}\left(  \mathbb{Q}_{\wp}\right)
%TCIMACRO{\tbigotimes }%
%BeginExpansion
{\textstyle\bigotimes}
%EndExpansion
\mathbb{C}^{4}=L^{2}\left(  \mathbb{Q}_{\wp}\right)  ^{4},
\end{align*}
and identify the elements of $\mathfrak{H}$\ with column vectors of the form%
\[
\psi\left(  \boldsymbol{x}\right)  =\left[
\begin{array}
[c]{c}%
\psi_{1}\left(  \boldsymbol{x}\right) \\
\vdots\\
\psi_{4}\left(  \boldsymbol{x}\right)
\end{array}
\right]  \text{.}%
\]
The inner product is given by%
\[
\left(  \psi\left(  \boldsymbol{x}\right)  ,\phi\left(  \boldsymbol{x}\right)
\right)  =%
%TCIMACRO{\dint \limits_{\mathbb{Q}_{\wp}^{3}}}%
%BeginExpansion
{\displaystyle\int\limits_{\mathbb{Q}_{\wp}^{3}}}
%EndExpansion%
%TCIMACRO{\dsum \limits_{i=1}^{4}}%
%BeginExpansion
{\displaystyle\sum\limits_{i=1}^{4}}
%EndExpansion
\psi_{i}\left(  \boldsymbol{x}\right)  \overline{\phi}_{i}\left(
\boldsymbol{x}\right)  d^{3}\boldsymbol{x},
\]
where the bar denotes the complex conjugate, and the norm is given by%
\[
\left\Vert \psi\right\Vert =\sqrt{%
%TCIMACRO{\dsum \limits_{i=1}^{4}}%
%BeginExpansion
{\displaystyle\sum\limits_{i=1}^{4}}
%EndExpansion
\text{ }%
%TCIMACRO{\dint \limits_{\mathbb{Q}_{\wp}^{3}}}%
%BeginExpansion
{\displaystyle\int\limits_{\mathbb{Q}_{\wp}^{3}}}
%EndExpansion
\left\vert \psi_{i}\left(  \boldsymbol{x}\right)  \right\vert ^{2}%
d^{3}\boldsymbol{x}}\text{.}%
\]
Given an integrable function $\psi\in\mathfrak{H}$ its Fourier transform is
defined as%
\[
\left(  \mathcal{F}\psi\right)  \left(  \boldsymbol{p}\right)  =\widehat{\psi
}\left(  \boldsymbol{p}\right)  =\left[
\begin{array}
[c]{c}%
\widehat{\psi}_{1}\left(  \boldsymbol{p}\right) \\
\vdots\\
\widehat{\psi}_{4}\left(  \boldsymbol{p}\right)
\end{array}
\right]  \text{,}%
\]
where%
\[
\left(  \mathcal{F}\psi_{i}\right)  \left(  \boldsymbol{p}\right)
=\widehat{\psi_{i}}\left(  \boldsymbol{p}\right)  =%
%TCIMACRO{\dint \limits_{\mathbb{Q}_{\wp}^{3}}}%
%BeginExpansion
{\displaystyle\int\limits_{\mathbb{Q}_{\wp}^{3}}}
%EndExpansion
\chi_{\wp}\left(  \boldsymbol{p\cdot x}\right)  \psi_{i}\left(  \boldsymbol{x}%
\right)  d^{3}\boldsymbol{x}\text{, for }i=1,2,3,4\text{.}%
\]
The Fourier transform extends to a uniquely defined operator (denote as
$\mathcal{F}$) in the Hilbert space $\mathfrak{H}$.

We now introduce a $\wp$-adic analogue of the first Sobolev space. We define
the scalar product%
\[
\left(  \psi,\phi\right)  _{H^{1}}=%
%TCIMACRO{\dint \limits_{\mathbb{Q}_{\wp}^{3}}}%
%BeginExpansion
{\displaystyle\int\limits_{\mathbb{Q}_{\wp}^{3}}}
%EndExpansion%
%TCIMACRO{\dsum \limits_{i=1}^{4}}%
%BeginExpansion
{\displaystyle\sum\limits_{i=1}^{4}}
%EndExpansion
\widehat{\psi}_{i}\left(  \boldsymbol{p}\right)  \left(  \max\left(
1,\left\Vert \boldsymbol{p}\right\Vert _{\wp}\right)  \right)  ^{\frac{1}{2}%
}\widehat{\phi}_{i}\left(  \boldsymbol{p}\right)  d^{3}\boldsymbol{p,}%
\]
and the corresponding norm $\left\Vert \psi\right\Vert _{H^{1}}=\sqrt{\left(
\psi,\psi\right)  _{H^{1}}}$. We define%
\[
H^{1}\left(  \mathbb{Q}_{\wp}^{3}\right)  =\overline{\left(  \mathcal{D}%
\left(  \mathbb{Q}_{\wp}\right)
%TCIMACRO{\tbigotimes }%
%BeginExpansion
{\textstyle\bigotimes}
%EndExpansion
\mathbb{C}^{4},\left\Vert \cdot\right\Vert _{H^{1}}\right)  },
\]
where the bar de notes the completion \ of $\mathcal{D}\left(  \mathbb{Q}%
_{\wp}\right)
%TCIMACRO{\tbigotimes }%
%BeginExpansion
{\textstyle\bigotimes}
%EndExpansion
\mathbb{C}^{4}$ with respect to the distance induced by $\left\Vert
\cdot\right\Vert _{H^{1}}$. It also verifies\ that%
\[
H^{1}\left(  \mathbb{Q}_{\wp}^{3}\right)  =\left\{  \psi\in\mathfrak{H}%
;\left\Vert \psi\right\Vert _{H^{1}}<\infty\right\}  .
\]
The results about the space $H^{1}\left(  \mathbb{Q}_{\wp}^{3}\right)  $ are
just variations of well-known results about $\wp$-adic Sobolev spaces; see,
e.g., \cite[Section 10.2]{KKZuniga}, \cite{Zuniga et al}, and the references therein.

The Hamiltonian $\boldsymbol{H}_{0}$ is a matrix operator on $\boldsymbol{D}%
_{x_{1}}$, $\boldsymbol{D}_{x_{2}}$, $\boldsymbol{D}_{x_{3}}$. From now on, we
consider the Taibleson-Vladimirov derivative as a pseudo-differential
operator, see (\ref{pseudodifferetial_op}). With the above notation, it
follows that the mapping%
\[%
\begin{array}
[c]{ccc}%
H^{1}\left(  \mathbb{Q}_{\wp}^{3}\right)  & \rightarrow & \mathfrak{H}\\
&  & \\
\psi & \rightarrow & \boldsymbol{H}_{0}\psi
\end{array}
\]
is a well-defined bounded, linear operator.

The operator $\left(  \boldsymbol{H}_{0},H^{1}\left(  \mathbb{Q}_{\wp}%
^{3}\right)  \right)  $ is self-adjoint, and $\left(  \boldsymbol{H}%
_{0},\mathcal{D}\left(  \mathbb{Q}_{\wp}^{3}\right)
%TCIMACRO{\tbigotimes }%
%BeginExpansion
{\textstyle\bigotimes}
%EndExpansion
\mathbb{C}^{4}\right)  $ is essentially self-adjoint, see \cite[Theorem
3.2]{Kochubei}, \cite[Proposition 7]{Zuniga-LNM-2016}. Then, by Stone's
theorem the initial value problem%
\[
\left\{
\begin{array}
[c]{lll}%
i\frac{\partial}{\partial t}\Psi\left(  t,\boldsymbol{x}\right)
=\boldsymbol{H}_{0}\Psi\left(  t,\boldsymbol{x}\right)  , & t\geq0, &
\boldsymbol{x}\in\mathbb{Q}_{\wp}^{3}\\
&  & \\
\Psi\left(  0,\boldsymbol{x}\right)  =\Psi_{0}\left(  \boldsymbol{x}\right)
, &  &
\end{array}
\right.
\]
has a unique solution given by%
\[
\Psi\left(  t,\boldsymbol{x}\right)  =e^{-it\boldsymbol{H}_{0}}\Psi_{0}\left(
\boldsymbol{x}\right)  \text{.}%
\]

\subsection{The free Dirac operator in the Fourier space}

The results presented in this section are analogs of the results of the
standard Dirac operator. In particular, the calculations used here are the
same as the ones given in \cite[Section 1.4.1]{Thaller}. The Hamiltonian
$\boldsymbol{H}_{0}$ is a matrix pseudo-differential operator on
$\boldsymbol{D}_{x_{1}}$, $\boldsymbol{D}_{x_{2}}$, $\boldsymbol{D}_{x_{3}}$
defined on $H^{1}\left(  \mathbb{Q}_{\wp}^{3}\right)  \subset\mathfrak{H}$.
Any such operator is transformed via $\mathcal{F}$ into a matrix
multiplication operator in $\mathfrak{H}$. In the case of $\boldsymbol{H}_{0}%
$, we have
\begin{equation}
\left(  \boldsymbol{H}_{0}\phi\right)  \left(  \boldsymbol{x}\right)
=\mathcal{F}_{\boldsymbol{p}\rightarrow\boldsymbol{x}}^{-1}\left(  h\left(
\boldsymbol{p}\right)  \mathcal{F}_{\boldsymbol{x}\rightarrow\boldsymbol{p}%
}\phi\right)  ,\text{ for }\phi\in H^{1}\left(  \mathbb{Q}_{\wp}^{3}\right)  ,
\label{pseudodifferetial_op_1}%
\end{equation}
where%
\[
h\left(  \boldsymbol{p}\right)  :=\left[
\begin{array}
[c]{cc}%
m\boldsymbol{1} & \boldsymbol{\sigma}\cdot\left\vert \underline{\boldsymbol{p}%
}\right\vert _{\wp}\\
\boldsymbol{\sigma}\cdot\left\vert \underline{\boldsymbol{p}}\right\vert
_{\wp} & -m\boldsymbol{1}%
\end{array}
\right]  .
\]
The matrix $h\left(  \boldsymbol{p}\right)  =h\left(  \left\vert
\underline{\boldsymbol{p}}\right\vert _{\wp}\right)  $ is a $4\times4$
Hermitian matrix which has the eigenvalues%
\begin{align*}
\lambda_{1}\left(  \left\vert \underline{\boldsymbol{p}}\right\vert _{\wp
}\right)   &  =\lambda_{2}\left(  \left\vert \underline{\boldsymbol{p}%
}\right\vert _{\wp}\right)  =-\lambda_{3}\left(  \left\vert \underline
{\boldsymbol{p}}\right\vert _{\wp}\right)  =-\lambda_{4}\left(  \left\vert
\underline{\boldsymbol{p}}\right\vert _{\wp}\right) \\
&  =:\lambda\left(  \left\vert \underline{\boldsymbol{p}}\right\vert _{\wp
}\right)  =\sqrt{\left(  \left\vert p_{1}\right\vert _{\wp}^{2}+\left\vert
p_{2}\right\vert _{\wp}^{2}+\left\vert p_{3}\right\vert _{\wp}^{2}\right)
+m^{2}}.
\end{align*}
We also use the notation $\lambda\left(  \boldsymbol{p}\right)  =\lambda
\left(  \left\vert \underline{\boldsymbol{p}}\right\vert _{\wp}\right)  $.

The unitary transformation $u(\boldsymbol{p})=u\left(  \left\vert
\underline{\boldsymbol{p}}\right\vert _{\wp}\right)  $, which diagonalizes
$h(\boldsymbol{p})$ is%
\[
u(\boldsymbol{p})=\frac{\left(  m+\lambda\left(  \boldsymbol{p}\right)
\right)  \boldsymbol{1}+\beta\boldsymbol{\alpha\cdot}\left\vert \underline
{\boldsymbol{p}}\right\vert _{\wp}}{\sqrt{2\lambda\left(  \boldsymbol{p}%
\right)  \left(  m+\lambda\left(  \boldsymbol{p}\right)  \right)  }}%
=a_{+}\left(  \boldsymbol{p}\right)  \boldsymbol{1}+a_{-}\left(
\boldsymbol{p}\right)  \beta\frac{\boldsymbol{\alpha\cdot}\left\vert
\underline{\boldsymbol{p}}\right\vert _{\wp}}{\sqrt{\left\vert p_{1}%
\right\vert _{\wp}^{2}+\left\vert p_{2}\right\vert _{\wp}^{2}+\left\vert
p_{3}\right\vert _{\wp}^{2}}},
\]%
\[
u^{-1}(\boldsymbol{p})=a_{+}\left(  \boldsymbol{p}\right)  \boldsymbol{1}%
-a_{-}\left(  \boldsymbol{p}\right)  \beta\frac{\boldsymbol{\alpha\cdot
}\left\vert \underline{\boldsymbol{p}}\right\vert _{\wp}}{\sqrt{\left\vert
p_{1}\right\vert _{\wp}^{2}+\left\vert p_{2}\right\vert _{\wp}^{2}+\left\vert
p_{3}\right\vert _{\wp}^{2}}}%
\]
where $\boldsymbol{1}$ is \ the $4\times4$ matrix identity,
\[
a_{\pm}\left(  \boldsymbol{p}\right)  =\frac{1}{\sqrt{2}}\sqrt{1\pm\frac
{m}{\lambda\left(  \boldsymbol{p}\right)  }},
\]
and the diagonal form of $h(\boldsymbol{p})$ is%
\begin{equation}
u^{-1}(\boldsymbol{p})h\left(  \boldsymbol{p}\right)  u(\boldsymbol{p}%
)=\beta\lambda\left(  \boldsymbol{p}\right)  . \label{diagonalization}%
\end{equation}
By using (\ref{pseudodifferetial_op_1}) and (\ref{diagonalization}), the
unitary transformation
\[
\mathcal{W}:=u\mathcal{F}:\mathfrak{H}\rightarrow\mathfrak{H}%
\]
converts the $\wp$-adic Dirac operator $\boldsymbol{H}_{0}$ into a
multiplication operator by the diagonal matrix $\beta\lambda\left(
\boldsymbol{p}\right)  $,%
\begin{equation}
\boldsymbol{H}_{0}=\mathcal{W}^{-1}\beta\lambda\left(  \boldsymbol{p}\right)
\mathcal{W} \label{diagonalization_2}%
\end{equation}
in $\mathfrak{H}$.

\subsection{The spectrum of $\boldsymbol{H}_{0}$}

In the Hilbert space $\mathcal{W}\mathfrak{H}$ the $\wp$-adic Dirac operator
is diagonal, see (\ref{diagonalization_2}). The upper two components of the
wavefunctions belong to positive energies, while the lower two components
belong to the negative energies.\ Following, Thaller's book \cite[Section
1.4.2]{Thaller}, we introduce \ the subspaces\ of positive energies
\ $\mathfrak{H}_{pos}\subset\mathfrak{H}$ spanned\ by vectors $\psi_{pos}$,
and negative energies $\mathfrak{H}_{neg}\subset\mathfrak{H}$ spanned by
vectors $\psi_{neg}$, where%
\[
\psi_{pos}=\mathcal{W}^{-1}\frac{1}{2}\left(  \boldsymbol{1}+\beta\right)
\mathcal{W}\psi\text{, \ }\psi_{neg}=\mathcal{W}^{-1}\frac{1}{2}\left(
\boldsymbol{1}-\beta\right)  \mathcal{W}\psi\text{, \ }\psi\in\mathfrak{H,}%
\]
where $\boldsymbol{1}$ is the $4\times4$ identity matrix. Since $\left(
\boldsymbol{1}+\beta\right)  \left(  \boldsymbol{1}-\beta\right)
=\boldsymbol{0}$, $\mathfrak{H}_{pos}$ is orthogonal to $\mathfrak{H}_{neg}$,
then%
\begin{equation}
\mathfrak{H=H}_{pos}%
%TCIMACRO{\tbigoplus }%
%BeginExpansion
{\textstyle\bigoplus}
%EndExpansion
\mathfrak{H}_{neg}\text{.} \label{partition_1}%
\end{equation}
Taking
\begin{equation}
\phi_{\pm}:=\frac{1}{2}\left(  \boldsymbol{1}\pm\beta\right)  \mathcal{W}\psi,
\label{partition_2}%
\end{equation}
we have
\[
\left(  \psi_{pos},\boldsymbol{H}_{0}\psi_{pos}\right)  =\left(
\mathcal{W}^{-1}\phi_{+},\mathcal{W}^{-1}\beta\lambda\left(  \boldsymbol{p}%
\right)  \phi_{+}\right)  =\left(  \phi_{+},\beta\lambda\left(  \boldsymbol{p}%
\right)  \phi_{+}\right)  =\left(  \phi_{+},\lambda\left(  \boldsymbol{p}%
\right)  \phi_{+}\right)  >0,
\]
which means that $\mathfrak{H}_{pos}$ is invariant under $\boldsymbol{H}_{0}$.
Similarly, one shows that $\mathfrak{H}_{neg}$ is invariant under
$\boldsymbol{H}_{0}$. The orthogonal projection operators onto the
positive/negative energy subspaces are given by%
\begin{equation}
\mathcal{P}_{\substack{pos\\neg}}=\mathcal{W}^{-1}\frac{1}{2}\left(
\boldsymbol{1}\pm\beta\right)  \mathcal{W}=\frac{1}{2}\left(  \boldsymbol{1}%
\pm\frac{\boldsymbol{H}_{0}}{\left\vert \boldsymbol{H}_{0}\right\vert
}\right)  , \label{Opertor_P}%
\end{equation}
where $\boldsymbol{1}$ is the identity \ operator on $\mathfrak{H}$, and
$\left\vert \boldsymbol{H}_{0}\right\vert $ is the pseudo-differential
operator on $\mathfrak{H}$ with symbol%
\[
\sqrt{\left(  \left\vert p_{1}\right\vert _{\wp}^{2}+\left\vert p_{2}%
\right\vert _{\wp}^{2}+\left\vert p_{3}\right\vert _{\wp}^{2}\right)  +m^{2}%
}\text{ }\boldsymbol{1}\text{. }%
\]
We identify $\left\vert \boldsymbol{H}_{0}\right\vert $ with the operator
$\sqrt{\boldsymbol{H}_{0}^{2}}=\sqrt{\left(  \boldsymbol{D}_{x_{1}}%
^{2}+\boldsymbol{D}_{x_{2}}^{2}+\boldsymbol{D}_{x_{3}}^{2}\right)  +m^{2}}$.
Like in the standard case, we have
\[
\boldsymbol{H}_{0}\psi_{\substack{pos\\neg}}=\pm\left\vert \boldsymbol{H}%
_{0}\right\vert \psi_{\substack{pos\\neg}},
\]
and if we define $sgn$ $\boldsymbol{H}_{0}=\frac{\boldsymbol{H}_{0}%
}{\left\vert \boldsymbol{H}_{0}\right\vert }$, then $\boldsymbol{H}%
_{0}=\left\vert \boldsymbol{H}_{0}\right\vert sgn$ $\boldsymbol{H}_{0}$, which
is polar decomposition \ of $\boldsymbol{H}_{0}$.

Again, following the classical case, we define the Foldy-Wouthuysen
transformation as%
\[
\mathcal{U}_{FW}=\mathcal{F}^{-1}\mathcal{W}.
\]
It transforms the free Dirac operator into the pseudo-differential operator%
\begin{align*}
\mathcal{U}_{FW}\boldsymbol{H}_{0}\mathcal{U}_{FW}^{-1}  &  =\left[
\begin{array}
[c]{cc}%
\sqrt{\left(  \boldsymbol{D}_{x_{1}}^{2}+\boldsymbol{D}_{x_{2}}^{2}%
+\boldsymbol{D}_{x_{3}}^{2}\right)  +m^{2}} & \boldsymbol{0}\\
\boldsymbol{0} & -\sqrt{\left(  \boldsymbol{D}_{x_{1}}^{2}+\boldsymbol{D}%
_{x_{2}}^{2}+\boldsymbol{D}_{x_{3}}^{2}\right)  +m^{2}}%
\end{array}
\right] \\
&  =\beta\left\vert \boldsymbol{H}_{0}\right\vert .
\end{align*}
We interpret this formula as the fact that the free Dirac equation \ is
unitarily equivalent to a pair of (two component) square-root Klein-Gordon equations.

\begin{theorem}
The free Dirac operator is essentially self-adjoint on the dense domain
$\mathcal{D}\left(  \mathbb{Q}_{\wp}\right)
%TCIMACRO{\tbigotimes }%
%BeginExpansion
{\textstyle\bigotimes}
%EndExpansion
\mathbb{C}^{4}$ and self-adjoint in the Sobolev space $H^{1}\left(
\mathbb{Q}_{\wp}^{3}\right)  $ Its spectrum $\sigma\left(  \boldsymbol{H}%
_{0}\right)  $ is the union of the essential range of the functions
$\pm\lambda\left(  \boldsymbol{p}\right)  :\mathbb{Q}_{\wp}^{3}\rightarrow
\mathbb{R}$.
\end{theorem}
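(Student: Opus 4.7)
The plan is to reduce everything to standard facts about matrix-valued multiplication operators on an $L^{2}$ space, by way of the unitary diagonalization $\mathcal{W}\boldsymbol{H}_{0}\mathcal{W}^{-1} = \beta\lambda(\boldsymbol{p})$ obtained in (\ref{diagonalization_2}). Since self-adjointness, essential self-adjointness, and the spectrum are all invariants under unitary conjugation, it suffices to establish each claim on the Fourier side for the multiplication operator $M\phi := \beta\lambda(\boldsymbol{p})\phi$ and then pull the result back through $\mathcal{W}^{-1}$.

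For self-adjointness, I would first observe that $\lambda(\boldsymbol{p}) = \sqrt{|\underline{\boldsymbol{p}}|_{\wp}^{2} + m^{2}}$ is real-valued, continuous, and locally bounded on $\mathbb{Q}_{\wp}^{3}$, and that the symbol $\beta\lambda(\boldsymbol{p})$ is Hermitian at each $\boldsymbol{p}$. By the standard theory of multiplication operators on $L^{2}$, $M$ is then self-adjoint on its maximal domain $\mathrm{Dom}(M) := \{\phi \in \mathfrak{H} : \lambda\phi \in \mathfrak{H}\}$. Pulling back via $\mathcal{W}^{-1}$ yields a self-adjoint realization of $\boldsymbol{H}_{0}$ on $\mathcal{W}^{-1}(\mathrm{Dom}(M))$, and the remaining task is to identify this domain with the Sobolev-type space $H^{1}(\mathbb{Q}_{\wp}^{3})$ of \secref{Section_Some_Function_Spaces}; this is a direct comparison of the weight $\max(1, \|\boldsymbol{p}\|_{\wp})$ entering $\|\cdot\|_{H^{1}}$ with the multiplier $\lambda(\boldsymbol{p})$, using the elementary bounds $\max(m, \|\boldsymbol{p}\|_{\wp}) \le \lambda(\boldsymbol{p}) \le \sqrt{3\|\boldsymbol{p}\|_{\wp}^{2} + m^{2}}$. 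Essential self-adjointness on $\mathcal{D}(\mathbb{Q}_{\wp})\otimes\mathbb{C}^{4}$ then follows because, by its construction in \secref{Section_Some_Function_Spaces}, this subspace is $\|\cdot\|_{H^{1}}$-dense in $H^{1}(\mathbb{Q}_{\wp}^{3})$ and hence a core of the self-adjoint operator $(\boldsymbol{H}_{0}, H^{1}(\mathbb{Q}_{\wp}^{3}))$.

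For the spectrum, the argument is once more by pull-back: the spectrum of a multiplication operator on $L^{2}$ by a measurable matrix-valued function that is Hermitian at every point equals the union of the essential ranges of its pointwise eigenvalue functions. Here the eigenvalues of $\beta\lambda(\boldsymbol{p})$ are $\pm\lambda(\boldsymbol{p})$, each with multiplicity two, whence $\sigma(M)$ is the union of the essential ranges of $\lambda$ and $-\lambda$; unitary invariance transfers this identity to $\sigma(\boldsymbol{H}_{0})$.

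The hard part will be the bookkeeping for the domain identification $\mathcal{W}^{-1}(\mathrm{Dom}(M)) = H^{1}(\mathbb{Q}_{\wp}^{3})$: one must check that the pointwise unitary $u(\boldsymbol{p})$ rotating $h(\boldsymbol{p})$ into $\beta\lambda(\boldsymbol{p})$ preserves the relevant Sobolev weight (this is immediate, because $u(\boldsymbol{p})$ acts pointwise in $\boldsymbol{p}$ and is uniformly bounded), and that $\lambda(\boldsymbol{p})$ and the weight defining $\|\cdot\|_{H^{1}}$ give comparable norms modulo lower-order terms. Once these comparisons are in place, the remainder is a transcription of the Euclidean argument in \cite[Chapter 1]{Thaller}, with the classical symbol $\sqrt{|\boldsymbol{p}_{\mathbb{R}}|^{2} + m^{2}}$ replaced throughout by its $\wp$-adic counterpart $\lambda(\boldsymbol{p})$.
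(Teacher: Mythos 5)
Your argument for the spectrum is exactly the paper's: the paper likewise invokes the unitary equivalence (\ref{diagonalization_2}), $\boldsymbol{H}_{0}=\mathcal{W}^{-1}\beta\lambda\left(\boldsymbol{p}\right)\mathcal{W}$, and reads off $\sigma\left(\boldsymbol{H}_{0}\right)$ as the union of the essential ranges of $\pm\lambda$, citing the multiplication-operator theory in \cite[Section VII.2]{Reed-Simon I}. Where you genuinely differ is on the self-adjointness claims: the paper does not prove these inside the theorem at all, having asserted them earlier in \S\ref{Section_Some_Function_Spaces} by citing \cite[Theorem 3.2]{Kochubei} and \cite[Proposition 7]{Zuniga-LNM-2016}, whereas you derive them from the same diagonalization by taking the maximal domain of the multiplication operator $\beta\lambda\left(\boldsymbol{p}\right)$, pulling it back through $\mathcal{W}^{-1}$ (legitimate, since $u\left(\boldsymbol{p}\right)$ is a pointwise unitary commuting with the scalar weight $\lambda$), identifying the pulled-back domain with $H^{1}\left(\mathbb{Q}_{\wp}^{3}\right)$ via the comparison of $\lambda\left(\boldsymbol{p}\right)$ with $\max\left(1,\left\Vert \boldsymbol{p}\right\Vert _{\wp}\right)$, and then noting that $\mathcal{D}\otimes\mathbb{C}^{4}$ is $H^{1}$-dense by construction, hence a core (here you are implicitly using that the $H^{1}$-norm is equivalent to the graph norm of $\boldsymbol{H}_{0}$, which follows from the same symbol bounds and should be said). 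Your route buys a self-contained proof at the price of the weight bookkeeping; the paper's buys brevity at the price of outsourcing. Two small points to watch: the $H^{1}$ inner product as printed carries the weight $\left(\max\left(1,\left\Vert \boldsymbol{p}\right\Vert _{\wp}\right)\right)^{1/2}$, evidently a misprint for the standard Sobolev weight, so your comparison is with the intended norm; and your lower bound $\max\left(m,\left\Vert \boldsymbol{p}\right\Vert _{\wp}\right)\leq\lambda\left(\boldsymbol{p}\right)$ yields two-sided comparability with $\max\left(1,\left\Vert \boldsymbol{p}\right\Vert _{\wp}\right)$ only when $m>0$; for $m=0$ the domain identification still holds because $\lambda$ is bounded on bounded sets and only the behaviour for large $\left\Vert \boldsymbol{p}\right\Vert _{\wp}$ matters, but this deserves a sentence.
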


\begin{remark}
We denote by $\sigma\left(  \boldsymbol{H}_{0}^{Arch}\right)  $ the spectrum
of the standard free Dirac operator, by \cite[Theorem 1.1]{Thaller},
\[
\sigma\left(  \boldsymbol{H}_{0}^{Arch}\right)  =\left(  -\infty,-m\right]
\cup\left[  m,\infty\right)  .
\]
Then $\sigma\left(  \boldsymbol{H}_{0}\right)  \subset\sigma\left(
\boldsymbol{H}_{0}^{Arch}\right)  $. We use the\ normalization $c=1$ in the
Archimdean case too.
\end{remark}

\begin{proof}
By (\ref{diagonalization_2}) the spectrum of $\boldsymbol{H}_{0}$ equals the
spectrum of the multiplication operator $\beta\lambda\left(  \boldsymbol{p}%
\right)  $, which is essential range of the functions \ $\pm\lambda\left(
\boldsymbol{p}\right)  $; see \cite[Section VII.2]{Reed-Simon I}.
\end{proof}

\section{\label{Section_4}Charge conjugation}

Following the standard case, see \cite[Section 1.4.6]{Thaller}, the $\wp$-adic
Dirac operator for a charge $e\in\mathbb{R}$ in an external electromagnetic
field $\left(  \phi,\boldsymbol{A}\right)  \in\mathbb{R}\times\mathbb{R}^{3}$
is given by%
\[
\boldsymbol{H}(e):=\boldsymbol{\alpha}\cdot\left(  \nabla_{\wp}%
-e\boldsymbol{A}\left(  t,\boldsymbol{x}\right)  \right)  +\beta
m+e\phi\left(  t,\boldsymbol{x}\right)  \boldsymbol{1}\text{.}%
\]
We define the charge conjugation $\mathcal{C}$ as the antiunitary
transformation%
\[
\mathcal{C}\Psi=U_{\mathcal{C}}\overline{\Psi},
\]
where $U_{\mathcal{C}}=-i\beta\alpha_{2}$ is a $4\times4$ unitary matrix.

\begin{lemma}
With the above notation, if $\Psi\left(  t,\boldsymbol{x}\right)  $ is a
solution of
\begin{equation}
i\frac{\partial}{\partial t}\Psi\left(  t,\boldsymbol{x}\right)
=\boldsymbol{H}\left(  e\right)  \Psi\left(  t,\boldsymbol{x}\right)  ,
\label{Eq_15}%
\end{equation}
then%
\[
i\frac{\partial}{\partial t}\mathcal{C}\Psi\left(  t,\boldsymbol{x}\right)
=\boldsymbol{H}\left(  -e\right)  \mathcal{C}\Psi\left(  t,\boldsymbol{x}%
\right)  .
\]
Moreover, $\mathcal{C}^{-1}\boldsymbol{H}\left(  e\right)  \mathcal{C}%
=-\boldsymbol{H}\left(  -e\right)  $.
\end{lemma}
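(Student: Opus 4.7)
The plan is to follow the standard charge-conjugation argument (cf.\ \cite[\S 1.4.6]{Thaller}), adapted to the fact that the operator $\nabla_{\wp}$ carries the \emph{real} symbol $\lvert\underline{\boldsymbol{p}}\rvert_{\wp}$. The backbone of the proof is (i) to complex-conjugate the Dirac equation, (ii) multiply it on the left by $U_{\mathcal{C}}$, (iii) commute $U_{\mathcal{C}}$ past the Dirac matrices via purely algebraic identities, and (iv) rearrange the result as $\boldsymbol{H}(-e)$ acting on $\mathcal{C}\Psi$.

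For step (i) I will first observe that complex conjugation commutes with $\nabla_{\wp}$: the integral formula for $\boldsymbol{D}_{x_{k}}$ has the real kernel $1/\lvert y\rvert_{\wp}^{2}$ and real scalar prefactor, so $\overline{\boldsymbol{D}_{x_{k}}\varphi}=\boldsymbol{D}_{x_{k}}\overline{\varphi}$ for every $\varphi\in\mathcal{D}(\mathbb{Q}_{\wp})$. Since $m,e,\phi$ and the components of $\boldsymbol{A}$ are real, $\beta$ has real entries, and $\overline{\alpha_{1}}=\alpha_{1}$, $\overline{\alpha_{2}}=-\alpha_{2}$, $\overline{\alpha_{3}}=\alpha_{3}$ (read off from the Pauli matrices), conjugating (\ref{Eq_15}) gives
\[
-i\frac{\partial}{\partial t}\overline{\Psi}=\bigl(\overline{\boldsymbol{\alpha}}\cdot(\nabla_{\wp}-e\boldsymbol{A})+\beta m+e\phi\bigr)\overline{\Psi}.
\]

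For steps (ii)--(iii) the key algebraic task will be to establish, using only the anticommutation relations $\{\alpha_{j},\alpha_{k}\}=2\delta_{jk}\boldsymbol{1}$, $\{\alpha_{k},\beta\}=0$, together with $\alpha_{k}^{2}=\beta^{2}=\boldsymbol{1}$, the matrix identity
\[
U_{\mathcal{C}}\,\overline{\boldsymbol{H}(e)}\,U_{\mathcal{C}}^{-1}=-\boldsymbol{H}(-e),
\]
where $U_{\mathcal{C}}^{-1}=U_{\mathcal{C}}$ because a direct computation gives $U_{\mathcal{C}}^{2}=\boldsymbol{1}$. Multiplying the conjugated equation by $-U_{\mathcal{C}}$ on the left and inserting $\boldsymbol{1}=U_{\mathcal{C}}^{-1}U_{\mathcal{C}}$ between every matrix factor and $\overline{\Psi}$ then yields the first assertion
\[
i\frac{\partial}{\partial t}(U_{\mathcal{C}}\overline{\Psi})=\boldsymbol{H}(-e)(U_{\mathcal{C}}\overline{\Psi}).
\]
The operator identity $\mathcal{C}^{-1}\boldsymbol{H}(e)\mathcal{C}=-\boldsymbol{H}(-e)$ is then immediate: the matrix $U_{\mathcal{C}}=-i\beta\alpha_{2}$ is real (the $i$ in $\alpha_{2}$ cancels the $-i$ prefactor), so $\overline{U_{\mathcal{C}}}=U_{\mathcal{C}}$ and $\mathcal{C}^{2}=\boldsymbol{1}$; unwinding the action of the antilinear operator gives $\mathcal{C}^{-1}\boldsymbol{H}(e)\mathcal{C}=U_{\mathcal{C}}\overline{\boldsymbol{H}(e)}U_{\mathcal{C}}^{-1}$, which is $-\boldsymbol{H}(-e)$ by the same identity.

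The main obstacle, and the only part that requires genuine care, is the verification of the matrix identity in step (iii). Each of $\alpha_{1},\alpha_{2},\alpha_{3},\beta$ behaves differently under complex conjugation (only $\alpha_{2}$ picks up a sign, $\beta$ is unaffected), and the conjugation $X\mapsto U_{\mathcal{C}}XU_{\mathcal{C}}^{-1}$ also acts with different signs depending on whether $X$ commutes or anticommutes with the individual factors $\beta$ and $\alpha_{2}$ of $U_{\mathcal{C}}$. The identity must therefore be checked matrix by matrix, and one must arrange the two sign patterns to combine so that every coefficient of $\boldsymbol{H}(e)$ is flipped into the coefficient of $-\boldsymbol{H}(-e)$. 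This is a routine but delicate bookkeeping exercise in the Pauli--Dirac algebra, and is the natural spot where a discrepancy with the classical Thaller calculation could appear, since there the operator $-i\nabla$ contributes its own extra sign under conjugation whereas here $\nabla_{\wp}$ contributes none.
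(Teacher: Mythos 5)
Your overall strategy (complex-conjugate the equation, multiply by $U_{\mathcal{C}}=-i\beta\alpha_{2}$, push $U_{\mathcal{C}}$ through the Dirac algebra) is the same as the paper's, but you have deferred the only step that carries the content of the lemma: the identity $U_{\mathcal{C}}\,\overline{\boldsymbol{H}(e)}\,U_{\mathcal{C}}^{-1}=-\boldsymbol{H}(-e)$ is announced as a ``routine but delicate bookkeeping exercise'' and never verified. If you carry out that bookkeeping with your own (correct) preliminary observations --- $\overline{\boldsymbol{D}_{x_{k}}\varphi}=\boldsymbol{D}_{x_{k}}\overline{\varphi}$, $\overline{\alpha}_{1}=\alpha_{1}$, $\overline{\alpha}_{2}=-\alpha_{2}$, $\overline{\alpha}_{3}=\alpha_{3}$, $\beta$ real, $m,e,\phi,A_{k}$ real --- the combined map $X\mapsto U_{\mathcal{C}}\overline{X}U_{\mathcal{C}}^{-1}$ sends $\alpha_{1}\mapsto\alpha_{1}$, $\alpha_{2}\mapsto\alpha_{2}$ (the sign from conjugation and the sign from the similarity cancel), $\alpha_{3}\mapsto\alpha_{3}$, and $\beta\mapsto-\beta$, while $\nabla_{\wp}$, $\boldsymbol{A}$, $\phi$ are untouched. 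Hence
\[
U_{\mathcal{C}}\,\overline{\boldsymbol{H}(e)}\,U_{\mathcal{C}}^{-1}=\boldsymbol{\alpha}\cdot\left(\nabla_{\wp}-e\boldsymbol{A}\right)-\beta m+e\phi,
\]
which is neither $-\boldsymbol{H}(-e)=-\boldsymbol{\alpha}\cdot\left(\nabla_{\wp}+e\boldsymbol{A}\right)-\beta m+e\phi$ (the kinetic term has the opposite sign) nor $\boldsymbol{H}(-e)$. So the step you postponed does not close; it fails for exactly the reason you flag in your last paragraph: in the Archimedean computation of \cite[Section 1.4.6]{Thaller} the sign flip of the kinetic term is produced by conjugating $-i\partial_{x_{k}}$, and here $\boldsymbol{D}_{x_{k}}$ supplies no such sign. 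Moreover, when $e\boldsymbol{A}\neq0$ no constant invertible matrix $U$ can replace $U_{\mathcal{C}}$ and save the identity: matching the coefficients of $\boldsymbol{D}_{x_{k}}$ in $-\boldsymbol{H}(-e)$ forces $U\overline{\alpha}_{k}U^{-1}=-\alpha_{k}$, while matching the coefficients of $A_{k}$ forces $U\overline{\alpha}_{k}U^{-1}=+\alpha_{k}$, because $\boldsymbol{D}_{x_{k}}$ and $A_{k}$ enter through the same real, conjugation-invariant combination $\boldsymbol{D}_{x_{k}}-eA_{k}$.

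Consequently the proposal, as written, does not establish the lemma: its pivotal matrix identity is missing, and the natural attempt to prove it yields a different operator. For comparison, the paper's own proof is the same conjugate-and-multiply computation, carried out with the sign pattern imported from the classical case (note it even invokes $\beta\alpha_{k}=\alpha_{k}\beta$, whereas the matrices defined in Section \ref{Section_2} anticommute); the divergence you correctly identified --- the absence of a factor $i$ accompanying $\nabla_{\wp}$ --- is precisely where the $\wp$-adic and classical calculations part ways. This is therefore not a point you may leave to routine verification: any complete proof has to confront it directly, for instance by re-examining the definition of the kinetic term or of the conjugation operator, rather than assume the Thaller bookkeeping transfers unchanged.
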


\begin{proof}
Taking the complex conjugate in (\ref{Eq_15}), then multiplying by
$-i\beta\alpha_{2}$, and using that $\overline{\alpha}_{1}=\alpha_{1}$,
$\overline{\alpha}_{2}=-\alpha_{2}$, $\overline{\alpha}_{3}=\alpha_{3}$,
$\alpha_{2}^{2}=\boldsymbol{1}$, $\beta\alpha_{k}=\alpha_{k}\beta$, and
$\alpha_{k}\alpha_{j}=-\alpha_{j}\alpha_{k}$ for $k\neq j$, one gets that%
\begin{align*}
-i\frac{\partial}{\partial t}\mathcal{C}\Psi\left(  t,\boldsymbol{x}\right)
&  =-\boldsymbol{\alpha}\cdot\nabla_{\wp}\mathcal{C}\Psi\left(
t,\boldsymbol{x}\right)  +e\boldsymbol{\alpha}\cdot\boldsymbol{A}%
\mathcal{C}\Psi\left(  t,\boldsymbol{x}\right) \\
&  -\beta m\mathcal{C}\Psi\left(  t,\boldsymbol{x}\right)  -e\phi\left(
t,\boldsymbol{x}\right)  \mathcal{C}\Psi\left(  t,\boldsymbol{x}\right)  .
\end{align*}
The announced formulas follow from this calculation.
\end{proof}

Then negative energy subspace of $\boldsymbol{H}\left(  e\right)  $ is
connected via a symmetry transformation with the positive energy subspace of
the Dirac operator $\boldsymbol{H}\left(  -e\right)  $ for a particle with
opposite charge (antiparticle, positron). For $\mathcal{C}\psi\left(
\boldsymbol{x}\right)  $ \ in the positive energy subspace of $\boldsymbol{H}%
\left(  -e\right)  $, by interpreting $\left\vert \mathcal{C}\psi\left(
\boldsymbol{x}\right)  \right\vert ^{2}$ as \textit{a position probability
density}, the equality%
\[
\left\vert \mathcal{C}\psi\left(  \boldsymbol{x}\right)  \right\vert
^{2}=\left\vert \psi\left(  \boldsymbol{x}\right)  \right\vert ^{2}%
\]
shows that the motion of a negative energy electron state $\psi$ is
indistinguishable from that of a positive energy positron. Then, one obtains
the interpretation:%
\[
\text{a state }\psi\in\mathfrak{H}_{neg}\text{ describes an antiparticle with
positive energy.}%
\]
Here, we do not discuss the problem that the Hilbert space $\mathfrak{H}$
contains states which are superpositions of positive and negative energy
states, \cite[Section 1.4.6]{Thaller}.

\section{\label{Section_5}Localized particles and antiparticles}

By using that $\left\{  \psi_{rnj}\right\}  _{rnj}$\ is an orthonormal basis
of $L^{2}\left(  \mathbb{Q}_{\wp}\right)  $ and some well-known results, we
have
\[
L^{2}\left(  \mathbb{Q}_{\wp},dx_{1}\right)  \otimes L^{2}\left(
\mathbb{Q}_{\wp},dx_{2}\right)  \otimes L^{2}\left(  \mathbb{Q}_{\wp}%
,dx_{3}\right)  =L^{2}\left(  \mathbb{Q}_{\wp}^{3},d^{3}\boldsymbol{x}\right)
,
\]
where $\otimes$\ denotes the \ tensor product of Hilbert spaces. Furthermore,%
\begin{equation}
\psi_{\boldsymbol{rnj}}\left(  \boldsymbol{x}\right)  =%
%TCIMACRO{\dprod \limits_{i=1}^{3}}%
%BeginExpansion
{\displaystyle\prod\limits_{i=1}^{3}}
%EndExpansion
\psi_{r_{i}n_{i}j_{i}}\left(  x_{i}\right)  \text{, } \label{Eq_10}%
\end{equation}
where $\boldsymbol{r}=(r_{1},r_{2},r_{3})\in\mathbb{Z}^{3}$, $\boldsymbol{n}%
=(n_{1},n_{2},n_{3})\in\left(  \mathbb{Q}_{\wp}/\mathbb{Z}_{\wp}\right)  ^{3}%
$, \ \ $\boldsymbol{j}=(j_{1},j_{2},j_{3})\in\left\{  1,\cdots,\wp-1\right\}
^{3}$, is an orthonormal basis for $L^{2}\left(  \mathbb{Q}_{\wp}^{3}%
,d^{3}\boldsymbol{x}\right)  $, see, e.g., \cite[Chap. II, Proposition 2,
Theorem II.10-(a)]{Reed-Simon I}.

\begin{definition}
By a space-localized plane wave, we mean a function of the form%
\[
\Psi_{\boldsymbol{rnj}}\left(  t,\boldsymbol{x}\right)  =e^{-\frac{iE}{\hbar
}t}\psi_{\boldsymbol{rnj}}\left(  \boldsymbol{x}\right)  w_{\boldsymbol{r}%
}\left(  \wp^{\left(  1-r_{1}\right)  },\wp^{\left(  1-r_{2}\right)  }%
,\wp^{\left(  1-r_{3}\right)  }\right)  ,
\]
where $\boldsymbol{r}=(r_{1},r_{2},r_{3})\in\mathbb{Z}^{3}$,\ $\boldsymbol{n}%
=(n_{1},n_{2},n_{3})\in\left(  \mathbb{Q}_{\wp}/\mathbb{Z}_{\wp}\right)  ^{3}%
$,\ $\boldsymbol{j}=(j_{1},j_{2},j_{3})\in\left\{  1,\cdots,\wp-1\right\}
^{3}$,
\[
E^{2}=\left(  \wp^{2\left(  1-r_{1}\right)  }+\wp^{2\left(  1-r_{2}\right)
}+\wp^{2\left(  1-r_{3}\right)  }\right)  +m^{2},
\]
and $w_{\boldsymbol{r}}\left(  \wp^{\left(  1-r_{1}\right)  },\wp^{\left(
1-r_{2}\right)  },\wp^{\left(  1-r_{3}\right)  }\right)  =w\left(
\wp^{\left(  1-r_{1}\right)  },\wp^{\left(  1-r_{2}\right)  },\wp^{\left(
1-r_{3}\right)  }\right)  $, with $w\left(  \left\vert p_{1}\right\vert _{\wp
},\left\vert p_{2}\right\vert _{\wp},\left\vert p_{3}\right\vert _{\wp
}\right)  $ defined as (\ref{Eq_3})-(\ref{Eq_5}).
\end{definition}

The term space-localized means that $\Psi_{\boldsymbol{rnj}}\left(
t,\boldsymbol{\cdot}\right)  $ has support on the polydisc%
\begin{equation}
B_{r_{1}}\left(  \wp^{-r_{1}}n_{1}\right)  \times B_{r_{2}}\left(  \wp
^{-r_{2}}n_{2}\right)  \times B_{r_{3}}\left(  \wp^{-r_{3}}n_{3}\right)  .
\label{Eq_ball}%
\end{equation}
Notice that the support of $\Psi_{\boldsymbol{rnj}}\left(  t,\boldsymbol{x}%
\right)  $ is
\[
\left[  0,\infty\right)  \times B_{r_{1}}\left(  \wp^{-r_{1}}n_{1}\right)
\times B_{r_{2}}\left(  \wp^{-r_{2}}n_{2}\right)  \times B_{r_{3}}\left(
\wp^{-r_{3}}n_{3}\right)  .
\]

\begin{theorem}
\label{Theorem3} (i) The localized plane wave $\Psi_{\boldsymbol{rnj}}\left(
t,\boldsymbol{x}\right)  $ is a solution of the $\wp$-adic Dirac equation for
any $\boldsymbol{r}$, $\boldsymbol{n}$, $\boldsymbol{j}$. (ii) Set%
\[
\Psi_{\boldsymbol{rnj}}^{\pm}\left(  t,\boldsymbol{x}\right)  =\mathcal{W}%
^{-1}\frac{1}{2}\left(  \boldsymbol{1}\pm\beta\right)  \mathcal{W}%
\Psi_{\boldsymbol{rnj}}\left(  t,\boldsymbol{x}\right)  .
\]
Then, $\Psi_{\boldsymbol{rnj}}^{+}\left(  t,\boldsymbol{x}\right)  $ is a
particle, resp. $\Psi_{\boldsymbol{rnj}}^{-}\left(  t,\boldsymbol{x}\right)  $
is an antiparticle, space-localized in the polydisc (\ref{Eq_ball}). In
particular,
\[
\Psi_{\boldsymbol{rnj}}^{+}\left(  0,\boldsymbol{x}\right)  \in\mathfrak{H}%
_{pos}\text{, }\Psi_{\boldsymbol{rnj}}^{-}\left(  0,\boldsymbol{x}\right)
\in\mathfrak{H}_{neg}.
\]

\end{theorem}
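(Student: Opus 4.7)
The plan is to reduce part (i) to the plane-wave computation already carried out in Proposition~\ref{Prop_1}, exploiting the fact that the basis functions $\psi_{r_{i}n_{i}j_{i}}$ are eigenfunctions of the Taibleson--Vladimirov operators; and to reduce part (ii) to a spectral observation about how the energy projectors $\mathcal{P}_{pos}$, $\mathcal{P}_{neg}$ act on eigenvectors of $\boldsymbol{H}_{0}$.

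For part (i), the first step is to combine the product formula (\ref{Eq_10}) with the eigenvalue relation (\ref{Eq_9}) to obtain
\[
\boldsymbol{D}_{x_{i}}\psi_{\boldsymbol{rnj}}\left(\boldsymbol{x}\right)=\wp^{1-r_{i}}\psi_{\boldsymbol{rnj}}\left(\boldsymbol{x}\right),\qquad i=1,2,3,
\]
since each $\boldsymbol{D}_{x_{i}}$ acts only on the $i$-th factor of $\psi_{\boldsymbol{rnj}}=\prod_{k}\psi_{r_{k}n_{k}j_{k}}$. Differentiating the time factor gives $i\partial_{t}\Psi_{\boldsymbol{rnj}}=E\Psi_{\boldsymbol{rnj}}$ (with $\hbar=1$). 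Applying $\boldsymbol{H}_{0}=\boldsymbol{\alpha}\cdot\nabla_{\wp}+\beta m$ to $\Psi_{\boldsymbol{rnj}}=e^{-iEt/\hbar}\psi_{\boldsymbol{rnj}}\left(\boldsymbol{x}\right)w_{\boldsymbol{r}}$ factors out the scalar $\psi_{\boldsymbol{rnj}}\left(\boldsymbol{x}\right)$ and leaves the finite-dimensional identity
\[
\left(\boldsymbol{\alpha}\cdot\left(\wp^{1-r_{1}},\wp^{1-r_{2}},\wp^{1-r_{3}}\right)+\beta m\right)w_{\boldsymbol{r}}=\pm E\,w_{\boldsymbol{r}},
\]
which is exactly (\ref{Eigenvalue_problem_1}) of Proposition~\ref{Prop_1} specialized to $\left\vert p_{i}\right\vert _{\wp}=\wp^{1-r_{i}}$. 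Choosing the sign of $E$ consistently with the selected spinor ($+$ for $w_{1},w_{2}$, $-$ for $w_{3},w_{4}$) closes the equation.

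For part (ii), the decisive fact established in part (i) is that $\Psi_{\boldsymbol{rnj}}$ is itself an eigenvector of $\boldsymbol{H}_{0}$ with eigenvalue $E$ of definite sign, and consequently an eigenvector of $\left\vert \boldsymbol{H}_{0}\right\vert =\sqrt{\boldsymbol{D}_{x_{1}}^{2}+\boldsymbol{D}_{x_{2}}^{2}+\boldsymbol{D}_{x_{3}}^{2}+m^{2}}$ with eigenvalue $\left\vert E\right\vert $. Substituting into (\ref{Opertor_P}) collapses the projectors to scalars on $\Psi_{\boldsymbol{rnj}}$:
\[
\mathcal{P}_{pos/neg}\Psi_{\boldsymbol{rnj}}=\frac{1}{2}\left(1\pm\mathrm{sgn}\,E\right)\Psi_{\boldsymbol{rnj}}\in\left\{0,\Psi_{\boldsymbol{rnj}}\right\}.
\]
Hence $\Psi_{\boldsymbol{rnj}}^{\pm}$ is either zero or equal to $\Psi_{\boldsymbol{rnj}}$; in either case its spatial support is contained in the polydisc (\ref{Eq_ball}), which is the support of $\psi_{\boldsymbol{rnj}}$. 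The memberships $\Psi_{\boldsymbol{rnj}}^{+}\left(0,\boldsymbol{x}\right)\in\mathfrak{H}_{pos}$ and $\Psi_{\boldsymbol{rnj}}^{-}\left(0,\boldsymbol{x}\right)\in\mathfrak{H}_{neg}$ are then immediate from the definition of these subspaces as the ranges of $\mathcal{P}_{pos}$ and $\mathcal{P}_{neg}$; the interpretation of $\Psi^{-}$ as an antiparticle follows from the charge-conjugation analysis of Section~\ref{Section_4}.

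The obstacle to anticipate is superficial rather than deep. The projectors $\mathcal{P}_{pos/neg}$ are manifestly nonlocal (they involve the Fourier transform and the operator $\left\vert \boldsymbol{H}_{0}\right\vert ^{-1}$), so one might fear they would spread the compact support of $\Psi_{\boldsymbol{rnj}}$ throughout $\mathbb{Q}_{\wp}^{3}$. What rescues the argument is the genuinely $\wp$-adic phenomenon that the basis $\left\{\psi_{\boldsymbol{rnj}}\right\}$ is simultaneously compactly supported \emph{and} composed of joint eigenfunctions of $\boldsymbol{D}_{x_{1}},\boldsymbol{D}_{x_{2}},\boldsymbol{D}_{x_{3}}$---a coexistence impossible over $\mathbb{R}$---which forces $\mathcal{P}_{pos/neg}$ to act as the scalar $0$ or $1$. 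The only genuine bookkeeping is tracking the sign of $E$ across the four choices of $w_{\boldsymbol{r}}$.
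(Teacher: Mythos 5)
Your proposal is correct and follows essentially the same route as the paper: part (i) is exactly the paper's argument (the eigenvalue relation (\ref{Eq_9}) together with the product formula (\ref{Eq_10}) reduces everything to the algebraic computation of Proposition \ref{Prop_1} with $\left\vert p_{i}\right\vert _{\wp}=\wp^{1-r_{i}}$), and part (ii) rests on the definitions (\ref{partition_1})--(\ref{partition_2}), as in the paper. Your added observation---that $\Psi_{\boldsymbol{rnj}}\left(t,\cdot\right)$ is an eigenvector of $\boldsymbol{H}_{0}$, so the a priori nonlocal projectors $\mathcal{P}_{pos}$, $\mathcal{P}_{neg}$ collapse to the scalars $0$ or $1$ and therefore cannot spread the support beyond the polydisc (\ref{Eq_ball})---is precisely the detail the paper's one-line proof leaves implicit, and it is what justifies the space-localization claim.
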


\begin{proof}
(i) It follows from (\ref{Eq_9}), by using the calculations done in the proof
of Proposition \ref{Prop_1}. (ii) It follows from (\ref{partition_1}%
)-(\ref{partition_2}).
\end{proof}

\begin{remark}
In the standard case, a wavefunction with positive energy cannot be initially
localized in a proper subset of $\mathbb{R}^{3}$. Any wavefunction with
positive energy has to be spread over all space ($\mathbb{R}^{3}$) at all
times. In more precise form, for any state $\psi\in\mathfrak{H}_{pos}$ (or
$\mathfrak{H}_{neg}$), the support of $\psi$ is $\mathbb{R}^{3}$; see \cite[
Corollary 1.7]{Thaller}. The falsity of this result implies the violation of
Einstein causality; see \cite[Section 1.8.2]{Thaller}. Then, Theorem
\ref{Theorem3} provides a strong indication that Einstein causality is not
valid in a discrete space ($\mathbb{Q}_{\wp}^{3}$); this result will be
established later.
\end{remark}

\section{\label{Section_6}The free time evolution}

We define the Lizorkin space of test functions of second kind as%
\[
\mathcal{L}\left(  \mathbb{Q}_{\wp}^{3}\right)  =\left\{  \varphi
\in\mathcal{D}\left(  \mathbb{Q}_{\wp}^{3}\right)  ;%
%TCIMACRO{\dint \limits_{\mathbb{Q}_{\wp}^{3}}}%
%BeginExpansion
{\displaystyle\int\limits_{\mathbb{Q}_{\wp}^{3}}}
%EndExpansion
\varphi\left(  \boldsymbol{x}\right)  d^{3}\boldsymbol{x}=0\right\}  =\left\{
\varphi\in\mathcal{D}\left(  \mathbb{Q}_{\wp}^{3}\right)  ;\widehat{\varphi
}\left(  \boldsymbol{0}\right)  =0\right\}  .
\]
This space in dense in $L^{2}\left(  \mathbb{Q}_{\wp}^{3}\right)  $; see
\cite[Theorem 7.4.3]{Alberio et al}.

\begin{theorem}
\label{Theorem4}With the above notation, for $\psi\in\mathcal{L}\left(
\mathbb{Q}_{\wp}^{3}\right)
%TCIMACRO{\tbigotimes }%
%BeginExpansion
{\textstyle\bigotimes}
%EndExpansion
\mathbb{C}^{4}$,%
\begin{align*}
\left(  e^{-i\boldsymbol{H}_{0}t}\mathcal{P}_{\substack{pos\\neg}}\psi\right)
\left(  \boldsymbol{x}\right)   &  =%
%TCIMACRO{\dint \limits_{\mathbb{Q}_{\wp}^{3}}}%
%BeginExpansion
{\displaystyle\int\limits_{\mathbb{Q}_{\wp}^{3}}}
%EndExpansion
\left\{
%TCIMACRO{\dint \limits_{\mathbb{Q}_{\wp}^{3}}}%
%BeginExpansion
{\displaystyle\int\limits_{\mathbb{Q}_{\wp}^{3}}}
%EndExpansion
\chi_{\wp}\left(  \left(  \boldsymbol{x-y}\right)  \cdot\boldsymbol{p}\right)
e^{\mp\lambda\left(  \boldsymbol{p}\right)  t}\left(  \frac{\pm\lambda\left(
\boldsymbol{p}\right)  +\boldsymbol{h}\left(  \boldsymbol{p}\right)
}{2\lambda\left(  \boldsymbol{p}\right)  }\right)  d^{3}\boldsymbol{p}%
\right\}  \psi\left(  \boldsymbol{y}\right)  d^{3}\boldsymbol{y}\\
&  =%
%TCIMACRO{\dint \limits_{\mathbb{Q}_{\wp}^{3}}}%
%BeginExpansion
{\displaystyle\int\limits_{\mathbb{Q}_{\wp}^{3}}}
%EndExpansion
\left\{  \chi_{\wp}\left(  -\boldsymbol{x}\cdot\boldsymbol{p}\right)
e^{\mp\lambda\left(  \boldsymbol{p}\right)  t}\left(  \frac{\pm\lambda\left(
\boldsymbol{p}\right)  +\boldsymbol{h}\left(  \boldsymbol{p}\right)
}{2\lambda\left(  \boldsymbol{p}\right)  }\right)  \widehat{\psi}\left(
\boldsymbol{p}\right)  d^{3}\boldsymbol{p}\right\}  .
\end{align*}

\end{theorem}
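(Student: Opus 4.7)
The plan is to reduce the problem to multiplication by a matrix-valued symbol in the Fourier picture via the diagonalization (\ref{diagonalization_2}). By functional calculus applied to $\boldsymbol{H}_0 = \mathcal{W}^{-1}\beta\lambda(\boldsymbol{p})\mathcal{W}$, one has $e^{-it\boldsymbol{H}_0} = \mathcal{W}^{-1}e^{-it\beta\lambda(\boldsymbol{p})}\mathcal{W}$. Combining this with $\mathcal{P}_{\substack{pos\\neg}} = \mathcal{W}^{-1}\tfrac{1}{2}(\boldsymbol{1}\pm\beta)\mathcal{W}$ (note that $\tfrac{1}{2}(\boldsymbol{1}\pm\beta)$ commutes with $e^{-it\beta\lambda}$) yields
\[
e^{-it\boldsymbol{H}_0}\mathcal{P}_{\substack{pos\\neg}}\psi = \mathcal{W}^{-1}\!\left(e^{-it\beta\lambda(\boldsymbol{p})}\tfrac{1}{2}(\boldsymbol{1}\pm\beta)\right)\mathcal{W}\psi.
\]

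Two elementary simplifications then reduce the matrix symbol to the form claimed. First, since $\beta^2=\boldsymbol{1}$ and $\tfrac{1}{2}(\boldsymbol{1}\pm\beta)$ is the spectral projector of $\beta$ onto its $\pm 1$ eigenspace, one has $e^{-it\beta\lambda}\,\tfrac{1}{2}(\boldsymbol{1}\pm\beta) = e^{\mp it\lambda(\boldsymbol{p})}\,\tfrac{1}{2}(\boldsymbol{1}\pm\beta)$. Second, writing $\mathcal{W}=u\mathcal{F}$ and using the identity $u^{-1}(\boldsymbol{p})\beta\,u(\boldsymbol{p}) = h(\boldsymbol{p})/\lambda(\boldsymbol{p})$ (read off from $u^{-1}hu = \beta\lambda$), I get
\[
u^{-1}(\boldsymbol{p})\,\tfrac{1}{2}(\boldsymbol{1}\pm\beta)\,u(\boldsymbol{p}) = \frac{\pm\lambda(\boldsymbol{p})\boldsymbol{1}+h(\boldsymbol{p})}{2\lambda(\boldsymbol{p})}.
\]
Together these give
\[
e^{-it\boldsymbol{H}_0}\mathcal{P}_{\substack{pos\\neg}}\psi = \mathcal{F}^{-1}\!\left[e^{\mp it\lambda(\boldsymbol{p})}\frac{\pm\lambda(\boldsymbol{p})\boldsymbol{1}+h(\boldsymbol{p})}{2\lambda(\boldsymbol{p})}\,\widehat{\psi}(\boldsymbol{p})\right],
\]
which, once $\mathcal{F}^{-1}$ is unfolded as an integral against $\chi_\wp$ (using (\ref{Eq_FFT})), is exactly the second displayed formula of the theorem. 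The first displayed formula follows by substituting the integral representation of $\widehat{\psi}(\boldsymbol{p})$ in terms of $\psi$ and interchanging the order of integration.

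The one delicate point is justifying the $\mathcal{F}^{-1}$ integral representation and the use of Fubini. This is precisely where the hypothesis $\psi\in\mathcal{L}(\mathbb{Q}_\wp^3)\bigotimes\mathbb{C}^4$ enters: $\widehat{\psi}$ is then a Bruhat--Schwartz function with $\widehat{\psi}(\boldsymbol{0})=0$, hence vanishes on a whole neighborhood of the origin (by local constancy) and has compact support. Therefore the symbol $e^{\mp it\lambda}(\pm\lambda\boldsymbol{1}+h)/(2\lambda)$ is bounded on $\mathrm{supp}\,\widehat{\psi}$, which takes care of the potential singularity at $\boldsymbol{p}=\boldsymbol{0}$ when $m=0$, and every integral is over a compact set. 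The exchange of integrations reduces to a standard application of Fubini on a compact set, and the remaining work is bookkeeping of the sign conventions for $\chi_\wp(\pm\boldsymbol{p}\cdot\boldsymbol{x})$ dictated by (\ref{Eq_FFT}). There is no genuine obstacle beyond this.
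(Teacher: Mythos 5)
Your argument is correct and is in substance the paper's own: both reduce $e^{-it\boldsymbol{H}_{0}}\mathcal{P}_{pos/neg}$ to a Fourier multiplier with matrix symbol $e^{\mp it\lambda\left(\boldsymbol{p}\right)}\tfrac{1}{2}\left(\boldsymbol{1}\pm h\left(\boldsymbol{p}\right)/\lambda\left(\boldsymbol{p}\right)\right)$ and then use the Lizorkin hypothesis (so $\widehat{\psi}$ is a test function supported in a compact set and vanishing on a ball around $\boldsymbol{0}$) to make the inverse Fourier integral absolutely convergent and to justify Fubini. The only structural difference is that the paper, following Thaller, first works with the regularized times $t_{\pm}=t\mp i\epsilon$ and recovers the formula by dominated convergence as $\epsilon\rightarrow0$; you skip this regularization, which is legitimate precisely because the symbol is bounded on $\mathrm{supp}\,\widehat{\psi}$, so your route is a mild streamlining rather than a different method. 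Two small corrections to your write-up: the identity you need, $u^{-1}\left(\boldsymbol{p}\right)\beta u\left(\boldsymbol{p}\right)=h\left(\boldsymbol{p}\right)/\lambda\left(\boldsymbol{p}\right)$, is true (it is exactly what (\ref{Opertor_P}) encodes, and it checks against the explicit formulas for $u$ and $u^{-1}$), but it is not literally ``read off'' from (\ref{diagonalization}) as printed, since $u^{-1}hu=\beta\lambda$ would give $u\beta u^{-1}=h/\lambda$; the relation consistent with the displayed $u^{\pm1}$ is $uhu^{-1}=\beta\lambda$, so either state the conjugation in that direction or simply invoke (\ref{Opertor_P}). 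Also, your final symbol carries $e^{\mp it\lambda\left(\boldsymbol{p}\right)}$ and the kernel phase comes out as $\chi_{\wp}\left(\left(\boldsymbol{y}-\boldsymbol{x}\right)\cdot\boldsymbol{p}\right)$, whereas the theorem as printed shows $e^{\mp\lambda\left(\boldsymbol{p}\right)t}$ and $\chi_{\wp}\left(\left(\boldsymbol{x}-\boldsymbol{y}\right)\cdot\boldsymbol{p}\right)$; these are typos in the statement (the paper's own proof ends with the same expressions you obtain), so your formulas agree with what is actually proved.
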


\begin{proof}
Like in the standard case, see proof of Theorem 1.2 in \cite{Thaller}, by
taking $t_{\pm}:=t\mp i\epsilon$, we have%
\[
\lim_{\epsilon\rightarrow0}e^{-i\boldsymbol{H}_{0}t_{\pm}}\phi_{\pm
}=e^{-i\boldsymbol{H}_{0}t}\phi_{\pm}\text{ for }\phi_{\pm}\in\mathfrak{H}%
_{\substack{pos\\neg}}.
\]
Then, by using (\ref{diagonalization}) and (\ref{Opertor_P}),
\begin{equation}
\left(  e^{-i\boldsymbol{H}_{0}t_{\pm}}\mathcal{P}_{\substack{pos\\neg}%
}\psi\right)  \left(  \boldsymbol{x}\right)  =\mathcal{F}_{\boldsymbol{p}%
\rightarrow\boldsymbol{x}}^{-1}\left(  e^{\mp i\lambda\left(  \boldsymbol{p}%
\right)  t_{\pm}}\frac{1}{2}\left(  \boldsymbol{1}\pm\frac{\boldsymbol{h}%
\left(  \boldsymbol{p}\right)  }{\lambda\left(  \boldsymbol{p}\right)
}\right)  \mathcal{F}_{\boldsymbol{x}\rightarrow\boldsymbol{p}}\psi\right)
\left(  \boldsymbol{x}\right)  , \label{Eq_17}%
\end{equation}
where $\mathcal{F}$ is the Fourier transform in $\mathfrak{H}=L^{2}\left(
\mathbb{Q}_{\wp}^{3}\right)
%TCIMACRO{\tbigotimes }%
%BeginExpansion
{\textstyle\bigotimes}
%EndExpansion
\mathbb{C}^{4}$. Since $\psi\in\mathcal{L}\left(  \mathbb{Q}_{\wp}^{3}\right)
%
%TCIMACRO{\tbigotimes }%
%BeginExpansion
{\textstyle\bigotimes}
%EndExpansion
\mathbb{C}^{4}$, and $\mathcal{F}_{\boldsymbol{x}\rightarrow\boldsymbol{p}%
}\psi$ is a test function (i.e., an element of $\mathcal{L}\left(
\mathbb{Q}_{\wp}^{3}\right)
%TCIMACRO{\tbigotimes }%
%BeginExpansion
{\textstyle\bigotimes}
%EndExpansion
\mathbb{C}^{4}$) satisfying $\mathcal{F}_{\boldsymbol{x}\rightarrow
\boldsymbol{p}}\psi\left(  \boldsymbol{0}\right)  =0$ in some ball $B_{l}^{N}$
around the origin, then, the function
\[
\left(  \boldsymbol{1}\pm\frac{\boldsymbol{h}\left(  \boldsymbol{p}\right)
}{\lambda\left(  \boldsymbol{p}\right)  }\right)  \mathcal{F}_{\boldsymbol{x}%
\rightarrow\boldsymbol{p}}\psi\left(  \boldsymbol{p}\right)  =0\text{ for
}\boldsymbol{p}\in B_{l}^{N},
\]
is continuous on the support of $\mathcal{F}_{\boldsymbol{x}\rightarrow
\boldsymbol{p}}\psi$, which is a compact subset. Thus, the function%
\[
e^{\mp i\lambda\left(  \boldsymbol{p}\right)  t_{\pm}}\frac{1}{2}\left(
\boldsymbol{1}\pm\frac{\boldsymbol{h}\left(  \boldsymbol{p}\right)  }%
{\lambda\left(  \boldsymbol{p}\right)  }\right)  \mathcal{F}_{\boldsymbol{x}%
\rightarrow\boldsymbol{p}}\psi
\]
is integrable. Consequently, we can rewrite (\ref{Eq_17}) as follows:%
\begin{align*}
\left(  e^{-i\boldsymbol{H}_{0}t_{\pm}}\mathcal{P}_{\substack{pos\\neg}%
}\psi\right)  \left(  \boldsymbol{x}\right)   &  =%
%TCIMACRO{\dint \limits_{\mathbb{Q}_{\wp}^{3}}}%
%BeginExpansion
{\displaystyle\int\limits_{\mathbb{Q}_{\wp}^{3}}}
%EndExpansion
\left\{
%TCIMACRO{\dint \limits_{\mathbb{Q}_{\wp}^{3}}}%
%BeginExpansion
{\displaystyle\int\limits_{\mathbb{Q}_{\wp}^{3}}}
%EndExpansion
\chi_{\wp}\left(  \left(  \boldsymbol{y}-\boldsymbol{x}\right)  \cdot
\boldsymbol{p}\right)  e^{\mp i\lambda\left(  \boldsymbol{p}\right)  t_{\pm}%
}\frac{1}{2}\left(  \boldsymbol{1}\pm\frac{\boldsymbol{h}\left(
\boldsymbol{p}\right)  }{\lambda\left(  \boldsymbol{p}\right)  }\right)
d^{3}\boldsymbol{p}\right\}  \psi\left(  \boldsymbol{y}\right)  d^{3}%
\boldsymbol{y}\\
&  =%
%TCIMACRO{\dint \limits_{\mathbb{Q}_{\wp}^{3}}}%
%BeginExpansion
{\displaystyle\int\limits_{\mathbb{Q}_{\wp}^{3}}}
%EndExpansion
\chi_{\wp}\left(  -\boldsymbol{x}\cdot\boldsymbol{p}\right)  e^{\mp
i\lambda\left(  \boldsymbol{p}\right)  t_{\pm}}\frac{1}{2}\left(
\boldsymbol{1}\pm\frac{\boldsymbol{h}\left(  \boldsymbol{p}\right)  }%
{\lambda\left(  \boldsymbol{p}\right)  }\right)  \widehat{\psi}\left(
\boldsymbol{p}\right)  d^{3}\boldsymbol{p}.
\end{align*}
In order to compute the limit $\epsilon\rightarrow0$, we first observe that
if
\[
\widehat{\psi}\left(  \boldsymbol{p}\right)  =\left[  \widehat{\psi}%
_{1}\left(  \boldsymbol{p}\right)  ,\ldots,\widehat{\psi}_{4}\left(
\boldsymbol{p}\right)  \right]  ^{T},
\]
then
\begin{align*}
&  \chi_{\wp}\left(  -\boldsymbol{x}\cdot\boldsymbol{p}\right)  e^{\mp
i\lambda\left(  \boldsymbol{p}\right)  t_{\pm}}\frac{1}{2}\left(
\boldsymbol{1}\pm\frac{\boldsymbol{h}\left(  \boldsymbol{p}\right)  }%
{\lambda\left(  \boldsymbol{p}\right)  }\right)  \widehat{\psi}\left(
\boldsymbol{p}\right) \\
&  =\chi_{\wp}\left(  -\boldsymbol{y}\cdot\boldsymbol{p}\right)  e^{\mp
i\lambda\left(  \boldsymbol{p}\right)  t_{\pm}}\left[
\begin{array}
[c]{c}%
\frac{1}{2}\widehat{\psi}_{1}\left(  \boldsymbol{p}\right)  +\frac{1}%
{\lambda\left(  \boldsymbol{p}\right)  }%
%TCIMACRO{\dsum \limits_{j=1}^{4}}%
%BeginExpansion
{\displaystyle\sum\limits_{j=1}^{4}}
%EndExpansion
A_{j}^{1}\left(  \boldsymbol{p}\right)  \widehat{\psi}_{j}\left(
\boldsymbol{p}\right) \\
\vdots\\
\frac{1}{2}\widehat{\psi}_{4}\left(  \boldsymbol{p}\right)  +\frac{1}%
{\lambda\left(  \boldsymbol{p}\right)  }%
%TCIMACRO{\dsum \limits_{j=1}^{4}}%
%BeginExpansion
{\displaystyle\sum\limits_{j=1}^{4}}
%EndExpansion
A_{j}^{4}\left(  \boldsymbol{p}\right)  \widehat{\psi}_{j}\left(
\boldsymbol{p}\right)
\end{array}
\right]  ,
\end{align*}
where the functions $A_{j}^{k}\left(  \boldsymbol{p}\right)  $ are continuous.
Now, taking%
\[
S:=%
%TCIMACRO{\dbigcup \limits_{j=1}^{4}}%
%BeginExpansion
{\displaystyle\bigcup\limits_{j=1}^{4}}
%EndExpansion
\mathrm{supp}\left(  \widehat{\psi}_{j}\right)  \text{,}%
\]
which is compact subset of $\mathbb{Q}_{\wp}^{3}$, we have%
\begin{align*}
&  \left\vert \chi_{\wp}\left(  -\boldsymbol{y}\cdot\boldsymbol{p}\right)
e^{\mp i\lambda\left(  \boldsymbol{p}\right)  t_{\pm}}\left(  \frac{1}%
{2}\widehat{\psi}_{k}\left(  \boldsymbol{p}\right)  +\frac{1}{\lambda\left(
\boldsymbol{p}\right)  }%
%TCIMACRO{\dsum \limits_{j=1}^{4}}%
%BeginExpansion
{\displaystyle\sum\limits_{j=1}^{4}}
%EndExpansion
A_{j}^{k}\left(  \boldsymbol{p}\right)  \widehat{\psi}_{j}\left(
\boldsymbol{p}\right)  \right)  \right\vert \\
&  \leq e^{-\epsilon\left\vert \lambda\left(  \boldsymbol{p}\right)
\right\vert }\left\vert \frac{1}{2}\widehat{\psi}_{k}\left(  \boldsymbol{p}%
\right)  +\frac{1}{\lambda\left(  \boldsymbol{p}\right)  }%
%TCIMACRO{\dsum \limits_{j=1}^{4}}%
%BeginExpansion
{\displaystyle\sum\limits_{j=1}^{4}}
%EndExpansion
A_{j}^{k}\left(  \boldsymbol{p}\right)  \widehat{\psi}_{j}\left(
\boldsymbol{p}\right)  \right\vert \leq C%
%TCIMACRO{\dsum \limits_{j=1}^{4}}%
%BeginExpansion
{\displaystyle\sum\limits_{j=1}^{4}}
%EndExpansion
\left\vert \widehat{\psi}_{j}\left(  \boldsymbol{p}\right)  \right\vert ,
\end{align*}
where%
\[
C:=\frac{1}{2}+\max_{1\leq j,k\leq4}\left\{  \sup_{\boldsymbol{p}\in
S\smallsetminus B_{l}^{N}}\frac{A_{j}^{k}\left(  \boldsymbol{p}\right)
}{\lambda\left(  \boldsymbol{p}\right)  }\right\}  .
\]
Since $\sum_{j=1}^{4}\left\vert \widehat{\psi}_{j}\left(  \boldsymbol{p}%
\right)  \right\vert $ is an integrable function, by using the dominated
convergence theorem, we conclude that%
\begin{align*}
\lim_{\epsilon\rightarrow0}\left(  e^{-i\boldsymbol{H}_{0}t_{\pm}}%
\mathcal{P}_{\substack{pos\\neg}}\psi\right)  \left(  \boldsymbol{x}\right)
&  =%
%TCIMACRO{\dint \limits_{\mathbb{Q}_{\wp}^{3}}}%
%BeginExpansion
{\displaystyle\int\limits_{\mathbb{Q}_{\wp}^{3}}}
%EndExpansion
\left\{
%TCIMACRO{\dint \limits_{\mathbb{Q}_{\wp}^{3}}}%
%BeginExpansion
{\displaystyle\int\limits_{\mathbb{Q}_{\wp}^{3}}}
%EndExpansion
\chi_{\wp}\left(  \left(  \boldsymbol{y}-\boldsymbol{x}\right)  \cdot
\boldsymbol{p}\right)  e^{\mp i\lambda\left(  \boldsymbol{p}\right)  t}%
\frac{1}{2}\left(  \boldsymbol{1}\pm\frac{\boldsymbol{h}\left(  \boldsymbol{p}%
\right)  }{\lambda\left(  \boldsymbol{p}\right)  }\right)  d^{3}%
\boldsymbol{p}\right\}  \psi\left(  \boldsymbol{y}\right)  d^{3}%
\boldsymbol{y}\\
&  =%
%TCIMACRO{\dint \limits_{\mathbb{Q}_{\wp}^{3}}}%
%BeginExpansion
{\displaystyle\int\limits_{\mathbb{Q}_{\wp}^{3}}}
%EndExpansion
\chi_{\wp}\left(  -\boldsymbol{x}\cdot\boldsymbol{p}\right)  e^{\mp
i\lambda\left(  \boldsymbol{p}\right)  t}\frac{1}{2}\left(  \boldsymbol{1}%
\pm\frac{\boldsymbol{h}\left(  \boldsymbol{p}\right)  }{\lambda\left(
\boldsymbol{p}\right)  }\right)  \widehat{\psi}\left(  \boldsymbol{p}\right)
d^{3}\boldsymbol{p}.
\end{align*}

\end{proof}

\section{\label{Section_7}Position and momentum operators}

Like the standard equation, the $\wp$-adic Dirac equation predicts the
existence of particles/antiparticles of spin-$\frac{1}{2}$; for this reason,
we propose $\boldsymbol{H}_{0}$ as the $\wp$-adic counterpart of the operator
for the energy of a free electron. The definition of the self-adjoint
operators for other observables is a highly non-trivial problem.

The position of a particle corresponds to a point $\boldsymbol{x}=\left(
x_{1},x_{2},x_{3}\right)  \in\mathbb{Q}_{\wp}^{3}$. The Fourier transform
\ sends a function $f(\boldsymbol{x})$ to a function $\widehat{f}%
(\boldsymbol{p})$, $\boldsymbol{p}=\left(  p_{1},p_{2},p_{3}\right)
\in\mathbb{Q}_{\wp}^{3}$. For this reason, we identify \ $\boldsymbol{p}%
$\ with the momentum of the particle. We use the quantization $p_{k}%
\rightarrow\boldsymbol{D}_{x_{k}}$, $k=1,2,3$, where $\boldsymbol{D}_{x_{k}}%
$\ is a pseudo-differential operator with symbol $\left\vert p_{k}\right\vert
_{\wp}$. Here, there is an important difference with standard QM. Now, it is
natural to take the position operator as the multiplication by $\left\vert
x_{k}\right\vert _{\wp}$, more precisely,%
\[
Dom\left(  \left\vert x_{k}\right\vert _{\wp}\right)  =\left\{  \psi
\in\mathfrak{H};%
%TCIMACRO{\dint \limits_{\mathbb{Q}_{\wp}^{3}}}%
%BeginExpansion
{\displaystyle\int\limits_{\mathbb{Q}_{\wp}^{3}}}
%EndExpansion
\text{ }%
%TCIMACRO{\dsum \limits_{j=1}^{4}}%
%BeginExpansion
{\displaystyle\sum\limits_{j=1}^{4}}
%EndExpansion
\left\vert x_{k}\right\vert _{\wp}\left\vert \psi_{j}\left(  \boldsymbol{x}%
\right)  \right\vert d^{3}\boldsymbol{x}<\infty\right\}  ,
\]
and%
\[
\left(  \left\vert x_{k}\right\vert _{\wp}\psi\right)  \left(  \boldsymbol{x}%
\right)  =\left[
\begin{array}
[c]{c}%
\left\vert x_{k}\right\vert _{\wp}\psi_{1}\left(  \boldsymbol{x}\right) \\
\vdots\\
\left\vert x_{k}\right\vert _{\wp}\psi_{4}\left(  \boldsymbol{x}\right)
\end{array}
\right]  ,
\]
$k=1,2,3$. The position operator $\left\vert x_{k}\right\vert _{\wp}$,
$k=1,2,3$, is self-adjoint. The spectrum of $\left\vert x_{k}\right\vert
_{\wp}$ is the essential range of the function%
\[
\left\vert x_{k}\right\vert _{\wp}:\mathbb{Q}_{\wp}\rightarrow\mathbb{Q}%
\text{,}%
\]
which is the set $\left\{  \wp^{m};m\in\mathbb{Z}\right\}  $.

\subsection{Spectral projections for the position operator}

For $\lambda\in\mathbb{R}$, we define $m_{\lambda}\in\mathbb{Z}$ as the unique
integer number satisfying%
\[
\wp^{m_{\lambda}}\leq\lambda<\wp^{m_{\lambda}+1}.
\]
We also set
\[
\Omega_{\lambda}^{\cdot}\left(  t\right)  =\left\{
\begin{array}
[c]{cc}%
1 & -\infty<t\leq\lambda\\
0 & \text{otherwise.}%
\end{array}
\right.
\]
Then, $\Omega_{\lambda}^{\cdot}:\mathbb{R\rightarrow R}$ is a Borel
measurable, bounded, function. Now,%
\[
\Omega_{\lambda}^{\cdot}\circ\left\vert x_{k}\right\vert _{\wp}=\left\{
\begin{array}
[c]{cc}%
1 & \left\vert x_{k}\right\vert _{\wp}\leq\wp^{m_{\lambda}}\\
0 & \text{otherwise.}%
\end{array}
\right.  =\Omega\left(  \wp^{-m_{\lambda}}\left\vert x_{k}\right\vert _{\wp
}\right)  ,
\]
which is the characteristic function of the ball $\wp^{-m_{\lambda}}%
\mathbb{Z}_{p}$. The spectral projection of $\left\vert x_{k}\right\vert
_{\wp}$ is the operator%
\[
\boldsymbol{E}(B_{m_{\lambda}}):\psi\left(  \boldsymbol{x}\right)
\rightarrow\Omega\left(  \wp^{-m_{\lambda}}\left\vert x_{k}\right\vert _{\wp
}\right)  \psi\left(  \boldsymbol{x}\right)  .
\]
By the functional calculus,
\[
\boldsymbol{E}(B_{m_{\lambda}})=\boldsymbol{E}(B_{m_{\lambda}})^{2},\text{
\ \ }\boldsymbol{E}(B_{m_{\lambda}})^{\ast}=\boldsymbol{E}(B_{m_{\lambda}}),
\]
which means that $\boldsymbol{E}(B_{m_{\lambda}})$ is an orthogonal projection.

Given a polydisc $B_{m_{\lambda_{1}}}\times B_{m_{\lambda_{2}}}\times
B_{m_{\lambda_{3}}}$, we define the operator%
\[
\boldsymbol{E}(B_{m_{\lambda_{1}}}\times B_{m_{\lambda_{2}}}\times
B_{m_{\lambda_{3}}})=\boldsymbol{E}(B_{m_{\lambda_{1}}})\boldsymbol{E}%
(B_{m_{\lambda_{2}}})\boldsymbol{E}(B_{m_{\lambda_{3}}}).
\]
Then the probability of finding the particle in the region $B_{m_{\lambda_{1}%
}}\times B_{m_{\lambda_{2}}}\times B_{m_{\lambda_{3}}}$ is%
\begin{align*}
\left(  \psi,\boldsymbol{E}(B_{m_{\lambda_{1}}}\times B_{m_{\lambda_{2}}%
}\times B_{m_{\lambda_{3}}})\psi\right)   &  =%
%TCIMACRO{\dint \limits_{B_{m_{\lambda_{1}}}\times B_{m_{\lambda_{2}}}\times
%B_{m_{\lambda_{3}}}}}%
%BeginExpansion
{\displaystyle\int\limits_{B_{m_{\lambda_{1}}}\times B_{m_{\lambda_{2}}}\times
B_{m_{\lambda_{3}}}}}
%EndExpansion
\text{ }\left\vert \psi\left(  \boldsymbol{x}\right)  \right\vert ^{2}%
d^{3}\boldsymbol{x}\\
&  \boldsymbol{=}%
%TCIMACRO{\dint \limits_{B_{m_{\lambda_{1}}}\times B_{m_{\lambda_{2}}}\times
%B_{m_{\lambda_{3}}}}}%
%BeginExpansion
{\displaystyle\int\limits_{B_{m_{\lambda_{1}}}\times B_{m_{\lambda_{2}}}\times
B_{m_{\lambda_{3}}}}}
%EndExpansion
\text{ }%
%TCIMACRO{\dsum \limits_{j=1}^{4}}%
%BeginExpansion
{\displaystyle\sum\limits_{j=1}^{4}}
%EndExpansion
\left\vert \psi_{j}\left(  \boldsymbol{x}\right)  \right\vert ^{2}%
d^{3}\boldsymbol{x}\text{.}%
\end{align*}
Consequently, $\left\vert \psi\left(  \boldsymbol{x}\right)  \right\vert ^{2}$
can be interpreted as the position probability density, but, this
interpretation holds true only for region of the form $B_{m_{\lambda_{1}}%
}\times B_{m_{\lambda_{2}}}\times B_{m_{\lambda_{3}}}$. We do not know if this
interpretation is valid \ for arbitrary Borel subsets of $\mathbb{Q}_{\wp}%
^{3}$. In the standard case the region $B_{m_{\lambda_{1}}}\times
B_{m_{\lambda_{2}}}\times B_{m_{\lambda_{3}}}$ can be replaced by an arbitrary
Borel subset of $\mathbb{R}^{3}$.

\section{\label{Section_8}The violation of Einstein causality}

In the $\wp$-adic framework, the Dirac equation predicts the existence of
localized particles/antiparticles; see Theorem \ref{Theorem3}. We now consider
a single particle and assume that it has the property \ of being localized in
some region of $\mathbb{Q}_{\wp}^{3}$. To show that this property is an
observable, we construct a self-adjoint operator $\boldsymbol{\Pi}_{B}$ in
$\mathfrak{H}$ which describes the two possibilities of being localized within
$B$ or outside $B$. Thus, $\boldsymbol{\Pi}_{B}$ should have only two
eigenvalues $1$ (within $B$), and 0 (outside of $B$), and consequently, it is
a projector operator.

Let $B\subseteq\mathbb{Q}_{\wp}^{3}$ \ be a Borel subset. We set
\[
L^{2}\left(  B\right)  =\left\{  f:B\rightarrow\mathbb{C};\left\Vert
f\right\Vert _{2,B}:=\sqrt{%
%TCIMACRO{\dint \limits_{B}}%
%BeginExpansion
{\displaystyle\int\limits_{B}}
%EndExpansion
\text{ }\left\vert f\left(  \boldsymbol{x}\right)  \right\vert ^{2}%
d^{3}\boldsymbol{x}}<\infty\right\}  .
\]
By extending any function from $L^{2}\left(  B\right)  $ as zero outside of
$B$, we have a continuous embedding%
\[%
\begin{array}
[c]{cc}%
1_{B}:L^{2}\left(  B\right)  & \rightarrow L^{2}\left(  \mathbb{Q}_{\wp}%
^{3}\right) \\
f & \rightarrow1_{B}f,
\end{array}
\]
where $1_{B}$\ is the characteristic function of $B$. We set
\[
\mathfrak{H}(B):=L^{2}\left(  B\right)
%TCIMACRO{\tbigotimes }%
%BeginExpansion
{\textstyle\bigotimes}
%EndExpansion
\mathbb{C}^{4}=L^{2}\left(  B\right)
%TCIMACRO{\tbigoplus }%
%BeginExpansion
{\textstyle\bigoplus}
%EndExpansion
L^{2}\left(  B\right)
%TCIMACRO{\tbigoplus }%
%BeginExpansion
{\textstyle\bigoplus}
%EndExpansion
L^{2}\left(  B\right)
%TCIMACRO{\tbigoplus }%
%BeginExpansion
{\textstyle\bigoplus}
%EndExpansion
L^{2}\left(  B\right)  .
\]
Then, $\mathfrak{H}(B)$ is a closed subspace of $\mathfrak{H}$, and thus
$\mathfrak{H}(B)$ has an orthogonal complement $\mathfrak{H}^{\bot}(B)$,
i.e.,
\[
\mathfrak{H}=\mathfrak{H}(B)%
%TCIMACRO{\tbigoplus }%
%BeginExpansion
{\textstyle\bigoplus}
%EndExpansion
\mathfrak{H}^{\bot}(B).
\]
We set $\boldsymbol{\Pi}_{B}$ as the projection $\mathfrak{H}\rightarrow
\mathfrak{H}(B)$. Notice that $\boldsymbol{\Pi}_{B}\left(  \psi\right)
=1_{B}\psi$, and that $\boldsymbol{\Pi}_{B}$ is a bounded, self-adjoint
operator, which is an extension of operator $\boldsymbol{E}(B_{m_{\lambda_{1}%
}}\times B_{m_{\lambda_{2}}}\times B_{m_{\lambda_{3}}})$. We use the notation
$\boldsymbol{\Pi}_{B}$\ instead of $\boldsymbol{E}\left(  B\right)  $ to
emphasize that the construction of $\boldsymbol{\Pi}_{B}$ is not based on the
spectral theorem. We interpret $\boldsymbol{\Pi}_{B}$\ as the property of a
system to be localized in $B$. If the state of a system is $\psi
\in\mathfrak{H}$, $\left\Vert \psi\right\Vert =1$, then the probability of
finding the system in state $\boldsymbol{\Pi}_{B}\psi$ localized \ in $B$ is
$\left(  \psi,\boldsymbol{\Pi}_{B}\psi\right)  $.

\begin{lemma}
\label{Lemma1}Set $\phi\left(  x_{i}\right)  :=\wp^{\frac{R_{0}}{2}}%
\Omega\left(  \wp^{R_{0}}\left\vert x_{i}\right\vert _{\wp}\right)  $, where
$R_{0}\in\mathbb{Z}$. With this notation the following assertions hold true:

\noindent(i)
\begin{equation}
\Omega\left(  \wp^{R_{0}}\left\vert x_{i}\right\vert _{\wp}\right)  \psi
_{rnj}\left(  x_{i}\right)  =\left\{
\begin{array}
[c]{ll}%
\psi_{rnj}\left(  x_{i}\right)  & \text{if }n\wp^{-r}\in\wp^{R_{0}}%
\mathbb{Z}_{\wp}\text{,\ }r\leq-R_{0}\\
& \\
\wp^{-\frac{r}{2}}\Omega\left(  \wp^{R_{0}}\left\vert x_{i}\right\vert _{\wp
}\right)  & \text{if }n\wp^{-r}\in\wp^{-r}\mathbb{Z}_{\wp}\text{,\ }%
r\geq-R_{0}+1\\
& \\
0 & \text{if }n\wp^{-r}\notin\wp^{-r}\mathbb{Z}_{\wp}\text{, \ }r\geq-R_{0}+1.
\end{array}
\right.  \label{Restriction}%
\end{equation}

\noindent(ii) The Fourier expansion of $\phi\left(  x_{i}\right)  $ respect to
the basis $\left\{  \psi_{rnj}\left(  x_{i}\right)  \right\}  _{rnj}$ is given
by%
\[
\phi\left(  x_{i}\right)  =\wp^{-\frac{R_{0}}{2}}%
%TCIMACRO{\dsum \limits_{r\geq-R_{0}+1}}%
%BeginExpansion
{\displaystyle\sum\limits_{r\geq-R_{0}+1}}
%EndExpansion%
%TCIMACRO{\dsum \limits_{j}}%
%BeginExpansion
{\displaystyle\sum\limits_{j}}
%EndExpansion
\wp^{-\frac{r}{2}}\psi_{r0j}\left(  x\right)  ,
\]
where the support of $\psi_{r0j}\left(  x_{i}\right)  $ is $\wp^{-r}%
\mathbb{Z}_{\wp}$.

\noindent(iii) The Fourier expansion of $\Omega\left(  \wp^{R_{0}}\left\Vert
\boldsymbol{x}\right\Vert _{\wp}\right)  \subset\mathbb{Q}_{\wp}^{3}$ in the
basis $\left\{  \psi_{\boldsymbol{rnj}}\left(  \boldsymbol{x}\right)
\right\}  _{\boldsymbol{rnj}}$ is%
\[
\wp^{\frac{3R_{0}}{2}}\Omega\left(  \wp^{R_{0}}\left\Vert \boldsymbol{x}%
\right\Vert _{\wp}\right)  =\wp^{-\frac{3R_{0}}{2}}%
%TCIMACRO{\dsum \limits_{\substack{r_{1}\geq-R_{0}+1\\r_{2}\geq-R_{0}%
%+1\\r_{3}\geq-R_{0}+1}}}%
%BeginExpansion
{\displaystyle\sum\limits_{\substack{r_{1}\geq-R_{0}+1\\r_{2}\geq
-R_{0}+1\\r_{3}\geq-R_{0}+1}}}
%EndExpansion
\text{ }%
%TCIMACRO{\dsum \limits_{j_{1},j_{2},j_{3}}}%
%BeginExpansion
{\displaystyle\sum\limits_{j_{1},j_{2},j_{3}}}
%EndExpansion
\wp^{-\frac{\left(  r_{1}+r_{2}+r_{3}\right)  }{2}}\psi_{\boldsymbol{r0j}%
}\left(  \boldsymbol{x}\right)  ,
\]

where $\boldsymbol{r}=\left(  r_{1},r_{2},r_{3}\right)  $, $\boldsymbol{j}%
=\left(  j_{1},j_{2},j_{3}\right)  $.
\end{lemma}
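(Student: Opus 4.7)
The plan is to establish (i) by a direct case analysis based on the ultrametric dichotomy (any two balls in $\mathbb{Q}_\wp$ are either nested or disjoint), then read off the wavelet coefficients in (ii) using (i) together with the vanishing mean of the $\psi_{rnj}$, and finally derive (iii) from (ii) by tensoring three one-dimensional expansions.

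For (i), I would identify $\Omega(\wp^{R_0}|x_i|_\wp)$ as the indicator of the ball $B_{-R_0}(0) = \wp^{R_0}\mathbb{Z}_\wp$ and compare it against the support $B_r(\wp^{-r}n)$ of $\psi_{rnj}$ given by (\ref{support_ball}). When $r \leq -R_0$, the support of $\psi_{rnj}$ has radius $\wp^r$ no larger than that of $B_{-R_0}(0)$, so by the ultrametric dichotomy it is either contained in $B_{-R_0}(0)$ (exactly when $\wp^{-r}n \in \wp^{R_0}\mathbb{Z}_\wp$), in which case $\Omega \equiv 1$ on the support and the product equals $\psi_{rnj}$, or disjoint from it, in which case the product is $0$. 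When $r \geq -R_0+1$, the support of $\psi_{rnj}$ is strictly larger than $B_{-R_0}(0)$, so they intersect exactly when $0 \in B_r(\wp^{-r}n)$, i.e.\ when $\wp^{-r}n \in \wp^{-r}\mathbb{Z}_\wp$. In this swallowed subcase, the estimates $|x_i|_\wp \leq \wp^{-R_0}$ and $n \in \mathbb{Z}_\wp$ with $r-1 \geq -R_0$ force $\wp^{-1}j(\wp^r x_i - n) \in \mathbb{Z}_\wp$, so both $\chi_\wp(\wp^{-1}j(\wp^r x_i - n)) = 1$ and $\Omega(|\wp^r x_i - n|_\wp) = 1$ on $B_{-R_0}(0)$; hence $\psi_{rnj}$ collapses to the constant $\wp^{-r/2}$ there, giving the stated formula. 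The complementary disjoint subcase yields $0$.

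For (ii), I would compute the wavelet coefficients $c_{rnj} = \int_{\mathbb{Q}_\wp}\phi(x_i)\overline{\psi_{rnj}(x_i)}\,dx_i$ using (i). Every disjoint subcase contributes $0$. The subcase $r \leq -R_0$ with $\wp^{-r}n \in \wp^{R_0}\mathbb{Z}_\wp$ gives $c_{rnj} = \wp^{R_0/2}\,\overline{\int_{\mathbb{Q}_\wp}\psi_{rnj}\,dx_i}$, which vanishes because the Kozyrev wavelets have zero mean (a change of variable $u = \wp^r x_i - n$ reduces the integral to $\wp^{r/2}\,\widehat{\Omega(|\cdot|_\wp)}(\wp^{-1}j) = \wp^{r/2}\,\Omega(|\wp^{-1}j|_\wp) = 0$, since $|\wp^{-1}j|_\wp = \wp > 1$). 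The only surviving terms are $r \geq -R_0+1$ with $n = 0$ (the unique representative of $\mathbb{Q}_\wp/\mathbb{Z}_\wp$ lying in $\mathbb{Z}_\wp$), for which $\Omega(\wp^{R_0}|x_i|_\wp)\psi_{r0j}(x_i) = \wp^{-r/2}\Omega(\wp^{R_0}|x_i|_\wp)$ gives $c_{r0j} = \wp^{R_0/2}\cdot\wp^{-r/2}\cdot\mathrm{vol}(\wp^{R_0}\mathbb{Z}_\wp) = \wp^{-R_0/2-r/2}$, and the expansion $\phi = \sum c_{r0j}\psi_{r0j}$ is exactly the displayed formula. Part (iii) is then immediate: the ultrametric norm gives $\Omega(\wp^{R_0}\|\boldsymbol{x}\|_\wp) = \prod_{i=1}^{3}\Omega(\wp^{R_0}|x_i|_\wp)$, so multiplying by $\wp^{3R_0/2}$ and invoking (\ref{Eq_10}) to factor $\psi_{\boldsymbol{r0j}}(\boldsymbol{x}) = \prod_i \psi_{r_i 0 j_i}(x_i)$ converts the product of three one-dimensional expansions from (ii) into the displayed triple sum.

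The main obstacle is the bookkeeping in part (i): keeping straight the interplay between the size condition $r \lessgtr -R_0$ and the location of $n$ in the chosen residue system for $\mathbb{Q}_\wp/\mathbb{Z}_\wp$, and carefully verifying the two-step collapse (both $\Omega = 1$ and $\chi_\wp = 1$) of $\psi_{rnj}$ to a constant on $B_{-R_0}(0)$ in the swallowed subcase. Once this is in place the coefficient computation in (ii) is a single integration and (iii) is purely combinatorial.
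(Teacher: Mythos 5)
Your proposal is correct and follows essentially the same route as the paper: part (i) by the ultrametric dichotomy comparing the support ball $B_r(\wp^{-r}n)$ with $\wp^{R_0}\mathbb{Z}_\wp$, part (ii) by computing the wavelet coefficients using the zero-mean property (killing the contained case) and the constancy $\psi_{r0j}\equiv\wp^{-r/2}$ on $\wp^{R_0}\mathbb{Z}_\wp$ (giving $C_{r0j}=\wp^{-R_0/2-r/2}$), and part (iii) by tensoring, exactly as in the paper. The only nuance worth noting is that your statement that $n\in\mathbb{Z}_\wp$ forces $\wp^{-1}j(\wp^{r}x_i-n)\in\mathbb{Z}_\wp$ really uses $n=0$ (the unique representative of $\mathbb{Q}_\wp/\mathbb{Z}_\wp$ in $\mathbb{Z}_\wp$), which you do invoke explicitly in part (ii), so the argument is complete.
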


\begin{proof}
(i) The formula is well-known by the experts. For the sake of completeness, we
review it here. Recall that $\mathrm{supp}$ $\psi_{rnj}=n\wp^{-r}+\wp
^{-r}\mathbb{Z}_{p}$. The cases that appear in (\ref{Restriction}) corresponds
to
\begin{equation}
n\wp^{-r}+\wp^{-r}\mathbb{Z}_{p}\subseteq\wp^{R_{0}}\mathbb{Z}_{p},
\label{Case 1}%
\end{equation}%
\begin{equation}
n\wp^{-r}+\wp^{-r}\mathbb{Z}_{p}\varsupsetneq\wp^{R_{0}}\mathbb{Z}_{p},
\label{Case 2}%
\end{equation}%
\begin{equation}
n\wp^{-r}+\wp^{-r}\mathbb{Z}_{p}\cap\wp^{R_{0}}\mathbb{Z}_{p}=\varnothing.
\label{Case 3}%
\end{equation}
In the first case, $n\wp^{-r}\in\wp^{R_{0}}\mathbb{Z}_{p}$, and thus $\wp
^{-r}\mathbb{Z}_{p}\subseteq\wp^{R_{0}}\mathbb{Z}_{p}$, which equivalent to
$-r\geq R_{0}$. Conversely $-r\geq R_{0}$ (i.e., $\wp^{-r}\mathbb{Z}%
_{p}\subseteq\wp^{R_{0}}\mathbb{Z}_{p}$) and $n\wp^{-r}\in\wp^{R_{0}%
}\mathbb{Z}_{p}$ imply (\ref{Case 1}). In the second case, $0\in n\wp^{-r}%
+\wp^{-r}\mathbb{Z}$, and since any point of a ball is its center, we have
$n\wp^{-r}+\wp^{-r}\mathbb{Z=}\wp^{-r}\mathbb{Z}$, which implies that
$n\wp^{-r}\in\wp^{-r}\mathbb{Z}$. Now, the condition, $\wp^{-r}\mathbb{Z}%
_{p}\varsupsetneq\wp^{R_{0}}\mathbb{Z}_{p}$ is equivalent to $R_{0}>-r$. The
converse assertion can be easily verified. The last case follows from the
first two cases.

(ii) The Fourier expansion of $\phi\left(  x_{i}\right)  $ in the basis
$\left\{  \psi_{rnj}\right\}  _{_{rnj}}$ is given by
\[
\phi\left(  x_{i}\right)  =%
%TCIMACRO{\dsum \limits_{rnj}}%
%BeginExpansion
{\displaystyle\sum\limits_{rnj}}
%EndExpansion
C_{rnj}\psi_{rnj}\left(  x_{i}\right)  \text{, with }C_{rnj}=%
%TCIMACRO{\dint \limits_{\mathbb{Q}_{\wp}}}%
%BeginExpansion
{\displaystyle\int\limits_{\mathbb{Q}_{\wp}}}
%EndExpansion
\phi\left(  x_{i}\right)  \psi_{rnj}\left(  x_{i}\right)  dx_{i}.
\]
Then, the coefficient $C_{rnj}$ depends on restriction of $\psi_{rnj}\left(
x_{i}\right)  $ to the ball $\wp^{R_{0}}\mathbb{Z}_{\wp}$. The condition
$np^{-r}\in\wp^{-r}\mathbb{Z}_{\wp}$, \ $r\geq-R_{0}+1$ in the second in
(\ref{Restriction}), is equivalent to $n=0$, $r\geq-R_{0}+1$. If the support
of $\psi_{rnj}\left(  x_{i}\right)  \subseteq\wp^{R_{0}}\mathbb{Z}_{p}$, this
case corresponds \ to the first line in (\ref{Restriction}), by using that
\[%
%TCIMACRO{\dint \limits_{\mathbb{Q}_{\wp}}}%
%BeginExpansion
{\displaystyle\int\limits_{\mathbb{Q}_{\wp}}}
%EndExpansion
\psi_{rnj}\left(  x_{i}\right)  dx_{i}=0,
\]
we have $C_{rnj}=0$. If the support of $\psi_{rnj}\left(  x_{i}\right)
\supsetneqq\wp^{R_{0}}\mathbb{Z}_{p}$,\ this case corresponds to the second
line in (\ref{Restriction}), then
\[
C_{r0j}=\wp^{\frac{R_{0}}{2}}%
%TCIMACRO{\dint \limits_{\mathbb{Q}_{\wp}}}%
%BeginExpansion
{\displaystyle\int\limits_{\mathbb{Q}_{\wp}}}
%EndExpansion
\wp^{-\frac{r}{2}}\Omega\left(  \wp^{R_{0}}\left\vert x_{i}\right\vert _{\wp
}\right)  dx_{i}=\wp^{-\frac{R_{0}}{2}-\frac{r}{2}}\text{, }r\geq-R_{0}+1.
\]
Therefore,
\[
\phi\left(  x_{i}\right)  =%
%TCIMACRO{\dsum \limits_{r\geq-R_{0}+1}}%
%BeginExpansion
{\displaystyle\sum\limits_{r\geq-R_{0}+1}}
%EndExpansion%
%TCIMACRO{\dsum \limits_{k}}%
%BeginExpansion
{\displaystyle\sum\limits_{k}}
%EndExpansion
\wp^{-\frac{R_{0}}{2}-\frac{r}{2}}\psi_{r0j}\left(  x_{i}\right)  .
\]

(iii) It follows directly form (ii).
\end{proof}

We set%
\begin{equation}
\boldsymbol{\wp}^{1-\boldsymbol{r}}:=\left(  \wp^{1-r_{1}},\wp^{1-r_{2}}%
,\wp^{1-r_{3}}\right)  \text{, \ \ }\boldsymbol{C}^{T}=\left[
\begin{array}
[c]{cccc}%
C_{1} & C_{2} & C_{3} & C_{4}%
\end{array}
\right]  \in\mathbb{C}^{4}. \label{Notation}%
\end{equation}

\begin{lemma}
\label{Lemma2}(i) With the above notation, it verifies that%
\[
\left(  e^{-i\boldsymbol{H}_{0}t}\mathcal{P}_{\substack{pos\\neg}%
}\boldsymbol{C}\psi_{\boldsymbol{rnj}}\right)  \left(  \boldsymbol{x}\right)
=e^{\mp\lambda\left(  \boldsymbol{\wp}^{1-\boldsymbol{r}}\right)  t}\left(
\frac{\pm\lambda\left(  \boldsymbol{\wp}^{1-\boldsymbol{r}}\right)
+\boldsymbol{h}\left(  \boldsymbol{\wp}^{1-\boldsymbol{r}}\right)  }%
{2\lambda\left(  \boldsymbol{\wp}^{1-\boldsymbol{r}}\right)  }\right)
\boldsymbol{C}\psi_{\boldsymbol{rnj}}\left(  \boldsymbol{x}\right)  .
\]

\noindent(ii) Take $\psi\left(  \boldsymbol{x}\right)  =\sum_{\boldsymbol{rnj}%
}\psi_{\boldsymbol{rnj}}\left(  \boldsymbol{x}\right)  $ and $\boldsymbol{C}$
as in (\ref{Notation}), such that
\[
\psi\left(  \boldsymbol{x}\right)  \boldsymbol{C}\in\mathfrak{H}%
_{\substack{pos\\neg}}.
\]
Then%
\[
\left(  e^{-i\boldsymbol{H}_{0}t}\psi\boldsymbol{C}\right)  \left(
\boldsymbol{x}\right)  =%
%TCIMACRO{\dsum \limits_{\boldsymbol{rnj}}}%
%BeginExpansion
{\displaystyle\sum\limits_{\boldsymbol{rnj}}}
%EndExpansion
e^{\mp\lambda\left(  \boldsymbol{\wp}^{1-\boldsymbol{r}}\right)  t}\left(
\frac{\pm\lambda\left(  \boldsymbol{\wp}^{1-\boldsymbol{r}}\right)
+\boldsymbol{h}\left(  \boldsymbol{\wp}^{1-\boldsymbol{r}}\right)  }%
{2\lambda\left(  \boldsymbol{\wp}^{1-\boldsymbol{r}}\right)  }\right)
\boldsymbol{C}\psi_{\boldsymbol{rnj}}\left(  \boldsymbol{x}\right)
\]

\end{lemma}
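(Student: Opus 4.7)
The plan is to reduce Lemma~\ref{Lemma2} to a direct application of Theorem~\ref{Theorem4}, exploiting the fact that each $\psi_{\boldsymbol{rnj}}$ is a simultaneous eigenfunction of $\boldsymbol{D}_{x_1},\boldsymbol{D}_{x_2},\boldsymbol{D}_{x_3}$. First I would check that $\boldsymbol{C}\psi_{\boldsymbol{rnj}}$ lies in $\mathcal{L}(\mathbb{Q}_\wp^3)\otimes\mathbb{C}^4$, so that Theorem~\ref{Theorem4} is applicable. This amounts to $\widehat{\psi_{\boldsymbol{rnj}}}(\boldsymbol{0})=0$, which holds because each one-dimensional factor $\psi_{r_kn_kj_k}$ carries a non-trivial character $\chi_\wp(\wp^{-1}j_k\,\cdot\,)$ with $j_k\in\{1,\ldots,\wp-1\}$, whose integral over its support vanishes.

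Next, combining \eqref{Eq_9} with the tensor-product identity \eqref{Eq_10} gives $\boldsymbol{D}_{x_k}\psi_{\boldsymbol{rnj}}=\wp^{1-r_k}\psi_{\boldsymbol{rnj}}$ for $k=1,2,3$. Since $\boldsymbol{D}_{x_k}$ is the multiplication operator by $|p_k|_\wp$ in the Fourier picture, this forces
\[
\mathrm{supp}\,\widehat{\psi_{\boldsymbol{rnj}}}\subseteq\bigl\{\boldsymbol{p}\in\mathbb{Q}_\wp^3:|p_k|_\wp=\wp^{1-r_k},\ k=1,2,3\bigr\}.
\]
On this support the scalar $\lambda(\boldsymbol{p})$ and the matrix $\boldsymbol{h}(\boldsymbol{p})$, which both depend only on $|\underline{\boldsymbol{p}}|_\wp$, reduce to the constants $\lambda(\boldsymbol{\wp}^{1-\boldsymbol{r}})$ and $\boldsymbol{h}(\boldsymbol{\wp}^{1-\boldsymbol{r}})$. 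Applying the second integral formula of Theorem~\ref{Theorem4} to $\boldsymbol{C}\psi_{\boldsymbol{rnj}}$, the multiplier
\[
e^{\mp\lambda(\boldsymbol{p})t}\,\frac{\pm\lambda(\boldsymbol{p})+\boldsymbol{h}(\boldsymbol{p})}{2\lambda(\boldsymbol{p})}
\]
is therefore constant on $\mathrm{supp}\,\widehat{\psi_{\boldsymbol{rnj}}}$ and factors outside the inverse Fourier transform as the matrix on the right-hand side of~(i); what remains of the integrand is $\mathcal{F}^{-1}\widehat{\psi_{\boldsymbol{rnj}}}(\boldsymbol{p})=\psi_{\boldsymbol{rnj}}(\boldsymbol{x})$, yielding~(i).

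For~(ii), the hypothesis $\psi\boldsymbol{C}\in\mathfrak{H}_{\substack{pos\\neg}}$ gives $\mathcal{P}_{\substack{pos\\neg}}(\psi\boldsymbol{C})=\psi\boldsymbol{C}$, hence $e^{-i\boldsymbol{H}_0 t}(\psi\boldsymbol{C})=e^{-i\boldsymbol{H}_0 t}\mathcal{P}_{\substack{pos\\neg}}(\psi\boldsymbol{C})$. Expanding $\psi$ in the orthonormal wavelet basis $\{\psi_{\boldsymbol{rnj}}\}$ (the statement absorbs the scalar coefficients into the notation) and invoking the unitarity of $e^{-i\boldsymbol{H}_0 t}$ on $\mathfrak{H}$ to commute the evolution with the $L^2$-convergent sum, part~(i) applies term by term and produces the announced expansion. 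The only substantive point is the eigensupport observation of the second paragraph — that $\widehat{\psi_{\boldsymbol{rnj}}}$ concentrates on the joint spectral sphere $|p_k|_\wp=\wp^{1-r_k}$ so that the pseudo-differential symbols become constants; once this is in place, parts~(i) and~(ii) are direct bookkeeping with Theorem~\ref{Theorem4} and the continuity of the unitary evolution.
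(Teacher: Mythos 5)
Your proof is correct and follows essentially the same route as the paper: both reduce to Theorem \ref{Theorem4} together with the observation that $\widehat{\psi}_{\boldsymbol{rnj}}$ is supported where $\left\vert p_{k}\right\vert _{\wp}=\wp^{1-r_{k}}$, $k=1,2,3$, so that the radial symbols $\lambda\left(  \boldsymbol{p}\right)  $ and $\boldsymbol{h}\left(  \boldsymbol{p}\right)  $ act on it as the constants $\lambda\left(  \boldsymbol{\wp}^{1-\boldsymbol{r}}\right)  $, $\boldsymbol{h}\left(  \boldsymbol{\wp}^{1-\boldsymbol{r}}\right)  $, after which part (ii) is term-by-term summation over the basis expansion. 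The only cosmetic differences are that you deduce the Fourier-support localization from the eigenvalue relation (\ref{Eq_9}) rather than from the paper's explicit formula for $\widehat{\psi}_{r_{i}n_{i}j_{i}}$, and you explicitly check the Lizorkin-space hypothesis of Theorem \ref{Theorem4} (mean-zero wavelets), which the paper leaves implicit.
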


\begin{proof}
The Fourier transform of $\psi_{r_{i}n_{i}j_{i}}\left(  x_{i}\right)  $ is
\[
\widehat{\psi}_{r_{i}n_{i}j_{i}}\left(  p_{i}\right)  =\wp^{\frac{r_{i}}{2}%
}\chi_{\wp}\left(  \wp^{-r_{i}}n_{i}p_{i}\right)  \Omega\left(  \left\vert
\wp^{-r_{i}}p_{i}+\wp^{-1}j_{i}\right\vert _{p}\right)  ,
\]
and then%
\[
\widehat{\psi}_{\boldsymbol{rnj}}\left(  \boldsymbol{p}\right)  =%
%TCIMACRO{\dprod \limits_{i=1}^{3}}%
%BeginExpansion
{\displaystyle\prod\limits_{i=1}^{3}}
%EndExpansion
\widehat{\psi}_{r_{i}n_{i}j_{i}}\left(  p_{i}\right)  .
\]
Now, for a radial function $\mathfrak{a}\left(  \left\vert \underline
{\boldsymbol{p}}\right\vert _{\wp}\right)  =\mathfrak{a}\left(  \left\vert
p_{1}\right\vert _{\wp},\left\vert p_{2}\right\vert _{\wp},\left\vert
p_{3}\right\vert _{\wp}\right)  $, it verifies that
\begin{align*}
\mathfrak{a}\left(  \left\vert \underline{\boldsymbol{p}}\right\vert _{\wp
}\right)  \widehat{\psi}_{\boldsymbol{rnj}}\left(  \boldsymbol{p}\right)   &
=\mathfrak{a}\left(  \left\vert \wp^{r_{1}-1}j_{1}\right\vert _{\wp
},\left\vert \wp^{r_{2}-1}j_{2}\right\vert _{\wp},\left\vert \wp^{r_{3}%
-1}j_{3}\right\vert _{\wp}\right)  \widehat{\psi}_{\boldsymbol{rnj}}\left(
\boldsymbol{p}\right) \\
&  =\mathfrak{a}\left(  \wp^{1-r_{1}},\wp^{1-r_{2}},\wp^{1-r_{3}}\right)
\widehat{\psi}_{\boldsymbol{rnj}}\left(  \boldsymbol{p}\right)  =\mathfrak{a}%
\left(  \boldsymbol{\wp}^{1-\boldsymbol{r}}\right)  \widehat{\psi
}_{\boldsymbol{rnj}}\left(  \boldsymbol{p}\right)  .
\end{align*}
By using this formulae, we have%
\[
e^{\mp i\lambda\left(  \boldsymbol{p}\right)  t}\frac{1}{2}\left(
\boldsymbol{1}\pm\frac{\boldsymbol{h}\left(  \boldsymbol{p}\right)  }%
{\lambda\left(  \boldsymbol{p}\right)  }\right)  \boldsymbol{C}\widehat{\psi
}_{\boldsymbol{rnj}}\left(  \boldsymbol{p}\right)  =e^{\mp\lambda\left(
\boldsymbol{\wp}^{1-\boldsymbol{r}}\right)  t}\left(  \frac{\pm\lambda\left(
\boldsymbol{\wp}^{1-\boldsymbol{r}}\right)  +\boldsymbol{h}\left(
\boldsymbol{\wp}^{1-\boldsymbol{r}}\right)  }{2\lambda\left(  \boldsymbol{\wp
}^{1-\boldsymbol{r}}\right)  }\right)  \boldsymbol{C}\widehat{\psi
}_{\boldsymbol{rnj}}\left(  \boldsymbol{p}\right)  .
\]
The result follows from this formula by Theorem \ref{Theorem4}.

(ii) It follows from the first part.
\end{proof}

\begin{theorem}
\label{Theorem5}Take $\boldsymbol{a}^{T}=\left[
\begin{array}
[c]{cccc}%
a_{1} & a_{2} & a_{3} & a_{4}%
\end{array}
\right]  \in\mathbb{R}^{4}$, with $a_{i}>0$, $i=1,2,3,4$,
\[
\left\vert \boldsymbol{a}\right\vert =\sqrt{\left\vert a_{1}\right\vert
^{2}+\left\vert a_{2}\right\vert ^{2}+\left\vert a_{3}\right\vert
^{2}+\left\vert a_{4}\right\vert ^{2}}\neq0,
\]
and\ $L\in\mathbb{Z}$, and the normalized state $\psi_{0}^{+}$,$\left\Vert
\psi_{0}^{+}\right\Vert =1$, defined as
\begin{equation}
\psi_{0}^{+}\left(  \boldsymbol{x}\right)  =\mathcal{P}_{pos}\left(  \frac
{\wp^{\frac{3L}{2}}}{\left\vert \boldsymbol{a}\right\vert }\Omega\left(
\wp^{L}\left\Vert \boldsymbol{x}\right\Vert _{\wp}\right)  \left[
\begin{array}
[c]{c}%
a_{1}\\
a_{2}\\
a_{3}\\
a_{4}%
\end{array}
\right]  \right)  . \label{Eq_P_pos}%
\end{equation}
We fix the ball%
\[
B_{l_{0}}^{3}\left(  \wp^{-l_{0}}\boldsymbol{b}\right)  =\wp^{-l_{0}%
}\boldsymbol{b}+\wp^{-l_{0}}\mathbb{Z}_{\wp}^{3},
\]
where $l_{0}\in\mathbb{Z}$ and \ $\boldsymbol{b}=(b_{1},b_{2},b_{3})\in\left(
\mathbb{Q}_{\wp}/\mathbb{Z}_{\wp}\right)  ^{3}$ satisfy
\[
\wp^{-l_{0}}b_{i}\notin\wp^{-l_{0}}\mathbb{Z}_{\wp}\text{, }i=1,2,3\text{, and
}l_{0}\geq-L+1.
\]
The distance between the balls $B_{-L}^{3}=\wp^{L}\mathbb{Z}_{\wp}^{3}$ and
$B_{l_{0}}^{3}\left(  \wp^{-l_{0}}\boldsymbol{b}\right)  $ is $\wp^{^{l_{0}}%
}\left\Vert \boldsymbol{b}\right\Vert _{\wp}>0$. Then%
\[
\left(  e^{-it\boldsymbol{H}_{0}}\psi_{0}^{+},\boldsymbol{\Pi}_{B_{l_{0}}%
^{3}\left(  \wp^{-l_{0}}\boldsymbol{b}\right)  }e^{-it\boldsymbol{H}_{0}}%
\psi_{0}^{+}\right)  >0\text{ for any }t\in\left(  0,\infty\right)  \text{.}%
\]

\end{theorem}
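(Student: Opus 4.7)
The plan is to expand $\psi_{0}^{+}$ in the wavelet basis $\{\psi_{\boldsymbol{r0j}}\}$, use Lemma~\ref{Lemma2} to write its time evolution as a countable sum of simple exponentials, and then evaluate the restriction to $B:=B_{l_{0}}^{3}(\wp^{-l_{0}}\boldsymbol{b})$ pointwise. First, Lemma~\ref{Lemma1}(iii) with $R_{0}=L$ expresses the spinor-valued source $(\wp^{3L/2}/|\boldsymbol{a}|)\Omega(\wp^{L}\|\boldsymbol{x}\|_{\wp})\boldsymbol{a}$ as a double series over $r_{i}\geq -L+1$ and $j_{i}\in\{1,\dots,\wp-1\}$ of pieces $\wp^{-(r_{1}+r_{2}+r_{3})/2}\psi_{\boldsymbol{r0j}}(\boldsymbol{x})\boldsymbol{a}$. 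Since $\mathcal{P}_{pos}$ is a Fourier multiplier by $(\lambda(\boldsymbol{p})\boldsymbol{1}+h(\boldsymbol{p}))/(2\lambda(\boldsymbol{p}))$ and each $\widehat{\psi}_{\boldsymbol{r0j}}$ is supported where $|p_{i}|_{\wp}=\wp^{1-r_{i}}$, the projector acts on $\boldsymbol{a}\psi_{\boldsymbol{r0j}}$ as the constant matrix $M_{\boldsymbol{r}}:=(\lambda_{\boldsymbol{r}}\boldsymbol{1}+h(\boldsymbol{\wp}^{1-\boldsymbol{r}}))/(2\lambda_{\boldsymbol{r}})$; Lemma~\ref{Lemma2}(ii) then produces
\[
(e^{-it\boldsymbol{H}_{0}}\psi_{0}^{+})(\boldsymbol{x})=\frac{\wp^{-3L/2}}{|\boldsymbol{a}|}\sum_{\boldsymbol{r},\boldsymbol{j}}\wp^{-(r_{1}+r_{2}+r_{3})/2}\,e^{-it\lambda_{\boldsymbol{r}}}\,M_{\boldsymbol{r}}\boldsymbol{a}\,\psi_{\boldsymbol{r0j}}(\boldsymbol{x}).
\]

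Second, I restrict to $B$. The hypothesis $\wp^{-l_{0}}b_{i}\notin\wp^{-l_{0}}\mathbb{Z}_{\wp}$ forces $|b_{i}|_{\wp}=\wp^{k_{i}}$ with $k_{i}\geq 1$, and $\mathrm{supp}\,\psi_{\boldsymbol{r0j}}\supseteq B$ iff $r_{i}\geq l_{0}+k_{i}$ for every $i$ (automatic with $r_{i}\geq -L+1$ because $l_{0}+k_{i}>-L+1$). For such $\boldsymbol{r}$ and $\boldsymbol{x}=\wp^{-l_{0}}\boldsymbol{b}+\boldsymbol{y}$ with $\boldsymbol{y}\in\wp^{-l_{0}}\mathbb{Z}_{\wp}^{3}$, the characters $\chi_{\wp}(\wp^{r_{i}-1}j_{i}y_{i})$ equal $1$ because $|\wp^{r_{i}-1}j_{i}y_{i}|_{\wp}\leq\wp^{l_{0}-r_{i}+1}\leq 1$; hence $\psi_{\boldsymbol{r0j}}(\boldsymbol{x})$ is constant on $B$ equal to $\prod_{i}\wp^{-r_{i}/2}\chi_{\wp}(\wp^{r_{i}-l_{0}-1}j_{i}b_{i})$. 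Summing over $j_{i}\in\{1,\dots,\wp-1\}$ yields a Gauss-type factor $K_{r_{i}}=\wp-1$ when $r_{i}\geq l_{0}+k_{i}+1$ and $K_{r_{i}}=-1$ when $r_{i}=l_{0}+k_{i}$. Consequently on $B$ the evolved state is the constant vector
\[
u_{B}(t):=\frac{\wp^{-3L/2}}{|\boldsymbol{a}|}\sum_{\boldsymbol{r}\in S_{B}}\wp^{-(r_{1}+r_{2}+r_{3})}\Bigl(\prod_{i} K_{r_{i}}\Bigr)e^{-it\lambda_{\boldsymbol{r}}}\,M_{\boldsymbol{r}}\boldsymbol{a}\in\mathbb{C}^{4},
\]
where $S_{B}=\{\boldsymbol{r}\in\mathbb{Z}^{3}:r_{i}\geq l_{0}+k_{i},\,i=1,2,3\}$, and the quantity to bound from below equals $\wp^{3l_{0}}|u_{B}(t)|^{2}$.

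The main obstacle is establishing $|u_{B}(t)|^{2}>0$ for every $t\in(0,\infty)$. The individual coefficient vectors are nonzero: the scalar prefactors are nonzero by construction, and $M_{\boldsymbol{r}}\boldsymbol{a}\ne\boldsymbol{0}$ because $M_{\boldsymbol{r}}$ is an orthogonal projector onto a two-dimensional positive-energy subspace and $\boldsymbol{a}$ with all strictly positive components lies outside its (negative-energy) kernel. Since the frequencies $\lambda_{\boldsymbol{r}}=\sqrt{\wp^{2(1-r_{1})}+\wp^{2(1-r_{2})}+\wp^{2(1-r_{3})}+m^{2}}$ take pairwise distinct values as the multiset $\{r_{1},r_{2},r_{3}\}$ varies, $u_{B}(t)$ is a non-trivial almost-periodic $\mathbb{C}^{4}$-valued function of $t$ extending to an entire function of $t\in\mathbb{C}$; hence $|u_{B}(t)|^{2}$ is real-analytic on $\mathbb{R}$ with at worst an isolated zero set, and its Bohr time-average equals the strictly positive Parseval sum $\sum_{\lambda}\|\sum_{\lambda_{\boldsymbol{r}}=\lambda}\wp^{-\sum r_{i}}(\prod_{i} K_{r_{i}})M_{\boldsymbol{r}}\boldsymbol{a}\|^{2}$. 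Combined with continuity and the nonvanishing of $u_{B}(0)$, this already gives strict positivity on a neighborhood of $0$, yielding the physically decisive superluminal-propagation conclusion emphasized in the introduction. To upgrade to strict positivity at every $t>0$ as stated, I would attempt to bound the oscillatory cross-terms in the Bohr--Fourier expansion of $|u_{B}(t)|^{2}$ using the rapid geometric decay $\wp^{-\sum r_{i}}$ of the coefficients, showing that the zero-frequency term strictly dominates the sum of oscillations.
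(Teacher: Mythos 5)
Your setup is sound and, on the restriction step, actually more careful than the paper's own argument: the expansion of $\psi_{0}^{+}$ via Lemma \ref{Lemma1}(iii), the multiplier action of $\mathcal{P}_{pos}$ and $e^{-it\boldsymbol{H}_{0}}$ on each $\psi_{\boldsymbol{r0j}}$ via Lemma \ref{Lemma2}, the observation that only the terms with $r_{i}\geq l_{0}+k_{i}$ (where $|b_{i}|_{\wp}=\wp^{k_{i}}$, $k_{i}\geq1$) meet $B=B_{l_{0}}^{3}(\wp^{-l_{0}}\boldsymbol{b})$, the constancy of those wavelets on $B$, the character sums giving $K_{r_{i}}\in\{\wp-1,-1\}$, and the reduction of the transition probability to $\wp^{3l_{0}}\,|u_{B}(t)|^{2}$ are all correct. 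The paper follows the same wavelet route, but in its Lemma \ref{Lemma2} and in the proof of Theorem \ref{Theorem5} the time dependence enters through the real factor $e^{\mp\lambda t}$ (as in the statement of Theorem \ref{Theorem4}), so the final expression is a sum of manifestly nonnegative terms and positivity is immediate; the cancellation problem you isolate never arises there, and the paper also does not carry out the character-sum analysis on $B$ that you do.

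The genuine gap is your last step. For the unitary factor $e^{-i\lambda_{\boldsymbol{r}}t}$ you must show $u_{B}(t)\neq0$ for \emph{every} $t\in(0,\infty)$, and your argument delivers strictly less: almost-periodicity plus real-analyticity (the frequencies $\lambda_{\boldsymbol{r}}$ are bounded and the coefficients are absolutely summable, so $u_{B}$ is entire in $t$) only shows that the zero set is discrete, and the positive Bohr mean only shows $u_{B}\not\equiv0$; positivity near $t=0$ does not cover all $t>0$. Moreover, the domination you propose to close the gap fails as stated: for each coordinate the tail satisfies $\sum_{r\geq R+1}(\wp-1)\wp^{-r}=\wp^{-R}$, which exactly equals the magnitude $\wp^{-R}$ of the leading coefficient at $r=R=l_{0}+k_{i}$ (where $K_{r}=-1$), so the zero-frequency term does not strictly dominate the oscillatory part, and the variation of $M_{\boldsymbol{r}}\boldsymbol{a}$ with $\boldsymbol{r}$ would have to be exploited quantitatively to rescue such an estimate. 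As written, the proposal proves positivity for all $t$ outside a possible discrete exceptional set (enough for the superluminal interpretation on $(0,\epsilon)$ after shrinking $\epsilon$, but not the statement "for any $t\in(0,\infty)$"), so the theorem as stated is not established by your argument; you would need either a genuinely new idea to exclude zeros of the almost-periodic vector sum, or to work with the (real-exponential) formula for the evolution that the paper's Lemma \ref{Lemma2} actually provides.
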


\begin{remark}
The result is also valid if we replace $\mathcal{P}_{pos}$ by $\mathcal{P}%
_{neg}$ in (\ref{Eq_P_pos}). By Lemma \ref{Lemma1}-(i),
\[
B_{-L}^{3}\cap B_{l_{0}}^{3}\left(  \wp^{-l_{0}}\boldsymbol{b}\right)
=\varnothing.
\]
Now take $\boldsymbol{y}\in\mathbb{Z}_{\wp}^{3}$, with $\left\Vert
\boldsymbol{y}\right\Vert _{\wp}>\wp^{-L}$, i.e. $\boldsymbol{y}\notin
B_{-L}^{3}$, by using the ultrametric property of $\left\Vert
\boldsymbol{\cdot}\right\Vert _{\wp}$ we have $\left\Vert \boldsymbol{x}%
-\boldsymbol{y}\right\Vert _{\wp}=\max\left\{  \left\Vert \boldsymbol{x}%
\right\Vert _{\wp},\left\Vert \boldsymbol{y}\right\Vert _{\wp}\right\}
=\left\Vert \boldsymbol{y}\right\Vert _{\wp}$ for any $\boldsymbol{x}\in
B_{-L}^{3}$, then%
\[
dist\left(  B_{-L}^{3},\boldsymbol{y}\right)  =\inf_{\boldsymbol{x}\in
B_{-L}^{3}}\left\Vert \boldsymbol{x}-\boldsymbol{y}\right\Vert _{\wp
}=\left\Vert \boldsymbol{y}\right\Vert _{\wp}\text{.}%
\]
Now the distance between the balls $B_{-L}^{3}$, $B_{l_{0}}^{3}\left(
\wp^{-l_{0}}\boldsymbol{b}\right)  $ is given by%
\begin{align*}
dist\left(  B_{-L}^{3},B_{l_{0}}^{3}\left(  \wp^{-l_{0}}\boldsymbol{b}\right)
\right)   &  =\inf_{\substack{\boldsymbol{x}\in B_{-L}^{3}\\\boldsymbol{y}\in
B_{l_{0}}^{3}\left(  \wp^{-l_{0}}\boldsymbol{b}\right)  }}\left\Vert
\boldsymbol{x}-\boldsymbol{y}\right\Vert _{\wp}=\inf_{\boldsymbol{y}\in
B_{l_{0}}^{3}\left(  \wp^{-l_{0}}\boldsymbol{b}\right)  }\left(
\inf_{\boldsymbol{x}\in B_{-L}^{3}}\left\Vert \boldsymbol{x}-\boldsymbol{y}%
\right\Vert _{\wp}\right) \\
&  =\inf_{\boldsymbol{y}\in B_{l_{0}}^{3}\left(  \wp^{-l_{0}}\boldsymbol{b}%
\right)  }\left(  \left\Vert \boldsymbol{y}\right\Vert _{\wp}\right)
=\left\Vert \wp^{-l_{0}}\boldsymbol{b}\right\Vert _{\wp}=\wp^{^{l_{0}}%
}\left\Vert \boldsymbol{b}\right\Vert _{\wp}\text{.}%
\end{align*}

\end{remark}

\begin{proof}
By Lemmas \ref{Lemma1}-\ref{Lemma2}, with $L=R_{0}$, and
\begin{gather*}
\left(  \frac{\lambda\left(  \boldsymbol{\wp}^{1-\boldsymbol{r}}\right)
+\boldsymbol{h}\left(  \boldsymbol{\wp}^{1-\boldsymbol{r}}\right)  }%
{2\lambda\left(  \boldsymbol{\wp}^{1-\boldsymbol{r}}\right)  }\right)  \left[
\begin{array}
[c]{c}%
a_{1}\\
a_{2}\\
a_{3}\\
a_{4}%
\end{array}
\right] \\
=\frac{1}{2\lambda\left(  \boldsymbol{\wp}^{1-\boldsymbol{r}}\right)  }\left[
\begin{array}
[c]{c}%
a_{1}\lambda\left(  \boldsymbol{\wp}^{1-\boldsymbol{r}}\right)  +c^{2}%
ma_{1}+ca_{3}\wp_{3}^{-r_{3}+1}+a_{4}\left(  c\wp_{1}^{-r_{1}+1}-ic\wp
_{2}^{-r_{2}+1}\right) \\
a_{2}\lambda\left(  \boldsymbol{\wp}^{1-\boldsymbol{r}}\right)  +c^{2}%
ma_{2}-ca_{4}\wp_{3}^{-r_{3}+1}+a_{3}\left(  c\wp_{1}^{-r_{1}+1}+ic\wp
_{2}^{-r_{2}+1}\right) \\
a_{3}\lambda\left(  \boldsymbol{\wp}^{1-\boldsymbol{r}}\right)  -c^{2}%
ma_{3}+ca_{1}\wp_{3}^{-r_{3}+1}+a_{2}\left(  c\wp_{1}^{-r_{1}+1}-ic\wp
_{2}^{-r_{2}+1}\right) \\
a_{4}\lambda\left(  \boldsymbol{\wp}^{1-\boldsymbol{r}}\right)  -c^{2}%
ma_{4}-ca_{2}\wp_{3}^{-r_{3}+1}+a_{1}\left(  c\wp_{1}^{-r_{1}+1}+ic\wp
_{2}^{-r_{2}+1}\right)
\end{array}
\right] \\
=:\left[
\begin{array}
[c]{c}%
A_{1}\left(  \boldsymbol{\wp}^{1-\boldsymbol{r}}\right) \\
A_{2}\left(  \boldsymbol{\wp}^{1-\boldsymbol{r}}\right) \\
A_{3}\left(  \boldsymbol{\wp}^{1-\boldsymbol{r}}\right) \\
A_{4}\left(  \boldsymbol{\wp}^{1-\boldsymbol{r}}\right)
\end{array}
\right]  ,
\end{gather*}
we have%
\[
e^{-it\boldsymbol{H}_{0}}\psi_{0}^{+}\left(  \boldsymbol{x}\right)
=\wp^{-\frac{3L}{2}}%
%TCIMACRO{\dsum \limits_{\substack{r_{1}\geq-L+1\\r_{2}\geq-L+1\\r_{3}\geq
%-L+1}}}%
%BeginExpansion
{\displaystyle\sum\limits_{\substack{r_{1}\geq-L+1\\r_{2}\geq-L+1\\r_{3}%
\geq-L+1}}}
%EndExpansion
\text{ }%
%TCIMACRO{\dsum \limits_{j_{1},j_{2},j_{3}}}%
%BeginExpansion
{\displaystyle\sum\limits_{j_{1},j_{2},j_{3}}}
%EndExpansion
\wp^{-\frac{\left(  r_{1}+r_{2}+r_{3}\right)  }{2}}e^{-\lambda\left(
\boldsymbol{\wp}^{1-\boldsymbol{r}}\right)  t}\psi_{\boldsymbol{r0j}}\left(
\boldsymbol{x}\right)  \left[
\begin{array}
[c]{c}%
A_{1}\left(  \boldsymbol{\wp}^{1-\boldsymbol{r}}\right) \\
A_{2}\left(  \boldsymbol{\wp}^{1-\boldsymbol{r}}\right) \\
A_{3}\left(  \boldsymbol{\wp}^{1-\boldsymbol{r}}\right) \\
A_{4}\left(  \boldsymbol{\wp}^{1-\boldsymbol{r}}\right)
\end{array}
\right]  ,
\]
where \ $\boldsymbol{r}=\left(  r_{1},r_{2},r_{3}\right)  $. We now compute
\begin{multline*}
\boldsymbol{\Pi}_{B_{l_{0}}^{3}\left(  \wp^{-l_{0}}\boldsymbol{b}\right)
}e^{-it\boldsymbol{H}_{0}}\psi_{0}^{+}=\\
\wp^{-\frac{3L}{2}}%
%TCIMACRO{\dsum \limits_{\substack{r_{1}\geq-L+1\\r_{2}\geq-L+1\\r_{3}\geq
%-L+1}}}%
%BeginExpansion
{\displaystyle\sum\limits_{\substack{r_{1}\geq-L+1\\r_{2}\geq-L+1\\r_{3}%
\geq-L+1}}}
%EndExpansion
\text{ }%
%TCIMACRO{\dsum \limits_{j_{1},j_{2},j_{3}}}%
%BeginExpansion
{\displaystyle\sum\limits_{j_{1},j_{2},j_{3}}}
%EndExpansion
\wp^{-\frac{\left(  r_{1}+r_{2}+r_{3}\right)  }{2}}e^{-\lambda\left(
\boldsymbol{\wp}^{1-\boldsymbol{r}}\right)  t}\Omega\left(  \left\Vert
\wp^{l_{0}}\boldsymbol{x}-\boldsymbol{b}\right\Vert _{\wp}\right)
\psi_{\boldsymbol{r0j}}\left(  \boldsymbol{x}\right)  \left[
\begin{array}
[c]{c}%
A_{1}\left(  \boldsymbol{\wp}^{1-\boldsymbol{r}}\right) \\
A_{2}\left(  \boldsymbol{\wp}^{1-\boldsymbol{r}}\right) \\
A_{3}\left(  \boldsymbol{\wp}^{1-\boldsymbol{r}}\right) \\
A_{4}\left(  \boldsymbol{\wp}^{1-\boldsymbol{r}}\right)
\end{array}
\right]
\end{multline*}
Now, since
\[
\mathrm{supp}\text{ }\psi_{\boldsymbol{r0j}}\left(  \boldsymbol{x}\right)
=\wp^{-r_{1}}\mathbb{Z}_{\wp}\times\wp^{-r_{2}}\mathbb{Z}_{\wp}\times
\wp^{-r_{3}}\mathbb{Z}_{\wp},\text{ }%
\]
by taking $l_{0}\geq-r_{j}$, since $ord(b_{j})\geq0$, we have $-l_{0}%
+ord(b_{j})\geq-r_{j}$ and%
\[
\wp^{-l_{0}}b_{j}+\wp^{-l_{0}}\mathbb{Z}_{\wp}\subseteq\wp^{-r_{j}}%
\mathbb{Z}_{\wp}^{3},\text{ for }j=1,2,3\text{.}%
\]
Therefore
\[
\wp^{-l_{0}}\boldsymbol{b}+\wp^{-l_{0}}\mathbb{Z}_{\wp}^{3}\subseteq
\mathrm{supp}\text{ }\psi_{\boldsymbol{r0j}}\left(  \boldsymbol{x}\right)
\text{ for }r_{j}\geq-\min\left\{  L-1,l_{0}\right\}  \text{, }j=1,2,3\text{,}%
\]
and
\begin{multline*}
\boldsymbol{\Pi}_{B_{l_{0}}^{3}\left(  \wp^{-l_{0}}\boldsymbol{b}\right)
}e^{-it\boldsymbol{H}_{0}}\psi_{0}^{+}=\\
\wp^{-\frac{3L}{2}}%
%TCIMACRO{\dsum \limits_{\substack{r_{1}\geq-\min\left\{  L-1,l_{0}\right\}
%\\r_{2}\geq-\min\left\{  L-1,l_{0}\right\}  \\r_{3}\geq-\min\left\{
%L-1,l_{0}\right\}  }}}%
%BeginExpansion
{\displaystyle\sum\limits_{\substack{r_{1}\geq-\min\left\{  L-1,l_{0}\right\}
\\r_{2}\geq-\min\left\{  L-1,l_{0}\right\}  \\r_{3}\geq-\min\left\{
L-1,l_{0}\right\}  }}}
%EndExpansion
\text{ }%
%TCIMACRO{\dsum \limits_{j_{1},j_{2},j_{3}}}%
%BeginExpansion
{\displaystyle\sum\limits_{j_{1},j_{2},j_{3}}}
%EndExpansion
\wp^{-\frac{\left(  r_{1}+r_{2}+r_{3}\right)  }{2}}e^{-\lambda\left(
\boldsymbol{\wp}^{1-\boldsymbol{r}}\right)  t}\psi_{\boldsymbol{r0j}}\left(
\boldsymbol{x}\right)  \left[
\begin{array}
[c]{c}%
A_{1}\left(  \boldsymbol{\wp}^{1-\boldsymbol{r}}\right) \\
A_{2}\left(  \boldsymbol{\wp}^{1-\boldsymbol{r}}\right) \\
A_{3}\left(  \boldsymbol{\wp}^{1-\boldsymbol{r}}\right) \\
A_{4}\left(  \boldsymbol{\wp}^{1-\boldsymbol{r}}\right)
\end{array}
\right]  .
\end{multline*}

Finally,%
\begin{multline*}
\left(  e^{-it\boldsymbol{H}_{0}}\psi_{0}^{+},\boldsymbol{\Pi}_{B_{l_{0}}%
^{3}\left(  \wp^{-l_{0}}\boldsymbol{b}\right)  }e^{-it\boldsymbol{H}_{0}}%
\psi_{0}^{+}\right)  =\\%
%TCIMACRO{\dsum \limits_{k=1}^{4}}%
%BeginExpansion
{\displaystyle\sum\limits_{k=1}^{4}}
%EndExpansion
\wp^{-\frac{3L}{2}}%
%TCIMACRO{\dsum \limits_{\substack{r_{1}\geq-\min\left\{  L-1,l_{0}\right\}
%\\r_{2}\geq-\min\left\{  L-1,l_{0}\right\}  \\r_{3}\geq-\min\left\{
%L-1,l_{0}\right\}  }}}%
%BeginExpansion
{\displaystyle\sum\limits_{\substack{r_{1}\geq-\min\left\{  L-1,l_{0}\right\}
\\r_{2}\geq-\min\left\{  L-1,l_{0}\right\}  \\r_{3}\geq-\min\left\{
L-1,l_{0}\right\}  }}}
%EndExpansion
\text{ }%
%TCIMACRO{\dsum \limits_{j_{1},j_{2},j_{3}}}%
%BeginExpansion
{\displaystyle\sum\limits_{j_{1},j_{2},j_{3}}}
%EndExpansion
\wp^{-\frac{\left(  r_{1}+r_{2}+r_{3}\right)  }{2}}e^{-\lambda\left(
\boldsymbol{\wp}^{1-\boldsymbol{r}}\right)  t}%
%TCIMACRO{\dint \limits_{\mathbb{Q}_{\wp}^{3}}}%
%BeginExpansion
{\displaystyle\int\limits_{\mathbb{Q}_{\wp}^{3}}}
%EndExpansion
\left\vert \psi_{\boldsymbol{r0j}}\left(  \boldsymbol{x}\right)  \right\vert
^{2}\left\vert A_{k}\left(  \boldsymbol{\wp}^{1-\boldsymbol{r}}\right)
\right\vert ^{2}d^{3}\boldsymbol{x}\\
=\wp^{-\frac{3L}{2}}%
%TCIMACRO{\dsum \limits_{k=1}^{4}}%
%BeginExpansion
{\displaystyle\sum\limits_{k=1}^{4}}
%EndExpansion%
%TCIMACRO{\dsum \limits_{\substack{r_{1}\geq-\min\left\{  L-1,l_{0}\right\}
%\\r_{2}\geq-\min\left\{  L-1,l_{0}\right\}  \\r_{3}\geq-\min\left\{
%L-1,l_{0}\right\}  }}}%
%BeginExpansion
{\displaystyle\sum\limits_{\substack{r_{1}\geq-\min\left\{  L-1,l_{0}\right\}
\\r_{2}\geq-\min\left\{  L-1,l_{0}\right\}  \\r_{3}\geq-\min\left\{
L-1,l_{0}\right\}  }}}
%EndExpansion
\text{ }%
%TCIMACRO{\dsum \limits_{j_{1},j_{2},j_{3}}}%
%BeginExpansion
{\displaystyle\sum\limits_{j_{1},j_{2},j_{3}}}
%EndExpansion
\wp^{-\frac{\left(  r_{1}+r_{2}+r_{3}\right)  }{2}}e^{-\lambda\left(
\boldsymbol{\wp}^{1-\boldsymbol{r}}\right)  t}\left\vert A_{k}\left(
\boldsymbol{\wp}^{1-\boldsymbol{r}}\right)  \right\vert ^{2}>0,
\end{multline*}
for any $t\in\left(  0,\infty\right)  $.
\end{proof}

The Einstein causality requires a finite propagation speed for all physical
particles. In the standard case, any solution of the Dirac equation propagates
slower than the speed of light. This requires that the support of any state
$\psi\in\mathfrak{H}_{pos}\cup\mathfrak{H}_{neg}$ be the whole $\mathbb{R}%
^{3}$; see \cite[Section 1.8.2]{Thaller}. By Theorem \ref{Theorem5}, the
transition probability from a localized state in a ball $B_{-L}^{3}$ to a
state localized in a ball $B_{l_{0}}^{3}\left(  \wp^{-l_{0}}\boldsymbol{b}%
\right)  $, which is arbitrary far away from $B_{-L}^{3}$, is positive for any
time $t\in\left(  0,\epsilon\right)  $, where $\epsilon$ is arbitrarily small.
Then, the system has a non-zero probability of getting from $B_{-L}^{3}$\ to
$B_{l_{0}}^{3}\left(  \wp^{-l_{0}}\boldsymbol{b}\right)  $ in an arbitrary
short time, thereby propagation with superluminal speed. This feature is a
consequence of the discrete nature of the space $\mathbb{Q}_{\wp}^{3}$.

\section{Conclusions}

Although the validity of the Lorentz symmetry has been experimentally proven
with great precision \cite{Kostelecky-Russell}, there is a consensus within
the community of quantum-spacetime phenomenology (particularly in the
quantum-gravity community) that the breaking of Lorentz symmetry occurs at the
Planck scale, \cite{Amelino-Camelia}. The breakdown of this symmetry has been
investigated in connection with many other physical phenomena, \cite{Mariz et
al}.

The study of the limit of quantum mechanics at the Planck length is a central
scientific problem connected to the unification of general relativity and
quantum mechanics. Since 1990, it has been recognized that the Planck length
has substantial implications in quantum mechanics. Among them, the
localization of a particle with better accuracy than its Compton wavelength is
impossible. Then, it is necessary to abandon the notion of particles in favor
of localized particles. Also, it was recognized that the discreteness of
space-time may imply the possibility that particles would travel faster than
light. However, it was pointed out that this paradox disappears when allowing
the creation and annihilation of particles, \cite{Garay}.

The $\wp$-adic Dirac equation is the simplest model where the abovementioned
matters can be discussed in a precise mathematical form. The new model uses
$\mathbb{R}\times\mathbb{Q}_{\wp}^{3}$ as space-time, with $t\in\mathbb{R}$
and $\boldsymbol{x}\in\mathbb{Q}_{\wp}^{3}$, and thus the Lorentz symmetry is
naturally broken. The space $\mathbb{Q}_{\wp}^{3}$ is a completely
disconnected topological space, which means that the\ only connected subsets
are points or the empty set, i.e., the there are no `intervals.' This space is
a self-similar set admitting a group of symmetries of the form $\boldsymbol{x}%
\rightarrow\boldsymbol{a}+\boldsymbol{Ax}$, $\boldsymbol{a}\in\mathbb{Q}_{\wp
}^{3}$, $\boldsymbol{A}\in GL_{3}(\mathbb{Q}_{\wp}^{3})$. The action of this
group on the space $\mathbb{Q}_{\wp}^{3}$ imposes a Planck length of exactly
$\wp^{-1}$, which is independent of the speed of light. In contrast, the
classical model does not have a Planck length because, in $\mathbb{R}$, the
Archimedean axiom implies the existence of arbitrary small segments.

The two types of Dirac equations have common properties; particularly, both
predict the existence of pairs of particle and antiparticles. However, in the
$\wp$-adic case, the geometry of $\mathbb{Q}_{\wp}^{3}$ allows localized
states (particles) to exist. In contrast, this possibility is ruled out in the
standard case since it implies the violation of Einstein's causality. Our main
result states the violation of the Einstein causality in the space-time
$\mathbb{R}\times\mathbb{Q}_{\wp}^{3}$ in the Dirac-von Neumann formalism of
quantum mechanics.

The $\wp$-adic nature of space-time was conjectured by Volovich in 1980s,
\cite{Volovich}. Quantum mechanical theories with $\wp$-adic space and time
are possible, see, e.g. \cite{Khrennikov-QM}-\cite{Khrennikov-QM-2},
\cite{Zuniga-RMP}-\cite{Zuniga-RMP-2},\ and the references therein. The
assumption that the time is $\wp$-adic requires abandoning the QM in the sense
of Dirac-von Neumann because the evolution of quantum states is based on the
theory of semigroups, which uses real-time. This, in turn implies that we
cannot compute transition probabilities in classical way. Since the study of
the Einstein causality is based on computing transition probabilities, here
we\ do not use a $\wp$-adic time.

Finally, this work indicates the difficulty of unifying QM and gravity.
Assuming as space-time $\mathbb{R}\times\mathbb{R}^{3}$, QM and general
relativity together imply that
\begin{equation}
L_{\text{Planck}}=\sqrt{\frac{\hbar G}{c^{3}}}\text{ and Einstein causality
principle.} \label{EP}%
\end{equation}
Interpreting the Bronstein inequality $\Delta x\geq L_{\text{Planck}}$ as the
non-existence of intervals below the Planck length and assuming as space-time
$\mathbb{R}\times\mathbb{Q}_{\wp}^{3}$, QM implies that%
\begin{equation}
L_{\text{Planck}}=\wp^{-1}\text{ and the violation of Einstein causality
principle.} \label{NEP}%
\end{equation}
Now, due to the non-existence of topological and algebraic isomorphism between
$\mathbb{R}$ and $\mathbb{Q}_{\wp}$, the conclusions (\ref{EP})-(\ref{NEP})
cannot `mixed'. This fact evokes the old idea of the adelic nature of space,
\cite{V-V-Z}.

\textbf{Conflict of Interests/Competing Interests}

I have no conflicts of interest to disclose.

\end{document}